\documentclass{vldb}

\usepackage{graphicx}
\usepackage{balance}
\usepackage{hyperref}
\usepackage{xspace}
\usepackage{amsmath,amssymb,mathtools}
\usepackage{array,booktabs,multirow}
\usepackage{enumitem}
\usepackage{algorithm}
\usepackage[noend]{algpseudocode}
\usepackage{listings}
\usepackage{subcaption}
\usepackage{microtype}
\usepackage{titlesec}
\usepackage{tikz}
\usetikzlibrary{arrows.meta,positioning}

\newtheorem{theorem}{Theorem}[section]
\newtheorem{lemma}[theorem]{Lemma}
\newtheorem{corollary}[theorem]{Corollary}
\newtheorem{proposition}[theorem]{Proposition}
\newtheorem{definition}[theorem]{Definition}
\newtheorem{assumption}[theorem]{Assumption}

\hypersetup{hidelinks}
\titleformat{\section}{\bfseries\large}{\thesection.}{0.5em}{\MakeUppercase}
\titleformat{\subsection}{\bfseries\normalsize}{\thesubsection}{0.5em}{}
\titleformat{\subsubsection}{\bfseries\normalsize}{\thesubsubsection}{0.5em}{}
\titlespacing*{\section}{0pt}{1.8ex plus .5ex minus .2ex}{0.8ex}
\titlespacing*{\subsection}{0pt}{2.0ex plus .5ex minus .2ex}{0.8ex}
\titlespacing*{\subsubsection}{0pt}{1.2ex plus .3ex minus .2ex}{0.5ex}
\newcommand\nop[1]{}
\newcommand{\TrieGS}{\textsc{TrieGS}\xspace}
\newcommand{\LFNT}{\textsc{LFNT}\xspace}
\newcommand{\OQ}{\mathcal{O}_{Q}}
\newcommand{\RootVec}{\mathsf{RootVec}}
\newcommand{\pred}{\mathsf{pred}}
\newcommand{\mult}{\mathsf{mult}}
\newcommand{\getM}{\mathsf{getM}}
\newcommand{\setM}{\mathsf{setM}}
\newcommand{\scanIndex}{\mathsf{scanIndex}}

\newcommand{\Vars}{\mathsf{Vars}}

\newcommand{\oldv}{\mathsf{old}}
\newcommand{\newv}{\mathsf{new}}

\newcommand{\idx}[1]{I_{#1}}

\tikzset{
  triegs/entity/.style={draw,rounded corners=2pt,minimum width=1.25cm,
    minimum height=0.55cm,align=center,fill=gray!8,font=\scriptsize},
  triegs/event/.style={draw,rounded corners=2pt,minimum width=1.55cm,
    minimum height=0.62cm,align=center,fill=blue!7,font=\scriptsize},
  triegs/view/.style={draw,rounded corners=2pt,minimum width=1.72cm,
    minimum height=0.58cm,align=center,fill=blue!8,font=\scriptsize},
  triegs/proj/.style={draw,dashed,rounded corners=2pt,minimum width=1.72cm,
    minimum height=0.58cm,align=center,fill=orange!10,font=\scriptsize},
  triegs/change/.style={draw,rounded corners=2pt,minimum width=1.78cm,
    minimum height=0.58cm,align=center,fill=green!10,font=\scriptsize},
  triegs/root/.style={draw,very thick,rounded corners=2pt,minimum width=1.9cm,
    minimum height=0.62cm,align=center,fill=green!12,font=\scriptsize},
  triegs/edge/.style={-{Latex[length=2mm]},line width=0.55pt},
  triegs/update/.style={-{Latex[length=2mm]},line width=0.8pt,draw=green!50!black},
  triegs/old/.style={draw,rounded corners=2pt,minimum width=1.42cm,
    minimum height=0.52cm,align=center,fill=gray!10,font=\scriptsize},
  triegs/new/.style={draw,rounded corners=2pt,minimum width=1.42cm,
    minimum height=0.52cm,align=center,fill=green!10,font=\scriptsize}
}

\newcommand{\TrieGSRunningOverview}{%
\begin{tikzpicture}[x=1cm,y=1cm,font=\scriptsize]
  \node[font=\bfseries\scriptsize] at (2.2,2.2) {(a) Query instance at $t_0$};
  \node[triegs/entity] (tweet1) at (2.0,0.75) {\texttt{tweet1}};
  \node[triegs/entity] (max) at (0.45,1.8) {\texttt{Max}};
  \node[triegs/entity] (tweet2) at (4.2,0.75) {\texttt{tweet2}};
  \node[triegs/entity] (eva) at (5.75,1.8) {\texttt{Eva}};
  \node[triegs/entity] (mia) at (4.2,-0.75) {\texttt{Mia}};

  \draw[triegs/edge] (tweet1) -- node[midway,below=2pt,fill=white,inner sep=1pt]
    {\texttt{hasCreator}} (max);
  \draw[triegs/edge] (tweet1) -- node[above=2pt] {\texttt{hasReply} $\times 2$} (tweet2);
  \draw[triegs/edge] (tweet2) -- node[midway,below=2pt,fill=white,inner sep=1pt]
    {\texttt{hasCreator}} (eva);
  \draw[triegs/update] (tweet2) -- node[right=2pt,align=left]
    {\texttt{likedBy}\\$0\!\rightarrow\!1$ at $t_1$} (mia);
  \node[align=center,font=\scriptsize] at (2.55,-1.55)
    {$t_0$: no answer\qquad $t_1$: one weighted answer, count $2$};

  \draw[gray!50] (6.65,-1.85) -- (6.65,2.4);

  \node[font=\bfseries\scriptsize] at (10.15,2.2) {(b) Stream and snapshot evolution};
  \draw[-{Latex[length=2mm]},line width=0.65pt] (7.4,1.35) -- (12.85,1.35);
  \foreach \x/\lab in {7.85/$t_0$,9.7/$t_1$,11.65/$t_2$} {
    \draw (\x,1.26)--(\x,1.44);
    \node[above=1pt] at (\x,1.44) {\lab};
  }
  \node[align=center] at (7.85,0.72) {initial\\count $0$};
  \node[align=center] at (9.7,0.72) {$+\,\texttt{likedBy}$\\count $2$};
  \node[align=center] at (11.65,0.72) {$-\,\texttt{hasReply}$\\count $1$};

  \node[anchor=east,font=\scriptsize] at (7.35,-0.12) {Updater};
  \draw[triegs/edge] (7.55,-0.12) -- node[above=3pt] {build $L_{t_1}$} (9.3,-0.12);
  \node[circle,fill=green!50!black,inner sep=1.6pt] at (9.55,-0.12) {};
  \node[below=4pt] at (9.55,-0.12) {publish $S_{t_1}$};
  \draw[triegs/edge] (9.8,-0.12) -- node[above=3pt] {build $L_{t_2}$} (11.2,-0.12);
  \node[circle,fill=green!50!black,inner sep=1.6pt] at (11.45,-0.12) {};
  \node[below=4pt] at (11.45,-0.12) {publish $S_{t_2}$};

  \node[anchor=east,font=\scriptsize] at (7.35,-1.12) {Enumerator $E_1$};
  \draw[line width=1.3pt,draw=blue!55] (9.55,-1.12) -- (12.45,-1.12);
  \node[below=2pt,align=center] at (11.0,-1.12)
    {pins $S_{t_1}$ and keeps count $2$\\while $t_2$ is built and published};
\end{tikzpicture}%
}

\newcommand{\TrieGSViewEvolution}{%
\begin{tikzpicture}[x=1cm,y=1cm,font=\scriptsize]
  \node[font=\bfseries\scriptsize] at (3.15,3.25) {(a) Populated logical views for $z=\texttt{tweet2}$};
  \node[triegs/root] (v0) at (3.15,2.55) {$V_0(z)$\\$0\!\rightarrow\!2$};
  \node[triegs/proj] (v1p) at (1.25,1.55) {$V_1^p(z)$\\$2$};
  \node[triegs/proj] (v2p) at (5.05,1.55) {$V_2^p(z)$\\$0\!\rightarrow\!1$};
  \node[triegs/view] (v1) at (1.25,0.45) {$V_1(v,z)$\\$2$};
  \node[triegs/change] (v2) at (5.05,0.45) {$V_2(z,y)$\\$0\!\rightarrow\!1$};
  \node[triegs/old] (leftbase) at (1.25,-0.72)
    {\texttt{hasReply}$=2$\\$H^p(v)=1$};
  \node[triegs/change] (rightbase) at (5.05,-0.72)
    {\texttt{hasCreator}$=1$\\$L^p(z):0\!\rightarrow\!1$};

  \draw[triegs/edge] (v1p) -- (v0);
  \draw[triegs/edge] (v2p) -- (v0);
  \draw[triegs/edge] (v1) -- (v1p);
  \draw[triegs/edge] (v2) -- (v2p);
  \draw[triegs/edge] (leftbase) -- (v1);
  \draw[triegs/edge] (rightbase) -- (v2);
  \node[align=center] at (3.15,-1.55)
    {$V_1^p(\texttt{tweet2})=2$, $V_2^p(\texttt{tweet2})=1$,\\
     hence $V_0(\texttt{tweet2})=2\cdot 1=2$.};

  \draw[gray!50] (6.45,-1.8) -- (6.45,3.45);

  \node[font=\bfseries\scriptsize] at (9.55,3.25) {(b) Exact delta propagation at $t_1$};
  \node[triegs/change] (b0) at (8.85,2.55) {\texttt{likedBy(tweet2,Mia)}\\$0\!\rightarrow\!1$, $\delta=+1$};
  \node[triegs/change] (b1) at (8.85,1.62) {$L^p(\texttt{tweet2})$\\$0\!\rightarrow\!1$, $\delta=+1$};
  \node[triegs/change] (b2) at (8.85,0.69) {$V_2(\texttt{tweet2},\texttt{Eva})$\\$0\!\rightarrow\!1$, $\delta=+1$};
  \node[triegs/change] (b3) at (8.85,-0.24) {$V_2^p(\texttt{tweet2})$\\$0\!\rightarrow\!1$, $\delta=+1$};
  \node[triegs/root] (b4) at (8.85,-1.17) {$V_0(\texttt{tweet2})$\\$0\!\rightarrow\!2$, $\delta=+2$};
  \draw[triegs/update] (b0) -- (b1);
  \draw[triegs/update] (b1) -- (b2);
  \draw[triegs/update] (b2) -- (b3);
  \draw[triegs/update] (b3) -- (b4);
  \node[draw,rounded corners=2pt,fill=yellow!10,text width=2.45cm,
    align=left,font=\scriptsize] (why) at (11.35,-0.65)
    {Why $\delta=+2$? The unchanged reply branch contributes two compatible witnesses.};
  \draw[dashed,-{Latex[length=1.8mm]}] (b4.east) -- (why.west);
\end{tikzpicture}%
}

\newcommand{\TrieGSSnapshotPublication}{%
\begin{tikzpicture}[x=1cm,y=1cm,font=\scriptsize]
  \node[draw,very thick,rounded corners=2pt,fill=yellow!13,
    minimum width=3.4cm,minimum height=0.65cm,align=center] (pub) at (6.2,3.2)
    {$\mathsf{PublishedRootVec}$\\one atomic pointer};

  \node[triegs/root,minimum width=2.25cm] (rv1) at (2.75,2.05)
    {$\RootVec_{t_1}$\\published, pinned};
  \node[triegs/root,minimum width=2.25cm] (rv2) at (9.65,2.05)
    {$\RootVec_{t_2}$\\published later};
  \draw[triegs/edge] (pub.south west) -- node[above left] {load by $E_1$} (rv1.north);
  \draw[triegs/update] (pub.south east) -- node[above right] {atomic publish} (rv2.north);

  \node[triegs/old] (v11) at (1.2,0.8) {$V_1@t_1$\\count $2$};
  \node[triegs/old] (v21) at (2.75,0.8) {$V_2@t_1$\\count $1$};
  \node[triegs/old] (a1) at (4.3,0.8) {$\Ans@t_1$\\count $2$};
  \node[triegs/new] (v12) at (8.1,0.8) {$V_1@t_2$\\count $1$};
  \node[triegs/new] (v22) at (9.65,0.8) {$V_2@t_2$\\count $1$};
  \node[triegs/new] (a2) at (11.2,0.8) {$\Ans@t_2$\\count $1$};

  \draw[triegs/edge] (rv1) -- (v11);
  \draw[triegs/edge] (rv1) -- (v21);
  \draw[triegs/edge] (rv1) -- (a1);
  \draw[triegs/edge] (rv2) -- (v12);
  \draw[triegs/edge] (rv2) -- (v22);
  \draw[triegs/edge] (rv2) -- (a2);

  \node[draw,rounded corners=2pt,fill=blue!8,align=center,
    minimum width=3.25cm,minimum height=0.72cm] (enum) at (2.75,-0.55)
    {Enumerator $E_1$ traverses only $t_1$ roots\\and emits the weighted answer with count $2$};
  \node[draw,rounded corners=2pt,fill=green!8,align=center,
    minimum width=3.25cm,minimum height=0.72cm] (upd) at (9.65,-0.55)
    {Update workers build and publish $t_2$\\without modifying nodes visible from $t_1$};
  \draw[triegs/edge] (a1) -- (enum);
  \draw[triegs/update] (a2) -- (upd);

  \node[align=center,font=\scriptsize] at (6.2,-1.55)
    {The enumerator never assembles a snapshot by independently reading live relation roots;\\
     the vector is the query-level consistency boundary.};
\end{tikzpicture}%
}

\newcommand{\TrieGSLFNTPhysical}{%
\begin{tikzpicture}[x=1cm,y=1cm,font=\scriptsize]
  \node[font=\bfseries\scriptsize] at (3.0,2.25)
    {(a) Logical Gma and access orders};
  \node[draw,rounded corners=2pt,fill=blue!8,minimum width=4.9cm,
    minimum height=0.72cm,align=center] (gma) at (3.0,1.35)
    {$V_1(\mathsf{tweet1},\mathsf{tweet2})=2$};
  \node[draw,rounded corners=2pt,fill=gray!8,minimum width=5.1cm,
    minimum height=0.72cm,align=center] (ord1) at (3.0,0.25)
    {update order: $[V_1\mid\mathsf{tweet1}\mid\mathsf{tweet2}]\rightarrow 2$};
  \node[draw,rounded corners=2pt,fill=orange!9,minimum width=5.1cm,
    minimum height=0.72cm,align=center] (ord2) at (3.0,-0.85)
    {enumeration order: $[V_1\mid\mathsf{tweet2}\mid\mathsf{tweet1}]\rightarrow 2$};
  \draw[triegs/edge] (gma) -- (ord1);
  \draw[triegs/edge] (gma) -- (ord2);

  \draw[gray!50] (6.25,-1.55) -- (6.25,2.5);

  \node[font=\bfseries\scriptsize] at (9.45,2.25)
    {(b) Copy-on-write path replacement};
  \node[triegs/root,minimum width=1.8cm] (r1) at (7.45,1.45)
    {$r^{t_1}_{V_1}$};
  \node[triegs/root,minimum width=1.8cm] (r2) at (11.45,1.45)
    {$r^{t_2}_{V_1}$};
  \node[triegs/old] (n1) at (7.45,0.35)
    {old path node};
  \node[triegs/new] (n2) at (11.45,0.35)
    {replacement path node};
  \node[draw,rounded corners=2pt,fill=yellow!10,minimum width=1.65cm,
    minimum height=0.52cm,align=center,font=\scriptsize] (shared) at (9.45,-0.05)
    {shared unaffected\\subtree};
  \node[triegs/old] (p1) at (7.45,-0.85)
    {old payload\\count $2$};
  \node[triegs/new] (p2) at (11.45,-0.85)
    {new payload\\count $1$};

  \draw[triegs/edge] (r1) -- (n1);
  \draw[triegs/update] (r2) -- (n2);
  \draw[triegs/edge] (n1) -- (p1);
  \draw[triegs/update] (n2) -- (p2);
  \draw[dashed] (n1) -- (shared);
  \draw[dashed] (n2) -- (shared);
  \node[align=center,text width=5.0cm] at (9.45,-1.8)
    {Only nodes on the changed path are replaced; unaffected subtries remain
     shared. The $t_1$ path and payload stay immutable until all readers unpin
     $t_1$.};
\end{tikzpicture}%
}

\MakeRobust{\Call}

\vldbTitle{CQELS-TrieGS Report: Snapshot-Consistent Constant-Delay Enumeration for Streaming Graph Queries}
\vldbAuthors{Danh Le-Phuoc}
\vldbDOI{XX.XX/XXX.XX}
\vldbVolume{XX}
\vldbNumber{XXX}
\vldbYear{2027}

\begin{document}

\title{CQELS-TrieGS Report: Snapshot-Consistent Constant-Delay Enumeration for Streaming Graph Queries}

\numberofauthors{1}
\author{\alignauthor Danh Le-Phuoc \\
Technical University Berlin}
\maketitle

\begin{abstract}
Continuous graph-query engines must process edge updates while making current
query results available to concurrent consumers. Existing dynamic
constant-delay enumeration methods provide strong per-answer delay guarantees,
but are commonly formulated as a maintenance-then-enumeration process.
Conversely, multicore graph-stream engines emphasize update throughput and
match discovery without a query-level snapshot guarantee for concurrent
full-result enumeration.

We present \TrieGS, a shared-memory engine that maintains a query-specific CDE
state over a streaming graph. Logically, \TrieGS uses the classical
free-connex witness-subtree enumerator; dynamically, it maintains multiplicity
payloads using exact signed deltas; physically, RDF terms are dictionary-encoded
and the required access structures are realized as versioned \LFNT relations.
The new systems problem is not relation-level snapshotting alone: an enumeration
job must observe one consistent state across all interdependent base relations,
projections, views, and indexes. \TrieGS therefore publishes an atomic root
vector only after an update epoch has fully propagated. Enumeration threads pin
one published root vector and traverse immutable versions while later updates
continue.

For fixed full acyclic conjunctive graph queries under multiset semantics, we
prove that every enumeration job observes a graph state corresponding to one
published logical time and enumerates its nonzero answer--multiplicity pairs
with $\OQ(1)$ delay. Projected free-connex queries are supported by explicitly
maintaining a count-aware answer relation; its output-size-dependent update and
space costs are included in the bounds. Sliding windows are reduced to arrivals
and expiration deletions. Practical extensions, including SWAG aggregate
summaries, compressed access structures, filters, aggregation, multi-query
sharing, and bounded-width cyclic bags, use the same substrate but have separate
or parameterized guarantees.
\end{abstract}

\section{Introduction}
\label{sec:introduction}

Graph streams arise when social interactions, communication events, mobility
records, or knowledge-graph assertions arrive continuously. Systems such as
GraphFlow, TurboFlux, and later continuous subgraph matching engines maintain
matches as the graph changes~\cite{DBLP:conf/sigmod/KankanamgeSMCS17,
DBLP:conf/sigmod/KimSHLHCSJ18,DBLP:journals/pvldb/SunSHL22}. Their performance
is highly workload dependent: a common-framework study found that candidate
search, index maintenance, and update selectivity dominate in different
regimes, and no single method wins universally~\cite{DBLP:journals/pvldb/SunSLH22}.

A complementary line of work studies dynamic query evaluation through
constant-delay enumeration (CDE). For suitable conjunctive-query classes, a
maintained representation can produce the first answer and each subsequent
answer with delay independent of the data size~\cite{bagan2007acyclic,
berkholz2017answering,DBLP:conf/sigmod/IdrisUV17}. Dynamic Yannakakis and CROWN
show how compact views or join-free change propagation can maintain such
representations under updates~\cite{DBLP:conf/sigmod/IdrisUV17,WangHDY23}.
These results establish the query-theoretic foundation of this paper.
claim acyclic CDE itself as new.

\paragraph*{Running example.}
Figure~\ref{fig:running-overview} introduces the example used throughout the
report. The query reports a message $v$, its creator $x$, a replied-to message
$z$, the creator $y$ of $z$, and a user $w$ who liked $z$. At $t_0$, two active
reply events with distinct event identifiers fall inside the same window and
represent the assertion
$\mathsf{hasReply}(\mathsf{tweet1},\mathsf{tweet2})$ twice, but no matching
$\mathsf{likedBy}$ event exists. Inserting
$\mathsf{likedBy}(\mathsf{tweet2},\mathsf{Mia})$ at $t_1$ therefore creates one
distinct answer
\[
 (\mathsf{Max},\mathsf{Eva},\mathsf{tweet1},\mathsf{tweet2},\mathsf{Mia})
\]
with multiplicity two. At $t_2$, one reply event expires from the active
window, reducing that multiplicity from two to one.

\begin{figure*}[t]
  \centering
  \resizebox{\textwidth}{!}{\TrieGSRunningOverview}
  \caption{Paper-wide running example in multigraph mode. The insertion at
  $t_1$ creates one weighted answer with multiplicity two. Enumerator $E_1$
  pins the published state $S_{t_1}$, so it continues to observe count two
  while update workers construct and publish the state for $t_2$.}
  \label{fig:running-overview}
\end{figure*}

The systems problem considered here begins when maintenance and enumeration run
on different shared-memory threads. In Figure~\ref{fig:running-overview}, the
update at $t_2$ may modify the reply branch, its parental projection, the root
view, and the answer index while $E_1$ is still traversing the result for
$t_1$. Independently snapshotting these relations can yield a fractured
state---for example, the root after $t_2$ together with a child and answer index
from $t_1$. Such a combination belongs to no graph snapshot.

\paragraph*{Why shared-memory scale-up?}
Microarchitectural studies indicate that general-purpose stream-processing
engines can leave substantial single-node performance headroom. In a controlled
queue experiment, Zeuch et al.~\cite{DBLP:journals/pvldb/ZeuchBRMKLRTM19} report approximately 3\,GB/s for the tested Java
queue and up to 7\,GB/s for the strongest tested lock-free C++ queues on a
machine with 28\,GB/s measured memory bandwidth; they identify queue-mediated
operator communication, synchronization, and data representation as potential
scale-up bottlenecks~\cite{analyzing-vldb-spe-multicores}. These measurements do
not imply that LFNT is faster, but they motivate testing whether a
query-specialized representation reduces synchronization, allocation, and data
movement. CROWN demonstrates complementary distributed parallelism: it reports
approximately linear speedup below 16 workers, followed by diminishing gains as
HyperCube scaling becomes sublinear and framework overhead dominates the
shortened useful work~\cite{WangHDY23}.

\paragraph*{Why enumerate a stable full result?}
A representative use case is periodic auditing or export of all currently
matching graph patterns. Such a scan may take longer than one update interval,
while pausing ingestion is unacceptable. A delta stream alone can serve this
workload only if the consumer maintains and versions a second complete answer
state. TrieGS instead lets the auditor pin one published query snapshot while
later updates continue; Section~\ref{sec:evaluation} measures when this service
is preferable to downstream delta consumption or aggregate-only SWAG summaries
rather than assuming it is universally superior.

\TrieGS addresses this problem by combining standard CDE representations with a
query-level publication protocol. Logically, a fixed query is compiled into a
free-connex decomposition and a witness-subtree enumerator of the kind used in
classical Yannakakis/CDE proofs. Dynamically, tuples carry multiplicity payloads
and updates maintain the corresponding factorized views using exact signed
deltas, as in factorized IVM. Physically, RDF terms and properties are first
encoded as stable integer identifiers, and each maintained view and access index
is implemented by a versioned \LFNT relation over those integer keys. After every
logical update epoch has propagated through all dependent views and indexes,
\TrieGS atomically publishes
\[
  \RootVec_t=\langle t,r^t_{R_1},\ldots,r^t_{R_m}\rangle,
\]
where the roots cover the whole compiled access representation. An enumerator
loads this vector once, pins it, and reads only versions reachable from it.
Later updates can then proceed without changing the enumerator's logical state.

\paragraph*{Proof architecture.}
We separate the proof into five layers so that each part matches an established
technique. First, a classical free-connex CDE argument constructs an abstract
preprocessed state with constant-delay cursor enumeration over a witness
subtree. Second, a multiset maintenance argument treats local tuples as
$\mathbb N$-payloads, uses signed deltas internally, and proves that each epoch
restores the view invariant. Third, an access-structure accounting layer states
which primary, secondary, and tertiary indexes are maintained and which
space--delay point the core theorem chooses. Fourth, dictionary-coded LFNT and
root-vector refinement prove that a published physical snapshot represents one
abstract CDE state. Fifth, morselized scheduling is proved only as a refinement
of the serial epoch semantics; it is not part of the query-class theorem.

\paragraph*{Theorem core and system scope.}
Our strongest no-output-materialization result is for fixed \emph{full}
acyclic conjunctive graph queries under multiset semantics, where every query
variable is distinguished. This class includes full-match paths, stars, trees,
combs, and neighborhood patterns. For projected free-connex queries, the
current construction explicitly materializes a count-aware answer relation;
its output-size-dependent update and space costs are part of the guarantee. We
do not claim a compact duplicate-free weighted iterator for arbitrary
projection. Filters, aggregate payloads, SWAG summaries, multi-query sharing,
compressed access structures, and cyclic bag modules are practical extensions
with separate or parameterized guarantees.

\paragraph*{Positioning.}
\TrieGS is neither a new one-shot parallel join nor a general-purpose lock-free
graph store. Parallel and worst-case-optimal join systems optimize a fresh
query execution~\cite{DBLP:conf/sigmod/LeisBK014,
DBLP:journals/pacmmod/WangWS23}; \TrieGS maintains persistent query-specific
state and overlaps view maintenance with snapshot enumeration. CROWN is the
closest query-processing predecessor: it gives join-free change propagation,
full-result and delta CDE, and a Flink DataStream implementation. \TrieGS
instead studies a shared-memory state in which an updater and a long-running
enumerator access versioned relations concurrently; the root-vector protocol
is the consistency boundary that CROWN does not need to define in its operator
execution model. We therefore evaluate a CQELS CROWN-style reimplementation and
refer to it as a faithful CROWN baseline only after it passes the validation
gate in Section~\ref{sec:evaluation}.

\paragraph*{Evaluation questions.}
The paper tests four core questions. First, does root-vector publication prevent
fractured query snapshots under skewed updates, deletions, windows, self-joins,
and long readers? Second, does \LFNT improve the relevant
latency--memory--freshness frontier over simpler integer-key and object-key
versioned maps? Third, is concurrent full-result enumeration a real workload
need, or can a delta-only or SWAG aggregate consumer provide equivalent
application behavior at lower cost? Fourth, does mixed update/enumeration
execution scale with physical cores while preserving the same tail-latency,
freshness, and memory targets? Section~\ref{sec:evaluation} turns these into
pre-registered experiments and rejection criteria rather than using raw
throughput as a proxy for all four.

\paragraph*{Contributions.}
This report makes the following contributions.
\begin{enumerate}[leftmargin=*]
  \item We propose time-indexed streaming-graph and multiset query semantics,
  including a precise enumeration contract, dictionary-encoding refinement, and
  a reduction of sliding windows to insertion and expiration updates.

  \item We present an abstract CDE layer based on free-connex decompositions,
  witness-subtree enumeration, and local fetch-first/fetch-next access
  structures, making the constant-delay part of the proof standard and
  implementation-independent.

  \item We propose a dynamic multiset maintenance layer with count-aware payloads,
  exact delta-view support enumeration, and $\OQ(A_t)$ affected-state update
  cost. Projected free-connex queries use an explicit $\mathsf{Ans}_Q$ relation.

  \item We make the access-structure cost explicit: core \TrieGS chooses the
  fully indexed constant-delay point, while compressed space--delay variants,
  integer-layout choices, and SWAG aggregate summaries are separated from the
  main theorem.

  \item We identify exactly what is inherited from C-trie/cache-trie work and
  add the missing query-level mechanism: immutable multiplicity payloads,
  epoch-ordered root-vector publication, and snapshot-safe reclamation.

  \item We define an evaluation methodology and a CQELS-based implementation
  plan that compare \LFNT with a CROWN-style baseline subject to an explicit
  validation gate, object-key and integer-key versioned indexes, SWAG aggregate
  baselines, and delta-only maintenance while reporting tail latency, staleness,
  retained memory, reclamation debt, and marginal cost per query in addition to
  throughput.
\end{enumerate}

\section{Semiring Stream Semantics}
\label{sec:preliminaries}

This section fixes the logical semantics that the later CDE, maintenance, and
snapshot-refinement layers implement. We use the counting semiring
$(\mathbb N,+,\cdot,0,1)$ for published query states and signed integers only
for transient maintenance deltas.

\subsection{Labeled Multigraphs and Logical Time}

\begin{definition}[Labeled multigraph]
Let $\mathcal{L}$ be a finite set of edge labels. A labeled multigraph is
$G=(N,E,\rho,\lambda_E)$, where $N$ is a finite node set, $E$ is a finite edge
set, $\rho:E\rightarrow N\times N$ maps an edge to its endpoints, and
$\lambda_E:E\rightarrow\mathcal{L}$ maps an edge to its label.
\end{definition}

For $\ell\in\mathcal{L}$ and $a,b\in N$, the multiplicity of the assertion
$\ell(a,b)$ is
\[
  \mult_G(\ell(a,b))
  =|\{e\in E\mid \rho(e)=(a,b),\ \lambda_E(e)=\ell\}|.
\]
Thus parallel edges are represented by nonnegative integer multiplicities.
Equivalently, each base relation is a finite-support map
\[
  R_G:\mathrm{Dom}(R)\rightarrow\mathbb N.
\]

Let $\mathbb{T}$ be a totally ordered set of logical times. A graph update
stream is
\[
  \mathcal{U}=\langle (t_1,U_{t_1}),(t_2,U_{t_2}),\ldots\rangle,
  \qquad t_1\prec t_2\prec\cdots,
\]
where $U_t$ is a finite batch of updates. Each update is
$(\sigma,\ell(a,b))$ with $\sigma\in\{+1,-1\}$. Let $G_t$ denote the graph
after all batches through $t$ have been applied. We assume valid deletions, or
equivalently reject deletions that would make an assertion count negative.

For any maintained relation $R$, its signed delta at time $t$ is
\[
  \delta_tR(\bar a)
  =\mult_{R,G_t}(\bar a)-\mult_{R,G_{\pred(t)}}(\bar a).
\]
Stored published multiplicities are in $\mathbb N$, while internal deltas are in
$\mathbb Z$.

\paragraph*{Event time and processing time.}
The logical time used by the proof orders published query states. It may be an
event-time watermark or a processing-time epoch. Event-time systems must not
publish $t$ until their lateness policy guarantees that the state for $t$ will
not be modified retroactively.

\subsection{Conjunctive Queries as Counting Maps}

A conjunctive graph query has the form
\[
  Q:q(\bar x)\leftarrow \alpha_1,\ldots,\alpha_k,
\]
where every $\alpha_i$ is a labeled edge atom and every distinguished variable
in $\bar x$ appears in the body. A valuation $\theta$ maps variables to graph
nodes and fixes constants.

At logical time $t$, the weight of a valuation is
\[
  \omega_t(\theta)=\prod_{i=1}^{k}\mult_{G_t}(\theta(\alpha_i)).
\]
For a head tuple $\bar a$, the answer multiplicity is
\begin{equation}
  \mu_t^Q(\bar a)
  =\sum_{\theta:\theta(\bar x)=\bar a}\omega_t(\theta).
  \label{eq:bag-query-semantics}
\end{equation}
Hence joins multiply multiplicities, while projection or marginalization sums
over witnesses. Bag semantics and multiset semantics are used interchangeably.

\paragraph*{Running example continued.}
At $t_1$ in Figure~\ref{fig:running-overview}, the displayed tuple has one node
valuation, but that valuation has weight
\[
  1\cdot 2\cdot 1\cdot 1=2.
\]
The factor two is the active multiplicity of
$\mathsf{hasReply}(\mathsf{tweet1},\mathsf{tweet2})$. When one reply event
expires at $t_2$, the same valuation has weight
$1\cdot 1\cdot 1\cdot 1=1$. This example is intentionally in multigraph mode:
it makes visible why a base delta of one may induce a larger derived delta.

\paragraph*{Enumeration contract.}
Internally, \TrieGS enumerates the support of the answer together with exact
counts: each pair $(\bar a,\mu_t^Q(\bar a))$ with positive multiplicity is
emitted once. Delay is measured between consecutive tuple--multiplicity pairs.
An application can expand a pair of multiplicity $d$ into $d$ rows when
required.

\paragraph*{RDF/SPARQL-compatible mode.}
SPARQL solution mappings use multiset semantics, but an RDF graph is a set of
triples. Let $c_t(\tau)$ be the number of active stream events representing
triple $\tau$. In RDF-compatible mode, query evaluation uses Boolean support
$\mathbf{1}[c_t(\tau)>0]$, while the window manager retains $c_t(\tau)$ for
correct expiration. Multigraph mode instead uses $c_t(\tau)$ as the base
weight. The core theorem corresponds to the positive basic-graph-pattern
fragment under simple entailment; full SPARQL operators require separate
semantics.

\subsection{Dictionary Encoding}
\label{sec:dictionary-encoding}

RDF systems usually evaluate over dictionary identifiers rather than over RDF
term objects. TrieGS follows this convention in the physical layer. Let
\[
  \iota:N\cup\mathcal L\rightarrow\mathbb N
\]
be an injective dictionary for all terms and labels reachable from a published
snapshot. A base assertion $\ell(a,b)$ is represented by the integer tuple
$(\iota(\ell),\iota(a),\iota(b))$; compiled views and indexes use fixed relation
or view identifiers together with integer values for the variables in their
chosen order.

\begin{lemma}[Dictionary refinement]
\label{lem:dictionary-refinement}
Replacing every term and label in the active graph, query constants, view keys,
and answer bindings by its dictionary identifier preserves equality, joins,
projections, multiplicities, and answer weights.
\end{lemma}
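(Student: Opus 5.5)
The argument is a transport-of-structure proof: since $\iota$ is injective on the active domain $D=N\cup\mathcal L$ (all terms reachable from a published snapshot, including query constants), it is a bijection $\iota:D\to\iota(D)\subseteq\mathbb N$. We lift $\iota$ componentwise to tuples, valuations, and finite-support maps. For a relation $R:\mathrm{Dom}(R)\to\mathbb N$ we define $R^\iota(\iota(\bar a))=R(\bar a)$ and set $R^\iota$ to zero off $\iota(\mathrm{Dom}(R))$. The statement then reduces to showing that every operation in the semantics of Section~\ref{sec:preliminaries} commutes with this relabeling.

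\textbf{Step 1: equality.} Injectivity gives $a=b \iff \iota(a)=\iota(b)$ for all $a,b\in D$, and hence the same for tuples. Consequently, joins (which test equality of shared variables) and constant selections are preserved. The encoded query $Q^\iota$ replaces each constant $c$ by $\iota(c)$, so a base assertion $\ell(a,b)$ matches an atom exactly when $(\iota(\ell),\iota(a),\iota(b))$ matches its encoded atom.

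\textbf{Step 2: multiplicities and valuations.} By definition $\mult_{G^\iota_t}(\iota(\ell)(\iota a,\iota b))=\mult_{G_t}(\ell(a,b))$: the edge set is unchanged, and only $\rho,\lambda_E$ are composed with $\iota$. Injectivity ensures that no two distinct assertions merge, so counts are not summed together.

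Valuations $\theta$ of $Q$ over $D$ correspond bijectively to valuations $\theta^\iota=\iota\circ\theta$ of $Q^\iota$ over $\iota(D)$. The key point is that we must restrict to valuations whose weight can be nonzero. Any valuation sending a variable outside $\iota(D)$ gives some atom multiplicity $0$, so it contributes nothing to the sums. Given the bijection, $\omega^\iota_t(\theta^\iota)=\prod_i\mult(\theta^\iota(\alpha_i))=\omega_t(\theta)$ factorwise.

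\textbf{Step 3: answer weights and projections.} Summing the weights over the fibers $\theta(\bar x)=\bar a$ versus $\theta^\iota(\bar x)=\iota(\bar a)$ gives $\mu^{Q^\iota}_t(\iota\bar a)=\mu^Q_t(\bar a)$ via equation~\eqref{eq:bag-query-semantics}. The fibers correspond under the bijection precisely because $\iota$ is injective. The same fiber argument shows that projection/marginalization of any view commutes with $\iota$.

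\textbf{Step 4: derived views and deltas.} Every derived view is built from base relations by join and marginalization, so induction on the view definition gives $V^\iota=V\circ\iota^{-1}$. Deltas are differences of such maps, so they are preserved as well. Fixed relation or view identifiers are tags outside this argument and do not interact with $\iota$.

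The main obstacle I expect is the bookkeeping on the domain. Injectivity is only assumed on terms reachable from a published snapshot, so I must check that every value occurring in any support, key, constant, or answer lies in $D$. Doing so lets the encoded semantics avoid inventing spurious matches on integers outside $\iota(D)$. I handle this via the zero-weight argument in Step 2.
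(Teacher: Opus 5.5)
Your proposal is correct and follows the same route as the paper: injectivity of $\iota$ makes equality of terms coincide with equality of identifiers. Hence valuations, joined atoms, projection witness classes, and multiplicities carry over unchanged, and Equation~\ref{eq:bag-query-semantics} is preserved. Your version spells out bookkeeping that the paper leaves implicit: the valuation bijection onto $\iota(D)$, the zero-weight argument for identifiers outside the image, the fiber correspondence, and the induction over derived views and deltas.
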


\begin{proof}
Conjunctive graph queries use equality of labels, constants, and shared
variables. Since $\iota$ is injective, two terms are equal iff their identifiers
are equal. Therefore the set of valuations, their joined atoms, and every
projection witness class are unchanged by encoding. Payload arithmetic is
performed on the same multiplicities, so Equation~\ref{eq:bag-query-semantics}
is preserved.
\end{proof}

The theorem needs only injectivity. Hardware locality requires more: the
assigned identifiers may or may not correlate with join access patterns. The
implementation therefore treats stable integer encoding as the correctness
layer and evaluates packed layouts, local dense remapping, and prefix ownership
as physical optimizations. Identifier reuse is unsafe while a published snapshot
can still reference the old term; the first implementation should use monotonic
or epoch-safe identifiers.

\subsection{Sliding Windows}

For window width $W$, let $G_t^W$ contain exactly the edge events whose event
timestamps lie in $(t-W,t]$. The window manager translates an event
$(\mathsf{id},\ell(a,b),\tau)$ into an insertion at $\tau$ and an expiration
deletion at $\tau+W$. Multiple identical events retain distinct identifiers so
that one expiration decrements the assertion count rather than removing all
copies. The transition at $t_2$ in Figure~\ref{fig:running-overview} is exactly
such an expiration. The core theorem applies to $G_t^W$ because window movement
is reduced to ordinary multiset updates.

\subsection{Acyclicity, Free-Connexity, and Cost Convention}

The query hypergraph $H_Q$ has one vertex per variable and one hyperedge
$\Vars(\alpha_i)$ per atom. A conjunctive query is acyclic if $H_Q$ has a join
tree. It is \emph{free-connex} if adding one hyperedge containing all
distinguished variables preserves acyclicity~\cite{bagan2007acyclic}. Every full
acyclic query is free-connex because the added head hyperedge contains all body
hyperedges. Projected free-connex queries are treated separately because they
require witness grouping.

The query $Q$ and its compiled plan are fixed. We write $\OQ(\cdot)$ when the
hidden constant may depend on the atoms, variables, views, projections,
indexes, access-structure arity, and join-tree height of $Q$, but never on graph
size, stream length, materialized-state size, or output cardinality. Let
$T_{\mathsf{dict}}$ denote the cost of an exact relation-index dictionary
operation, and $T_{\mathsf{fetch}}$ the cost of one local extension-iterator
step. The cache-trie assumptions give $T_{\mathsf{dict}}=\OQ(1)$ for basic
exact-key operations; the fully indexed TrieGS access contract supplies
$T_{\mathsf{fetch}}=\OQ(1)$. Without these assumptions, the update and delay
bounds inherit the selected backend's $T_{\mathsf{dict}}$ and $T_{\mathsf{fetch}}$.

\subsection{Published Snapshot Semantics}

Let $L_t$ be the abstract logical view state after all effects of epoch $t$ have
completed, let $C_t$ be the compiled access representation of that view state,
and let $S_t$ be a physical versioned snapshot that realizes $C_t$. An
enumeration job $e$ is snapshot-consistent if there exists a published time
$\tau(e)$ such that every relation and index read by $e$ belongs to
$S_{\tau(e)}$ and therefore represents $C_{\tau(e)}$ and $L_{\tau(e)}$. The job
then enumerates $Q(G_{\tau(e)})$. It need not observe the newest wall-clock
update, but it must never mix epochs.

\section{Abstract CDE State and View Invariant}
\label{sec:logical-join-trie}

This section defines the abstract state that \TrieGS maintains. The first part
follows the classical free-connex CDE proof: a decomposition gives a connected
witness subtree, local interface iterators support fetch-first/fetch-next, and
a cursor enumerator outputs answers with constant delay. The second part adds
counting payloads and view equations that dynamic maintenance will preserve.
Concurrency and LFNT appear only after this abstract state is fixed.

\subsection{Witness-Subtree CDE State}

Let $Q$ be a fixed acyclic conjunctive query, and let $T_Q$ be a rooted join
or generalized hypertree decomposition of width one. Each node $u$ has a bag
$\chi(u)$ and is assigned one or more guard atoms whose variables are contained
in $\chi(u)$. For every variable, the nodes whose bags contain that variable
form a connected subtree.

For a projected free-connex query, the decomposition is chosen after adding the
head hyperedge. A connected witness subtree $U_Q\subseteq T_Q$ covers all
distinguished variables. For a full acyclic query, every variable is
distinguished and the whole rooted tree can be used as the witness subtree. The
abstract CDE state contains, for each witness node $u$, a semijoin-reduced local
relation $C_u$ and an access structure by the parent interface
\[
  I_u=\chi(u)\cap\chi(\mathsf{parent}(u)).
\]
For the root, $I_u=\emptyset$.

\begin{definition}[Local access contract]
For each witness node $u$, the access structure supports
\begin{align*}
  \mathsf{fetchFirst}_u(\bar b) &\quad\text{for the first tuple in } C_u
     \text{ matching interface binding }\bar b,\\
  \mathsf{fetchNext}_u(\bar b,\bar c) &\quad\text{for the next such tuple after }\bar c.
\end{align*}
Both operations skip dead tuples, return each compatible tuple once, and cost
$\OQ(1)$ in the fully indexed core construction.
\end{definition}

The abstract enumerator stores one cursor per witness node. It initializes the
root cursor with $\mathsf{fetchFirst}$, recursively initializes child cursors
from their parent-interface bindings, emits a tuple when all witness cursors are
valid, and advances the last cursor that still has a next value, resetting its
descendants. This is the standard constant-delay cursor argument for
free-connex acyclic CQs.

\begin{theorem}[Abstract CDE enumeration]
\label{thm:abstract-cde-enumeration}
For a fixed full acyclic query, a semijoin-reduced witness-subtree state with
local access structures satisfying the contract above enumerates the support of
$Q(G)$ with $\OQ(1)$ delay after preprocessing. For a projected free-connex
query, the same statement holds when the access state is a duplicate-free
support representation of the projected head tuples.
\end{theorem}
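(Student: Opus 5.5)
I would prove Theorem~\ref{thm:abstract-cde-enumeration} by the standard cursor argument for free-connex acyclic queries, in three steps: correctness of the output set, absence of duplicates, and a constant bound on the work between consecutive outputs.

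\textbf{Correctness.} For a full acyclic query, take the witness subtree $U_Q=T_Q$. I claim that the tuples emitted are exactly the consistent assignments $(c_u)_{u\in U_Q}$ with $c_u\in C_u$ such that each $c_u$ agrees with its parent tuple on $I_u$. The key observation is that the connectedness property of $T_Q$ makes agreement on parent interfaces equivalent to global agreement on every shared variable. Such consistent assignments correspond bijectively to valuations $\theta$ with $\omega(\theta)>0$. Since every variable is distinguished, they also correspond bijectively to support tuples of $Q(G)$.

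\textbf{Duplicates.} Because $\mathsf{fetchFirst}/\mathsf{fetchNext}$ return each compatible tuple exactly once, the cursor traversal is a lexicographic enumeration of this tree-shaped product. Hence each support tuple is emitted once.

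\textbf{No dead ends.} Semijoin reduction is used here, and I expect this to be the main obstacle. I would prove, by induction bottom-up over $T_Q$, that after full reduction every tuple $c_u\in C_u$ extends to a consistent assignment of the whole subtree rooted at $u$. It then follows that for every child $v$, $\mathsf{fetchFirst}_v$ on the binding $c_u|_{I_v}$ is nonempty. The subtlety is that reduction must be both bottom-up and top-down (full reduction), so that ``dead'' tuples are either absent or skipped by the contract at $\OQ(1)$ cost. I would state this explicitly as the meaning of ``semijoin-reduced''.

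\textbf{Delay.} Let $m=|U_Q|=\OQ(1)$. After an output, the enumerator scans at most $m$ cursors from the last position to find one with a next value, calling $\mathsf{fetchNext}$ at most once per cursor. It then resets each descendant with one $\mathsf{fetchFirst}$, each of which succeeds by the no-dead-end claim. This gives at most $2m$ access calls, each $\OQ(1)$, so the delay is $\OQ(1)$. Termination is detected with the same bound when the root cursor is exhausted. The first answer costs $m$ $\mathsf{fetchFirst}$ calls after preprocessing. If the root relation is empty, emptiness is detected in $\OQ(1)$.

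\textbf{Projected free-connex case.} Here the access state is assumed to be a duplicate-free support representation of the projected head tuples, arranged over the head-restricted witness subtree. The same argument applies verbatim with the head variables in place of all variables. Distinct consistent assignments of the witness subtree give distinct head tuples by the duplicate-freeness hypothesis, so no grouping over non-head witnesses is needed at enumeration time.
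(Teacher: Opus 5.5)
Your proposal is correct and follows essentially the same route as the paper's proof sketch. Semijoin reduction removes dead ends, the running-intersection property turns agreement on parent interfaces into global consistency, each output is one tuple choice per witness node visited once by the cursor discipline, and the witness-subtree size bounds the fetch operations between outputs; your version is only more explicit (the bottom-up no-dead-end induction and the $2m$ call count), and in a branching tree the cursors reset after an advance should be all later cursors in the traversal order rather than only tree descendants, which leaves the bound unchanged.
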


\begin{proof}[Proof sketch]
Semijoin reduction removes every local tuple that cannot be extended to a full
homomorphism. The witness subtree covers the distinguished variables and the
running-intersection property ensures that a top-down cursor assignment is
consistent on shared variables. Every output corresponds to exactly one tuple
choice at each witness node, and the cursor discipline visits each such choice
once. The number of witness nodes and local fetch operations between two
outputs depends only on the fixed query, giving $\OQ(1)$ delay.
\end{proof}

\subsection{Counting Payloads and View Recurrence}

TrieGS maintains a counting version of the abstract CDE state. For every node
$u$, $V_u(\chi(u))$ stores the multiset payload of the subquery rooted at $u$.
If $c$ is a child of $u$, its interface with the parent is
\[
  I_{c,u}=\chi(c)\cap\chi(u).
\]
The parental projection stores a witness count, not only a Boolean semijoin
flag:
\begin{equation}
  \mu_t^{P_{c\rightarrow u}}(\bar b)
  =\sum_{\bar a:\bar a|_{I_{c,u}}=\bar b}
    \mu_t^{V_c}(\bar a).
  \label{eq:parent-projection}
\end{equation}
The viability predicate is derived as
$\mu_t^{P_{c\rightarrow u}}(\bar b)>0$. Retaining the count is necessary for
bag semantics and deletions: a change from three witnesses to two does not
invalidate a branch, whereas a change from one to zero does.

A general normalized view rule has the form
\[
  V(\bar x)\leftarrow B_1(\bar y_1),\ldots,B_k(\bar y_k).
\]
Its exact logical multiplicity is
\begin{equation}
  \mu_t^V(\bar a)
  =\sum_{\theta:\theta(\bar x)=\bar a}
    \prod_{j=1}^{k}\mu_t^{B_j}(\theta(\bar y_j)).
  \label{eq:view-sum-product}
\end{equation}
This is the counting-semiring payload version of the abstract local CDE state.

\paragraph*{Guard-normal compiled rules.}
The compiler represents every maintained acyclic rule in the equivalent form
\begin{equation}
 r:\quad V(\bar X)\leftarrow
 G(\bar X),P_1(\bar I_1),\ldots,P_m(\bar I_m),
 \qquad \bar I_i\subseteq\bar X,
 \label{eq:guard-normal-rule}
\end{equation}
where $G$ is the \emph{guard} containing all head variables and every $P_i$ is
a count-aware projection whose eliminated variables have already been summed.
If several local atoms are needed to cover $\bar X$, their local product is
materialized as $G$. Equation~\ref{eq:view-sum-product} then reduces for a
grounded head tuple $\bar a$ to
\begin{equation}
  \mu_t^V(\bar a)
  =\mu_t^G(\bar a)
   \prod_{i=1}^{m}\mu_t^{P_i}(\bar a|_{\bar I_i}).
  \label{eq:guard-normal-product}
\end{equation}
This form makes the dynamic affected-binding construction in
Section~\ref{sec:maintenance-enumeration} explicit rather than treating it as
an oracle.

For each interface $\bar I_i$, the plan contains an update/access path
\begin{equation}
  \idx{G,\bar I_i}(\bar b)
  =\{\bar a\mid \mu_t^G(\bar a)>0,
                    \bar a|_{\bar I_i}=\bar b\}.
  \label{eq:guard-interface-index}
\end{equation}
In the access-structure terminology used below, exact payload lookup is a
primary index, top-down enumeration uses secondary indexes, and
Equation~\ref{eq:guard-interface-index} is also the tertiary/update index that
finds parent bindings affected by a changed projection.

\subsection{Generalized Multiset Assertions}

\begin{definition}[Generalized multiset assertion]
For a base relation, view, projection, answer relation, or auxiliary index
$R(\bar x)$, a generalized multiset assertion (Gma) is
$R(\bar a)=d$, where $d\in\mathbb{N}$ is the tuple multiplicity.
\end{definition}

Missing Gmas have multiplicity zero. The logical operations
$\getM(R(\bar a))$ and $\setM(R(\bar a),d)$ read and write this value.
Zero-valued physical tombstones may remain until reclamation, but logical
iterators skip them.

\subsection{Running Example}

The running query is
\begin{lstlisting}[caption={Running full acyclic query.},label={lst:running-query}]
SELECT ?x ?y ?v ?z ?w
WHERE {
  ?v :hasCreator ?x .
  ?z :hasCreator ?y .
  ?v :hasReply   ?z .
  ?z :likedBy    ?w .
}
\end{lstlisting}
A normalized program is
\begin{align*}
H^p(v)   &\leftarrow \mathsf{hasCreator}_v(v,x),\\
L^p(z)   &\leftarrow \mathsf{likedBy}(z,w),\\
V_1(v,z) &\leftarrow \mathsf{hasReply}(v,z),H^p(v),\\
V_1^p(z) &\leftarrow V_1(v,z),\\
V_2(z,y) &\leftarrow \mathsf{hasCreator}_z(z,y),L^p(z),\\
V_2^p(z) &\leftarrow V_2(z,y),\\
V_0(z)   &\leftarrow V_1^p(z),V_2^p(z).
\end{align*}
The two occurrences $\mathsf{hasCreator}_v$ and
$\mathsf{hasCreator}_z$ are logical aliases of the same base label. Thus the
running example already contains a self-join. A base update is fanned out to
both aliases with one epoch identifier, and publication waits until both
copies, their projections, and all dependent views have processed that epoch
(Section~\ref{sec:implementation}).

Figure~\ref{fig:running-view-evolution} grounds this program on the values from
Figure~\ref{fig:running-overview}. Before $t_1$, the right branch is not
viable because $L^p(\mathsf{tweet2})=0$. After the insertion,
\[
\begin{aligned}
V_1^p(\mathsf{tweet2}) &= 2,\\
V_2^p(\mathsf{tweet2}) &= 1,\\
V_0(\mathsf{tweet2}) &= 2\cdot 1=2.
\end{aligned}
\]
The same figure also previews the dynamic-maintenance rule: the input delta is
$+1$, but the exact root delta is $+2$ because the unchanged reply branch
contributes two witnesses.

\begin{figure*}[t]
  \centering
  \resizebox{\textwidth}{!}{\TrieGSViewEvolution}
  \caption{Counting view state and exact delta propagation for the running
  example. Dashed nodes are parental projections. Values in green change at
  $t_1$. The affected-binding indexes enumerate the support of the corresponding
  delta view; each affected Gma is recomputed and propagated as its own
  difference from the old value.}
  \label{fig:running-view-evolution}
\end{figure*}

The root view identifies viable values of $z$. A factorized full-query
enumerator descends through $V_1$, $V_2$, and the base relations to recover
$(x,y,v,z,w)$. In this concrete instance there is one distinct full assignment,
so its weight equals the displayed root count two. In general, the root count
must not be interpreted as one flat answer multiplicity unless the compiled
answer representation guarantees that correspondence.

\paragraph*{Projected queries.}
A projected query such as $Q'(x,y,w)$ requires a separate free-connex check. For
projected free-connex queries, the current implementation maintains an explicit
count-aware answer relation $\mathsf{Ans}_Q$; it does not claim that a compact
factorized representation automatically provides duplicate-free weighted
enumeration. If projection destroys free-connexity, the query is outside the
core result.

\subsection{Answer Relation and View Invariant}

Let $\mathcal{R}_Q$ be the finite set of relations and indexes maintained for
$Q$. For a projected free-connex query, the formal construction includes a
count-aware answer relation
\[
  \mathsf{Ans}_Q(\bar x)
\]
whose tuple multiplicity is exactly $\mu_t^Q(\bar a)$ from
Equation~\ref{eq:bag-query-semantics}. It is maintained as the final
projection/count view of the compiled program. Scanning its support emits each
distinct head tuple once with its exact multiplicity. Its output-size-dependent
update and space costs are included in the affected-state and materialized-state
bounds.

For full queries, where every query variable is distinguished, the
implementation instead uses a factorized generator: no existential witnesses
must be grouped into the same head tuple, and one trie path represents one node
valuation with its accumulated multiplicity.

\begin{definition}[View invariant]
At published time $t$, the view invariant holds if, for every
$R\in\mathcal{R}_Q$ and tuple $\bar a$,
\[
  \getM_t(R(\bar a))=\mu_t^R(\bar a),
\]
where the right-hand side is the sum-product value defined by the logical
query, projection, view, answer, or index semantics over $G_t$.
\end{definition}

The rest of the proof has one purpose: show that update propagation moves a
correct state $L_{\pred(t)}$ to a correct state $L_t$, that the compiled access
representation $C_t$ supports the required local fetch operations, and that
concurrent enumerators observe exactly one physical realization $S_t$ of that
state.

\subsection{Existence of a Guard-Normal Plan}

\begin{lemma}[Guard normalization for full acyclic queries]
\label{lem:guard-normalization}
Every full acyclic conjunctive query admits a rooted materialized-view program
whose maintained rules have the guard-normal form of
Equation~\ref{eq:guard-normal-rule}. Repeated relation atoms use distinct
logical aliases.
\end{lemma}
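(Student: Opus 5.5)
The plan is to build the program directly from a join tree of $H_Q$ by a bottom-up construction, in the same style as the running example. First, every repeated relation atom receives a distinct logical alias, so each atom $\alpha_i$ names its own relation $R_i$. Since $Q$ is acyclic, $H_Q$ has a join tree $T_Q$ whose nodes are the atoms and whose bags are $\chi(u)=\Vars(\alpha_u)$. The running-intersection property holds by definition. We root $T_Q$ at an arbitrary atom. Because $Q$ is full, every variable is distinguished, so the whole tree serves as the witness subtree and no head-driven grouping is needed.

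Next, each node gets a view by induction on height. For a node $u$ with children $c_1,\ldots,c_m$ we define
\[
  V_u(\chi(u)) \leftarrow R_u(\chi(u)),\,P_{c_1\rightarrow u}(I_{c_1,u}),\ldots,P_{c_m\rightarrow u}(I_{c_m,u}),
\]
where $P_{c\rightarrow u}$ is the count-aware projection of $V_c$ onto $I_{c,u}=\chi(c)\cap\chi(u)$, as in Equation~\ref{eq:parent-projection}. Each projection is itself a single-body rule that sums out variables. Its guard is $V_c$, and it has no $P_i$ factors, so it is trivially guard-normal. In the main rule the guard is $R_u$, which contains all head variables $\chi(u)$, and $I_{c_i,u}\subseteq\chi(u)$ holds by construction. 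This is exactly the form of Equation~\ref{eq:guard-normal-rule}. If a bag is covered by several atoms, as in a width-one generalized decomposition, their product is materialized as the guard, as the text permits.

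The main step is correctness. We must show that the program's semantics, given by Equation~\ref{eq:view-sum-product} and Equation~\ref{eq:guard-normal-product}, gives $V_u(\bar a)$ equal to the sum over valuations of the subtree variables that extend $\bar a$ of the product of the subtree atom multiplicities. This goes by induction on the subtree. The crucial fact is that, by running intersection, the subtrees of distinct children share no variables outside $\chi(u)$, and the only variables a child subtree shares with $\chi(u)$ lie in $I_{c,u}$. So, given $\bar a$, the sum over extensions factorizes as $\mu^{R_u}(\bar a)\prod_i\sum_{\bar a'}\mu^{V_{c_i}}(\bar a')$, with each inner sum taken over child tuples agreeing with $\bar a$ on $I_{c_i,u}$. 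That inner sum is precisely $P_{c_i\rightarrow u}(\bar a|_{I_{c_i,u}})$ by the induction hypothesis. I expect this disjointness and distributivity argument to be the only real obstacle. It needs care in showing that a variable appearing in two child subtrees must also appear in $\chi(u)$, which follows from connectedness of its occurrence set.

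Finally, the root view $V_0$ carries the total weight, and summing over root tuples reproduces Equation~\ref{eq:bag-query-semantics}. Aliases do not change any multiplicity, because each alias reads the same base map $R_G$. Hence the program is a rooted materialized-view program of the required guard-normal form.
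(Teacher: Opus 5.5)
Your proposal is correct and follows the paper's construction: a join tree over the aliased atoms rooted arbitrarily, each node's atom as its guard, and the count-aware child projections onto the separators as the remaining factors. Your running-intersection factorization spells out the step the paper compresses into one sentence.

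One small correction: the projection rules $P_{c\rightarrow u}\leftarrow V_c$ sum out variables, so they are not literally of the form in Equation~\ref{eq:guard-normal-rule}, whose guard carries exactly the head variables. Like the paper, treat them as the separate linear projection rules of Equation~\ref{eq:parent-projection} rather than calling them trivially guard-normal.
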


\begin{proof}
Acyclicity gives a join tree whose nodes are the body atoms. Root the tree at an
arbitrary atom. For a node with atom $R_v(\bar X_v)$, define the subquery view
$V_v(\bar X_v)$ using $R_v$ as its guard. For every child $c$, project the
child view onto the separator
$\bar I_{c,v}=\bar X_c\cap\bar X_v$ and use that count-aware projection as a
factor of $V_v$. The running-intersection property ensures that the child
subquery shares variables with the remaining query only through this
separator. Thus all eliminated child variables are summed in the projection,
each projection interface is contained in the guard variables, and the
grounded node value is the product in
Equation~\ref{eq:guard-normal-product}. Applying this construction bottom-up
gives guard-normal rules for every node. Since the query is full, traversing
the rooted views recovers every query variable without a projected head
aggregation. Aliasing repeated atoms preserves the join tree and the argument.
\end{proof}

For projected free-connex queries, the body-node construction still applies,
but the current implementation materializes the final count-aware answer
projection as stated in Corollary~\ref{cor:projected-materialized}.

\section{Dynamic Multiset Maintenance and Enumeration}
\label{sec:maintenance-enumeration}

The previous section defined the abstract CDE state. This section explains how
TrieGS maintains its counting payloads under insertions, deletions, batches, and
window expirations. Conceptually, a base update induces a delta view tree: a
changed leaf is joined with unchanged sibling views and projected toward the
root. TrieGS does not need to materialize every delta view. Instead, tertiary
access indexes enumerate exactly the parent bindings in the delta-view support;
for each such binding, \textsc{SetExact} recomputes the new payload and
propagates the signed difference.

\subsection{Serializable Update Epochs}

For the core proof, epochs are published in logical-time order. Work inside one
epoch may execute in parallel, but its effect must be equivalent to some
deterministic sequential order. This requirement is particularly important for
self-joins and batches, where one base relation may occur several times in the
same query.

\begin{assumption}[Serializable epochs]
The concurrent execution of update epoch $t$ is equivalent to applying all
updates in $U_t$ and all induced Gma changes in a deterministic sequential
order. Epoch $t$ is published only after every induced change has completed.
\label{ass:serializable-epochs}
\end{assumption}

\subsection{Delta-View Support and Affected Bindings}

A changed base or derived Gma may affect projections and parent views. The
compiler uses the guard-normal rule and interface indexes from
Equations~\ref{eq:guard-normal-rule}--\ref{eq:guard-interface-index} to derive
the affected head bindings directly.

\paragraph*{Exact derived updates.}
For a Gma $R(\bar a)$ whose exact new value is $d_{\newv}$, the logical
operation \textsc{SetExact} reads $d_{\oldv}=\getM(R(\bar a))$, computes
$\delta=d_{\newv}-d_{\oldv}$, and does nothing when $\delta=0$. Otherwise it
installs $d_{\newv}$ in the working version and enqueues $(R(\bar a),\delta)$
for every dependent projection and view. This operation is the only way a
derived Gma propagates upward.

\paragraph*{Affected-binding construction.}
Consider a guard-normal rule
\[
 r:\quad V(\bar X)\leftarrow
 G(\bar X),P_1(\bar I_1),\ldots,P_m(\bar I_m).
\]
The construction is evaluated in the current working version after the changed
factor has received its new value.

If the guard changes at binding $\bar c$, then
\[
 \mathsf{Affected}_r(G(\bar c))=
 \begin{cases}
 \{\bar c\}, & \text{if }\prod_i\mu(P_i(\bar c|_{\bar I_i}))>0,\\
 \emptyset, & \text{otherwise.}
 \end{cases}
\]
If projection $P_j$ changes at interface binding $\bar c$, then
\begin{align}
 \mathsf{Affected}_r(P_j(\bar c))
 =\{\bar a\in \idx{G,\bar I_j}(\bar c)\mid
       \mu(P_i(\bar a|_{\bar I_i}))>0
       \text{ for all }i\neq j\}.
 \label{eq:affected-bindings}
\end{align}
Thus a projection change probes the guard by its interface and filters only by
unchanged sibling viability. Logical aliases distinguish repeated atoms in a
self-join, so the construction is applied once per affected alias.

\begin{lemma}[Delta-view support]
\label{lem:affected-bindings}
Let one factor of a grounded guard-normal rule change by a nonzero signed delta
while the remaining factors are fixed. The construction above returns exactly
the head bindings whose value under Equation~\ref{eq:guard-normal-product}
changes; equivalently, it enumerates the support of the corresponding local
delta view.
\end{lemma}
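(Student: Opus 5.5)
The plan is to compute the local delta of Equation~\ref{eq:guard-normal-product} directly at each grounded head binding. The computation uses the fact that $\mathbb N$ has no zero divisors, so a product of multiplicities is positive exactly when every factor is positive. Fix the rule $r$ and a head binding $\bar a$, and write $\mu^{\oldv}$ and $\mu^{\newv}$ for the values before and after the single factor change. Because exactly one factor changes, the difference telescopes to a single term:
\[
  \mu^{\newv}_V(\bar a)-\mu^{\oldv}_V(\bar a)
  = \delta_F(\bar a)\cdot\prod_{F'\neq F}\mu_{F'}(\bar a|_{\bar I_{F'}}).
\]
Here $F$ is the changed factor (the guard or some $P_j$), $\delta_F(\bar a)$ is its signed change at the restriction of $\bar a$ to its variables, and the remaining factors are evaluated at their unchanged values. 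The bindings whose value changes, which form the support of the local delta view, are therefore exactly those $\bar a$ for which $\delta_F(\bar a)\neq 0$ and every other factor is positive at $\bar a$.

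Next I split into the two cases of the construction.

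\emph{Guard change at $\bar c$.} The guard is indexed by the full head $\bar X$, so $\delta_G(\bar a)\neq 0$ iff $\bar a=\bar c$. The other factors are the projections $P_i$, so the support is $\{\bar c\}$ when $\prod_i\mu(P_i(\bar c|_{\bar I_i}))>0$ and $\emptyset$ otherwise. This matches the first case of $\mathsf{Affected}_r$.

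\emph{Change of $P_j$ at interface binding $\bar c$.} Here $\delta_{P_j}(\bar a)\neq 0$ iff $\bar a|_{\bar I_j}=\bar c$. The remaining factors are $G$ and the siblings $P_i$ with $i\neq j$. Positivity of $G$ together with the restriction condition is precisely membership in $\idx{G,\bar I_j}(\bar c)$ by Equation~\ref{eq:guard-interface-index}. Positivity of the siblings is the filter in Equation~\ref{eq:affected-bindings}. Hence both inclusions hold and the sets are equal.

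The step I expect to need the most care is justifying that ``the remaining factors are fixed'' when the construction reads them in the working version after the changed factor has been updated. The argument needs three facts.
\begin{itemize}
  \item The changed factor occurs only once in the rule. This is guaranteed by distinct logical aliases (Lemma~\ref{lem:guard-normalization}). Without it, a self-join could make two factors change simultaneously, and the single-term telescoping above would fail.
  \item In the projection case the guard index is read at the guard's unchanged value. The index depends only on $G$, so reading it after $P_j$ has been updated is harmless.
  \item Sibling projections are distinct relations, so reading them in the working version yields their fixed values.
\end{itemize}
Under Assumption~\ref{ass:serializable-epochs}, each single-factor change is processed in some sequential order, so these hypotheses hold at each step. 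Finally, I will remark that a nonzero $\delta$ can drive a factor to zero, and such bindings are still correctly reported, since the telescoped difference only requires $\delta\neq 0$. Bindings whose new value is zero are thus included. This is what \textsc{SetExact} needs to propagate deletions.
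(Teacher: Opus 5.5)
Your proof is correct and is essentially the paper's argument. It uses the same two cases: a guard change at $\bar c$, and a change of $P_j$ found by probing $\idx{G,\bar I_j}(\bar c)$ and filtering by sibling viability. In both, soundness and completeness rest on positivity of the unchanged factors; you merely package this uniformly through the one-term difference $\delta_F(\bar a)\prod_{F'\neq F}\mu_{F'}(\bar a|_{\bar I_{F'}})$.

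One caution concerns your closing appeal to Assumption~\ref{ass:serializable-epochs}. It is unnecessary, because ``the remaining factors are fixed'' is already a hypothesis of this lemma. It is also not automatic for Algorithm~\ref{alg:process-epoch} as written: all of $U_t$ is installed before propagation, so a sibling may already hold its new value when a change is popped. For example, suppose one epoch deletes $\mathsf{hasCreator}(\mathsf{tweet1},\mathsf{Max})$ and then the last copy of $\mathsf{hasReply}(\mathsf{tweet1},\mathsf{tweet2})$. The guard case for $V_1$ can already see $H^p(\mathsf{tweet1})=0$, and the projection case a zero guard, so $V_1(\mathsf{tweet1},\mathsf{tweet2})$ is never recomputed. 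That issue concerns Lemma~\ref{lem:update-correctness}, not the statement you were asked to prove.
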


\begin{proof}
First suppose the guard changes at $\bar c$. A grounded head binding uses the
changed guard factor iff it is exactly $\bar c$. Its product changes iff every
projection factor is positive; otherwise the product is zero both before and
after the guard update. The guard case therefore returns precisely the changed
head tuple, or none.

Now suppose $P_j(\bar c)$ changes. A grounded head binding $\bar a$ uses this
factor iff $\bar a|_{\bar I_j}=\bar c$. By
Equation~\ref{eq:guard-interface-index}, the tertiary index returns exactly
such bindings having a positive guard factor. For any returned binding, the
remaining filter requires every unchanged sibling factor to be positive. The
product of the unchanged factors is therefore positive, so multiplying it by a
nonzero change in $P_j(\bar c)$ changes the value of $V(\bar a)$; this proves
soundness. Conversely, if $V(\bar a)$ changes, it must use the changed factor,
so $\bar a|_{\bar I_j}=\bar c$, and all unchanged factors must be positive;
otherwise its product remains zero. The guard index and sibling filter therefore
return $\bar a$, proving completeness.
\end{proof}

\subsection{Epoch Processing}

\begin{algorithm}[H]
\caption{Process one logical update epoch}
\label{alg:process-epoch}
\begin{algorithmic}[1]
\Procedure{ProcessEpoch}{$U_t,t$}
  \State create a working version from the latest published state
  \ForAll{base updates $(\sigma,A(\bar a))\in U_t$ in epoch order}
    \State \Call{SetExact}{\ensuremath{A(\bar a),\getM(A(\bar a))+\sigma,t}}
  \EndFor
  \While{the epoch work queue is nonempty}
    \State pop a changed Gma $A(\bar a)$ with signed delta $\delta_A$
    \ForAll{projection rules $P(\bar y)\leftarrow A(\bar x)$}
      \State $\theta\gets\mathsf{CreateMap}(\bar x,\bar a)$
      \State $b\gets\theta(\bar y)$
      \State \Call{SetExact}{\ensuremath{P(b),\getM(P(b))+\delta_A,t}}
    \EndFor
    \ForAll{parent rules $r$ containing $A$ in their body}
      \ForAll{$\theta\in\mathsf{Affected}_r(A(\bar a))$}
        \State $d\gets\mathsf{EvaluateRule}_r(\theta)$
        \State \Call{SetExact}{\ensuremath{\mathsf{head}_r(\theta),d,t}}
      \EndFor
    \EndFor
  \EndWhile
  \State mark epoch $t$ complete
\EndProcedure
\end{algorithmic}
\end{algorithm}

The key difference from forwarding an input edge delta is the parent-rule loop:
a derived view propagates its own exact difference $d_{\newv}-d_{\oldv}$. Under
bag semantics, an insertion of one base edge can change a parent count by more
than one because it joins with several witnesses.

\paragraph*{Running example continued.}
Panel~(b) of Figure~\ref{fig:running-view-evolution} is the concrete execution
of Algorithm~\ref{alg:process-epoch} at $t_1$. The insertion first changes
$L^p(\mathsf{tweet2})$ and the right branch by $+1$. Equation~\ref{eq:affected-bindings}
uses the $z$-index of the guard to find the one root binding
$z=\mathsf{tweet2}$. Re-evaluating the root multiplies that new right-branch
count by the two reply witnesses on the left, so \textsc{SetExact} installs
$V_0(\mathsf{tweet2})=2$ and propagates a root delta of $+2$. At $t_2$,
expiration of one reply event executes the symmetric trace:
\[
\begin{aligned}
\mathsf{hasReply}(\mathsf{tweet1},\mathsf{tweet2}) &: 2\rightarrow 1,\\
V_1(\mathsf{tweet1},\mathsf{tweet2}) &: 2\rightarrow 1,\\
V_1^p(\mathsf{tweet2}) &: 2\rightarrow 1,\\
V_0(\mathsf{tweet2}) &: 2\rightarrow 1.
\end{aligned}
\]
Thus window expiration needs no special query rule; it is an exact negative
base update followed by the same propagation mechanism.

\begin{lemma}[Update correctness]
\label{lem:update-correctness}
If the view invariant holds at $\pred(t)$, then after
Algorithm~\ref{alg:process-epoch} completes, it holds at $t$.
\end{lemma}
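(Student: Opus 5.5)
The proof goes by induction along a topological order of the compiled view program. The program is acyclic: base relations are the leaves, projections sit above their source view, parent views sit above their guard and projections, and $\mathsf{Ans}_Q$ or the root view is at the top. Assumption~\ref{ass:serializable-epochs} lets us reason about a single sequential execution of Algorithm~\ref{alg:process-epoch}. The invariant to establish is: once the work queue is empty, every Gma $R(\bar a)$ in the working version satisfies $\getM(R(\bar a))=\mu_t^R(\bar a)$. Before the epoch starts, the working version equals the state at $\pred(t)$, which satisfies the view invariant by hypothesis.

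\textbf{Base level.} Every update $(\sigma,A(\bar a))$ in $U_t$ is applied through \textsc{SetExact} with $\getM+\sigma$. Summing over the batch, each base Gma ends at $\mu_{\pred(t)}^A(\bar a)+\sum\sigma=\mu_t^A(\bar a)$. Validity of deletions keeps this value in $\mathbb N$. Fanning a base update out to every logical alias ensures that all aliases receive the same delta.

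\textbf{Projections.} The claim for each projection $P$ is that the signed deltas it receives sum exactly to $\delta_tA$ over each group $\bar a|_{I}=\bar b$. This holds because every change to a source Gma $A(\bar a)$ passes through \textsc{SetExact}, which enqueues exactly $d_{\newv}-d_{\oldv}$. Over the epoch these differences telescope to $\mu_t^A(\bar a)-\mu_{\pred(t)}^A(\bar a)$. By linearity of Equation~\ref{eq:parent-projection}, adding them to the old projection value yields $\mu_t^P(\bar b)$.

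\textbf{Views.} For a view $V$ defined by a guard-normal rule, whenever $\mathsf{EvaluateRule}_r$ runs it recomputes Equation~\ref{eq:guard-normal-product} from the current factor values. Rather than accumulating deltas, we therefore argue about the final state. Consider the last time any factor of $V(\bar a)$ changes value.
\begin{itemize}
\item If the product changes at that step, Lemma~\ref{lem:affected-bindings} puts $\bar a$ in $\mathsf{Affected}_r$. It is then re-evaluated after the factor holds its final value, so $V(\bar a)$ is set to the product of the final factor values.
\item If no factor of $V(\bar a)$ changes at all, the old correct value persists.
\item Otherwise, some earlier re-evaluation or the old value already equals the product of the current factors; this needs a short inductive argument over the sequence of factor changes.
\end{itemize}
Once the children are correct by induction, the final product equals $\mu_t^V(\bar a)$ by Equation~\ref{eq:view-sum-product}. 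For full queries this relies on Lemma~\ref{lem:guard-normalization}. Secondary and tertiary indexes follow because they are defined from supports of correct Gmas and are updated by the same \textsc{SetExact} hook.

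\textbf{Main obstacle.} The hardest step is the view case when several factors of one rule change within the same epoch, as in batches and self-join aliases. Lemma~\ref{lem:affected-bindings} assumes a single factor changes while the others are fixed. I would apply it stepwise, one change at a time, to the sequence of factor changes in the serialized order, maintaining the invariant ``$V(\bar a)$ equals the product of its factors' current values.'' Each individual factor change is then covered by the lemma. A change that leaves the product unchanged needs no recomputation, because the product was zero before and after. Termination follows from the DAG structure together with finiteness of each affected set.
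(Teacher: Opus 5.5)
Your outline follows the paper's proof: exact base updates, linearity of Equation~\ref{eq:parent-projection} for projections, Lemma~\ref{lem:affected-bindings} plus \textsc{SetExact} re-evaluation for rules, and induction from leaves to root. Your base and projection cases are fine. The gap is the step you flag as the main obstacle, and your stepwise repair does not work for Algorithm~\ref{alg:process-epoch}. The invariant ``$V(\bar a)$ equals the product of its factors' current values'' is not maintained change by change. A factor's new value is installed at once, but $\mathsf{Affected}_r$ is evaluated only when that change is popped, and it reads the \emph{current} sibling values. By then other factors may already have moved: the whole batch and every alias copy are installed before the first pop, and projections are installed during other pops. So the fixed-siblings premise of Lemma~\ref{lem:affected-bindings} fails at the pop. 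Your remark that a change leaving the product at zero needs no recomputation also presupposes that $V(\bar a)$ was already repaired.

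Concretely, let $\mathsf{hasReply}(\mathsf{tweet1},\mathsf{tweet2})=\mathsf{hasCreator}(\mathsf{tweet1},\mathsf{Max})=1$ at $\pred(t)$, so $H^p(\mathsf{tweet1})=V_1(\mathsf{tweet1},\mathsf{tweet2})=1$. Let $U_t$ delete both edges, the $\mathsf{hasCreator}$ deletion first, with a FIFO queue. Popping $\mathsf{hasCreator}_v$ sets $H^p(\mathsf{tweet1})=0$. Popping $\mathsf{hasReply}$ then sees the sibling $H^p(\mathsf{tweet1})=0$ and returns $\emptyset$. Popping $H^p(\mathsf{tweet1})$ probes $\idx{\mathsf{hasReply},v}(\mathsf{tweet1})$, which no longer contains $(\mathsf{tweet1},\mathsf{tweet2})$. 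That binding is never re-evaluated and keeps value $1$, whereas $\mu_t^{V_1}(\mathsf{tweet1},\mathsf{tweet2})=0$. So the ``short inductive argument'' of your third bullet cannot be supplied.

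The paper's proof makes the same unqualified appeal to Lemma~\ref{lem:affected-bindings}, so it does not supply this step either. The example shows that what is missing is a change to the schedule or to the $\mathsf{Affected}$ construction, not just a more careful induction. Two changes would close it:
\begin{itemize}
\item \emph{Depth-first schedule.} Propagate every installed change to all its dependents before installing the next change, including the next alias copy. In the join-tree plans of Lemma~\ref{lem:guard-normalization} each relation feeds a single rule. So at every step exactly one factor of that rule differs from the state in which its head was last consistent, which is the premise of Lemma~\ref{lem:affected-bindings}, and your stepwise argument becomes valid.
\item \emph{Deferred evaluation on old-or-final supports.} Keep the queue, but compute a rule's candidate bindings only after all its body relations are final. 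Filter siblings and probe the guard index against the union of the $\pred(t)$ and final supports. Any binding whose product differs between $\pred(t)$ and $t$ has all factors positive at one of the two endpoints, so it is caught and re-evaluated on final inputs. Every other binding keeps its correct old value, and your final-state argument then goes through.
\end{itemize}
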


\begin{proof}
Base Gmas are updated exactly according to $U_t$. Equation~\ref{eq:parent-projection}
is linear, so a child delta contributes the same signed delta to its grounded
parental projection. For every derived rule,
Lemma~\ref{lem:affected-bindings} proves that the parent-rule loop visits every
and only grounded head binding whose product can change. Each visited binding
is re-evaluated from the current working version using
Equation~\ref{eq:guard-normal-product}, and \textsc{SetExact} propagates its
exact difference from the old value. Induction from leaves to the root
establishes the invariant for every maintained relation. Serializable epoch
execution ensures that parallel scheduling has the same final state as the
logical sequence used by the proof.
\end{proof}

\begin{theorem}[Dynamic multiset maintenance]
\label{thm:dynamic-multiset-maintenance}
For fixed $Q$, every completed epoch $t$ restores the counting view invariant
and costs $\OQ(A_t)$ relation operations, where $A_t$ counts the affected Gmas,
projection updates, and compatible extensions touched by
Algorithm~\ref{alg:process-epoch}.
\end{theorem}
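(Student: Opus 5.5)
The theorem has two parts, correctness and cost, and I would prove them separately.

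Correctness is essentially Lemma~\ref{lem:update-correctness} applied inductively over published epochs. For the base case, the empty initial graph has every maintained Gma at multiplicity zero, which agrees with Equation~\ref{eq:view-sum-product}. Alternatively, the state after a batch preprocessing phase satisfies the invariant by construction. If the invariant holds at $\pred(t)$, Lemma~\ref{lem:update-correctness} gives it at $t$. Assumption~\ref{ass:serializable-epochs} lets us reason about the deterministic sequential order only. One point needs care: the sequential order inside an epoch may interleave base updates and derived propagation, including several aliases of the same base label in a self-join. To handle this, I would strengthen the inner induction. At quiescence of the work queue, every derived Gma equals its sum-product value over the current working base state. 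The reason is that each \textsc{SetExact} enqueues exactly its own nonzero difference. Lemma~\ref{lem:affected-bindings} then guarantees that every parent binding whose guard-normal product changed is re-evaluated. Since propagation follows the acyclic view DAG, a termination and ordering argument by topological rank shows the queue empties with all values exact.

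For the cost bound, I would charge every operation of Algorithm~\ref{alg:process-epoch} to an element counted in $A_t$. Each base update performs one \textsc{SetExact}, costing $\OQ(1)$ dictionary operations. Each popped Gma triggers at most a query-dependent number of projection rules and parent rules, and each projection update costs $\OQ(1)$. For a parent rule, the $\mathsf{Affected}_r$ computation scans $\idx{G,\bar I_j}(\bar c)$. Each scanned binding is a compatible extension and is therefore counted in $A_t$. Filtering it against the $m$ sibling projections costs $\OQ(1)$ lookups, since $m$ depends only on $Q$. $\mathsf{EvaluateRule}_r$ evaluates Equation~\ref{eq:guard-normal-product} with $m+1$ lookups, and the resulting \textsc{SetExact} is again $\OQ(1)$. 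Each iterator step costs $T_{\mathsf{fetch}}$, and every operation costs $T_{\mathsf{dict}}$, both $\OQ(1)$ under the stated conventions. Summing gives $\OQ(A_t)$ relation operations.

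The main obstacle is that the index scan may visit bindings that are filtered out by a zero sibling projection. These are not changed Gmas. The cleanest fix is to define $A_t$ to include all compatible extensions touched by the scan, which the statement's wording already permits. I would make this explicit rather than claim an output-sensitive bound in changed Gmas alone. A second subtlety is a Gma that is re-set several times within one epoch, for example under self-joins or multiple child changes. I would count each \textsc{SetExact} invocation separately in $A_t$, and note that the number of enqueues per changed Gma is bounded by the number of its dependent rules, which is $\OQ(1)$.
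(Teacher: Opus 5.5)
Your split into a correctness half and a cost half is the same as the paper's. The paper's proof is Lemma~\ref{lem:update-correctness} plus the remark that the work is, by definition, what $A_t$ counts. Your cost half is a correct and more careful version of that remark.

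The gap is in the step you add for correctness: ``Lemma~\ref{lem:affected-bindings} then guarantees that every parent binding whose guard-normal product changed is re-evaluated.'' That lemma assumes one factor changes while the others stay fixed. But $\mathsf{Affected}_r$ is evaluated in the current working version. So if two factors of the same grounded head binding are both installed before either change is popped, each sibling filter in Equation~\ref{eq:affected-bindings} sees the other factor's \emph{new} value, and both pops can discard the binding.

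Here is a concrete failure. Take atoms $R(x,y)$, $S(y,z)$, $T(y,w)$ rooted at $R$, so $V_R(x,y)\leftarrow R(x,y),P_S(y),P_T(y)$. Start from $R(1,2)=S(2,3)=T(2,4)=1$, hence $V_R(1,2)=1$, and let the epoch delete $S(2,3)$ and $T(2,4)$.
\begin{enumerate}
\item Under FIFO, which here is also your topological-rank order, $V_S(2,3)$ and $V_T(2,4)$ drop to $0$.
\item Then $P_S(2)$ and $P_T(2)$ are both installed as $0$ before either is popped.
\item Popping $P_S(2)$, the guard index $\idx{R,y}(2)$ returns $(1,2)$, but the filter demands $P_T(2)>0$ and discards it. Popping $P_T(2)$ fails symmetrically.
\item The queue empties with $V_R(1,2)=1$, although its sum-product value is $0$.
\end{enumerate}
The same thing happens in a one-update epoch when $S$ and $T$ are two aliases of a single label $E$, with $R(1,2)=E(2,3)=1$, and that one edge is deleted. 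That is exactly the self-join case you set out to cover. So the rank-ordering argument cannot establish your quiescence invariant. Citing Lemma~\ref{lem:update-correctness} does not close the gap either, because its proof uses Lemma~\ref{lem:affected-bindings} in the same way.

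You need one more ingredient, and there are two natural options.

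First option: fix the deterministic order of Assumption~\ref{ass:serializable-epochs}, and change Algorithm~\ref{alg:process-epoch} to match, so that each alias-level base change is propagated to quiescence before the next one is installed. Along one such propagation, every rule has at most one changing factor, and each grounded head binding reads a single value of it. The affected set of a projection change then depends only on the guard and the unchanged siblings. Hence every pop of $P_j(\bar c)$ returns the same bindings, and the last pop re-evaluates them against the final value. Lemma~\ref{lem:affected-bindings} then applies verbatim.

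Second option: keep batched installation, but widen the construction so it never discards a binding whose stored head value is nonzero. For example, also probe an interface index on the head view's support. The last pop of any changed factor of a binding happens after all its factors are final. If the stored and recomputed values differ there, one of them is nonzero, so the binding is re-evaluated at that pop.

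In both variants the extra probes are compatible extensions touched by the algorithm. So your $\OQ(A_t)$ charging goes through unchanged, including your treatment of filtered-out bindings and repeated \textsc{SetExact} calls.
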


\begin{proof}
Correctness is Lemma~\ref{lem:update-correctness}. The number of relation
operations is proportional, up to constants depending only on the compiled
query, to the Gmas and indexed compatible extensions enumerated by the affected
work. This quantity is $A_t$ by definition.
\end{proof}

The result is not a constant-update theorem. A single edge may have many
compatible witnesses, and $A_t$ can be large.

\subsection{Enumeration over a Maintained State}

Algorithm~\ref{alg:enumerate} reads one fixed query snapshot $S_t$. For a full
query, a factorized generator implements the abstract witness-subtree cursor
from Theorem~\ref{thm:abstract-cde-enumeration}. For a projected free-connex
query, the current construction instead scans the explicitly materialized
$\mathsf{Ans}_Q(\bar x)$ relation, whose support contains each distinct head
tuple once and whose payload contains the exact multiset count. Zero-count Gmas
are skipped.

\begin{assumption}[Indexed extension access]
\label{ass:indexed-extension-access}
Every primary, secondary, and tertiary access path selected by the compiler
implements its advertised lookup or compatible-extension iterator without
duplicates and with $\OQ(1)$ local delay in the fully indexed core
configuration.
\end{assumption}

\begin{algorithm}[H]
\caption{Enumerate one published query snapshot}
\label{alg:enumerate}
\begin{algorithmic}[1]
\Procedure{Enumerate}{$Q$}
  \State $\mathsf{rv}\gets\Call{AcquirePublishedRootVector}{}$
  \State \Call{Pin}{\ensuremath{\mathsf{rv}}}
  \State $S\gets\Call{SnapshotFromRootVector}{\mathsf{rv}}$
  \ForAll{$(\bar a,d)$ produced by the answer iterator over $S$}
    \State \textbf{yield} $(\bar a,d)$
  \EndFor
  \State \Call{Unpin}{\ensuremath{\mathsf{rv}}}
\EndProcedure
\end{algorithmic}
\end{algorithm}

For the running example at $t_1$, Algorithm~\ref{alg:enumerate} emits
\[
\bigl((\mathsf{Max},\mathsf{Eva},\mathsf{tweet1},
\mathsf{tweet2},\mathsf{Mia}),2\bigr)
\]
once. It does not materialize two identical output rows internally. An
application requiring expanded bag output can repeat the tuple twice at its
interface boundary.

\begin{lemma}[Enumeration correctness over a logical snapshot]
\label{lem:logical-enumeration-correctness}
If snapshot $S_t$ represents a state satisfying the view invariant, then the
full-query generator emits exactly the nonzero pairs
$(\bar a,\mu_t^Q(\bar a))$. Scanning an explicit projected answer relation has
the same property for the projected head.
\end{lemma}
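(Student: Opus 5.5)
The proof splits into the full-query generator and the projected scan. In both cases the view invariant at $t$ is what identifies the stored payloads with the logical multiplicities $\mu_t^R$ over $G_t$.

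\emph{Full queries.} I use the rooted guard-normal program of Lemma~\ref{lem:guard-normalization}.

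First, the generator's support equals the support of $Q(G_t)$. Take a full node valuation $\theta$ with $\omega_t(\theta)>0$. Every base Gma on its atoms is positive. By bottom-up induction on the join tree, using Equation~\ref{eq:guard-normal-product}, every view and parental projection on the path of $\theta$ is positive. So the cursor reaches $\theta$: at each node, $\mathsf{fetchFirst}/\mathsf{fetchNext}$ under Assumption~\ref{ass:indexed-extension-access} return exactly the positive guard tuples compatible with the parent interface binding. Conversely, every tuple the generator emits is a consistent assignment of positive Gmas. Consistency holds by running intersection, as in Theorem~\ref{thm:abstract-cde-enumeration}. Since the query is full, each emitted trie path is a distinct node valuation, so each head tuple $\bar a=\theta(\bar x)$ is emitted exactly once.

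Second, the accumulated multiplicity is the right one. The generator multiplies the guard payloads along the path. By the view invariant each guard payload equals $\mult_{G_t}$ of its atom, with aliases handled by Lemma~\ref{lem:dictionary-refinement} and the aliasing remark. The product is therefore $\omega_t(\theta)$. Because the query is full, the sum in Equation~\ref{eq:bag-query-semantics} has the single term $\theta$, so $\mu_t^Q(\bar a)=\omega_t(\theta)$.

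Third, zero Gmas and tombstones are skipped by the access contract. Hence no pair with multiplicity zero is emitted, and none with positive multiplicity is missed.

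\emph{Projected queries.} By the view invariant, $\getM_t(\mathsf{Ans}_Q(\bar a))=\mu_t^Q(\bar a)$ for every $\bar a$. The scan visits each key of the support exactly once, since the key structure has no duplicates under Assumption~\ref{ass:indexed-extension-access}. It skips zero payloads and emits the stored value. This gives exactly the nonzero pairs $(\bar a,\mu_t^Q(\bar a))$.

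\emph{Main obstacle.} The delicate step is the multiplicity accounting in the full case. It must use the base guard payloads along the path, not the root or intermediate view counts. The root count $V_0(z)$ sums over many valuations, as the paper warns after the running example, so it is not a per-tuple multiplicity. I will therefore state explicitly that the emitted weight is the product of leaf (guard) payloads along one path, and argue uniqueness of that path from fullness and injectivity of the encoding. The role of the view and projection payloads is only to guarantee viability, i.e.\ positivity, so that cursors never dead-end. That viability is what also yields the delay bound from Theorem~\ref{thm:abstract-cde-enumeration}.
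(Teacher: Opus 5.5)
Your proposal is correct and takes the same route as the paper's proof. For full queries, the witness-subtree cursor assigns every variable consistently by running intersection, the duplicate-free local accesses visit each valuation once, and the payload product equals $\omega_t(\theta)=\mu_t^Q(\bar a)$ because no existential variables are grouped; for projected queries, the view invariant on $\mathsf{Ans}_Q$ gives the claim directly. You sharpen the paper's terse ``products of Gma counts'' by stating that the weight is the product of the base guard payloads (not root or view counts), with view and projection payloads acting only as positivity filters; note that the cursors should then range over tuples with positive view payload rather than all positive guard tuples, which affects delay but not the emitted set.
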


\begin{proof}
For a full query, every variable is distinguished. The witness-subtree cursor
therefore assigns every query variable, and the connectedness condition makes
shared-variable assignments consistent. Local access structures enumerate every
compatible extension exactly once. Products of Gma counts give the valuation
weight from Equation~\ref{eq:bag-query-semantics}; because there are no
existential variables to group, the generator emits the valuation once with
that weight. For a projected query, the view invariant states directly that
every nonzero $\mathsf{Ans}_Q(\bar a)$ payload equals $\mu_t^Q(\bar a)$, so
scanning its support gives the required weighted pairs.
\end{proof}

\begin{lemma}[Logical constant delay]
\label{lem:logical-constant-delay}
Under Assumption~\ref{ass:indexed-extension-access}, a fixed full acyclic query
has $\OQ(1)$ delay per emitted pair. An explicit projected answer relation also
has $\OQ(1)$ scan delay.
\end{lemma}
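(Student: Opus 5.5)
The plan is to reduce the full-query case to Theorem~\ref{thm:abstract-cde-enumeration} and to treat the projected case as a direct scan argument. In both cases the key is to check that every step of the generator between two emitted pairs is a bounded number of calls to access paths covered by Assumption~\ref{ass:indexed-extension-access}.

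\textbf{Full queries.} I would use the rooted guard-normal program of Lemma~\ref{lem:guard-normalization} as the witness subtree; since the query is full, this is the whole join tree. The generator keeps one cursor per tree node. Each cursor is driven by $\mathsf{fetchFirst}_u$ and $\mathsf{fetchNext}_u$ over the secondary index keyed by the parent interface $I_u$. The number of nodes, and hence of cursors, is a constant depending only on $Q$.

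The steps, in order, are as follows.
\begin{enumerate}[leftmargin=*]
  \item Show that every cursor extension found by a secondary index at a positive-weight node extends to at least one full valuation. The view invariant says each stored Gma with positive payload has positive sum-product value. By Equation~\ref{eq:guard-normal-product} and the positivity of every parental projection $P_{c\rightarrow u}$ (Equation~\ref{eq:parent-projection}), each child therefore has at least one compatible positive tuple. The dynamic state thus behaves as a semijoin-reduced state: no cursor descent reaches a dead end.
  \item Bound the work between two outputs. The advance step moves the deepest cursor that still has a successor and re-initializes its descendants by $\mathsf{fetchFirst}$. By step~1 each re-initialization succeeds immediately, so at most $O(|T_Q|)$ fetch operations occur, each costing $\OQ(1)$.
  \item Account for the multiplicity. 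I would either multiply the at most $k$ base Gma counts along the cursor path, or carry partial products in the cursors. Either way this is $\OQ(1)$ primary lookups of cost $T_{\mathsf{dict}}=\OQ(1)$.
  \item Handle the first output and termination the same way: a bounded number of $\mathsf{fetchFirst}$ calls, and a bounded number of failed $\mathsf{fetchNext}$ calls up the tree before termination is detected.
\end{enumerate}

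\textbf{Projected queries.} I would observe that $\mathsf{Ans}_Q$ is a single relation whose support is duplicate-free. Its scan iterator, given by Assumption~\ref{ass:indexed-extension-access}, skips zero tombstones with $\OQ(1)$ local delay. Each emitted pair then costs one iterator step plus one payload read.

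\textbf{Main obstacle.} I expect the hardest part to be the dead-end-freeness argument in step~1. The maintained state contains zero-valued tombstones, and the indexes are keyed by guards rather than by fully reduced local relations. A cursor could therefore land on a guard tuple whose sibling projections are zero. I would resolve this by requiring the secondary enumeration indexes to range over the support of the views $V_u$ rather than over the guards $G$; $V_u$ is positive exactly when the guard and all projections are positive. Skipping zero entries in constant time is then already part of the local access contract. If that does not match the compiled indexes, I would instead argue that a sibling filter of constant arity suffices, provided zero entries are never enumerated, which is exactly what the contract guarantees.
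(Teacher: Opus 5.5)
Your proposal is correct and follows essentially the same argument as the paper. The paper's one-paragraph sketch rests on three points: a query-bounded number of cursors and index probes, count viability preventing descent into dead subtrees, and a direct support scan of $\mathsf{Ans}_Q$. Your step~1 justification (positive $V_u$ forces a positive parental projection, hence a positive compatible child tuple) and your first-output and termination accounting spell out what that sketch leaves implicit.

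Keep your primary resolution of the obstacle, in which the enumeration indexes range over the support of $V_u$, i.e.\ the contract's dead-tuple skipping. Drop the fallback: filtering positive-guard tuples by sibling viability can skip unboundedly many of them between outputs, so by itself it would not give $\OQ(1)$ delay.
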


\begin{proof}
The number of witness-subtree cursors, child interfaces, and index probes is
bounded by the fixed compiled query. Semijoin/count viability prevents traversal
into dead subtrees. Therefore the work before the first output and between
consecutive outputs is a fixed number of local iterator steps and probes. For
an explicit answer relation, the iterator advances directly between nonzero
support entries.
\end{proof}

\section{Access Structures, LFNT, and Snapshot Publication}
\label{sec:lfnt}

The previous sections defined an abstract CDE state $L_t$ and a dynamic
maintenance algorithm that restores it after each epoch. This section separates
two further layers. First, TrieGS compiles $L_t$ into an access representation
$C_t$ with primary, secondary, and tertiary indexes over dictionary identifiers.
Second, LFNT and the root-vector protocol provide a physical snapshot $S_t$ that
realizes $C_t$ for concurrent readers.

\subsection{Access-Structure Accounting}

For a fixed query and fixed access pattern, an access representation stores the
relations and indexes needed by the local CDE fetch operations. We distinguish:
\begin{description}[leftmargin=*]
  \item[Primary indexes] lookup the exact payload of a view or projection tuple;
  \item[Secondary indexes] enumerate compatible extensions during top-down
  snapshot enumeration;
  \item[Tertiary indexes] enumerate parent bindings affected by a changed guard
  or projection during bottom-up maintenance.
\end{description}

The core \TrieGS theorem chooses the fully indexed constant-delay point of this
design space. This is not the only possible point: compressed representations
of conjunctive-query results can trade space for larger access delay~\cite{deep2018compressed}.
We keep such compressed variants outside the theorem and account for the fully
indexed state explicitly.

Let $M_t$ be the support size of the maintained logical views and projections,
$I_t$ the support and overhead of the access indexes, and $R_t$ the old physical
versions retained by active readers. The snapshot-state space is reported as
\[
  S_t = \OQ(M_t+I_t+R_t),
\]
where the constant hides fixed query arities and index multiplicity. This is a
state-accounting identity rather than a claim that $M_t$ or $I_t$ is linear in
the input for all workloads.

\begin{proposition}[Access-structure contract]
\label{prop:access-accounting}
Let $C_t$ be a lossless access representation of the logical state $L_t$ for a
fixed query $Q$. If every local primary, secondary, and tertiary access selected
by the compiled plan has delay at most $\delta_Q$, then enumeration over $C_t$
emits the same weighted answer pairs as $L_t$ with delay $\OQ(\delta_Q)$. In the
fully indexed core configuration, $\delta_Q=\OQ(1)$.
\end{proposition}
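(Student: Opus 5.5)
The plan is to reduce the proposition to the two logical results already established, Lemma~\ref{lem:logical-enumeration-correctness} and Lemma~\ref{lem:logical-constant-delay}, by treating $C_t$ as an implementation of the abstract local access contract over $L_t$. I would first fix what ``lossless'' means. There should be an abstraction map $\alpha$ from $C_t$ to $L_t$ such that three things hold. A primary lookup of $R(\bar a)$ in $C_t$ returns $\getM_t(R(\bar a))$ in $L_t$. A secondary iterator for an interface binding $\bar b$ returns exactly the support tuples of $C_u$ compatible with $\bar b$, each once and with zero-count tombstones skipped. A tertiary iterator returns exactly the set $\idx{G,\bar I_i}(\bar b)$ of Equation~\ref{eq:guard-interface-index}. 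Dictionary identifiers are handled by Lemma~\ref{lem:dictionary-refinement}: since $\iota$ is injective, equality of keys in $C_t$ coincides with equality of terms in $L_t$. So $\alpha$ can be taken componentwise as $\iota^{-1}$ on keys and the identity on payloads.

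\textbf{Correctness.} The enumerator of Algorithm~\ref{alg:enumerate} touches $L_t$ only through $\mathsf{fetchFirst}_u$, $\mathsf{fetchNext}_u$ and payload lookups. I would therefore argue by simulation over the sequence of cursor operations. Each operation executed on $C_t$ returns the $\iota$-image of the result the same operation would return on $L_t$. By induction on the steps of the cursor discipline, the cursor configurations on $C_t$ and on $L_t$ correspond under $\alpha$ at every step, and so do the emitted pairs. Lemma~\ref{lem:logical-enumeration-correctness} then shows that the emitted pairs are exactly the nonzero $(\bar a,\mu_t^Q(\bar a))$. For projected queries the same argument applies to the scan of $\mathsf{Ans}_Q$. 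Tertiary indexes play no role in enumeration. I would still include them in the contract so that the maintenance of Theorem~\ref{thm:dynamic-multiset-maintenance} can use the same $C_t$.

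\textbf{Delay.} I would reuse the counting in the proof of Lemma~\ref{lem:logical-constant-delay}. Before the first output and between any two consecutive outputs, the enumerator performs at most $k_Q$ local operations, where $k_Q$ depends only on the number of witness nodes, interfaces and probes in the compiled plan. Viability counts guarantee that no operation descends into a dead subtree. Hence each gap costs at most $k_Q\cdot\delta_Q$ plus $\OQ(1)$ bookkeeping per cursor, which is $\OQ(\delta_Q)$. The final claim then follows by instantiating $\delta_Q=\OQ(1)$ from Assumption~\ref{ass:indexed-extension-access}.

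\textbf{Main obstacle.} The hard part is making sure the local delay bound $\delta_Q$ really covers tombstone skipping. If zero-valued Gmas that are awaiting reclamation sit between live entries, a single $\mathsf{fetchNext}$ might scan an unbounded run of dead entries. My first approach would be to read ``delay at most $\delta_Q$'' in the hypothesis as delay measured between \emph{live} compatible tuples, which is what the local access contract already demands. The proposition then simply inherits this guarantee, and any cost of tombstone skipping is charged to the backend's $T_{\mathsf{fetch}}$. A second approach, if needed, is to show that the fully indexed LFNT removes tombstones from secondary iterators eagerly. I would also point out explicitly that the per-step simulation needs no assumption about concurrency, because $C_t$ is a fixed, immutable state here.
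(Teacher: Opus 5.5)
Your proposal is correct and follows the paper's argument. Losslessness makes every local relation and payload used by the abstract enumerator available in $C_t$, the enumerator performs only a query-bounded number of local accesses between outputs, and replacing each access by a $\delta_Q$-delay operation gives $\OQ(\delta_Q)$ delay with unchanged payloads. Your abstraction map via $\iota^{-1}$ and the step-by-step simulation spell out what the paper states in one line, and your reading of $\delta_Q$ as delay between live tuples matches the local access contract's tombstone-skipping requirement.
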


\begin{proof}
Losslessness ensures that every abstract local relation and payload needed by
Theorem~\ref{thm:abstract-cde-enumeration} and Lemma~\ref{lem:logical-enumeration-correctness}
is represented in $C_t$. The abstract enumerator performs only a fixed number of
local accesses between outputs. Replacing each abstract access by an operation
of delay $\delta_Q$ gives delay $\OQ(\delta_Q)$ and preserves the emitted
payloads.
\end{proof}

\subsection{Dictionary-Coded LFNT Keys}

By Lemma~\ref{lem:dictionary-refinement}, RDF terms and labels may be replaced
by stable dictionary identifiers without changing query semantics. Therefore the
hot physical representation is integer-centric. For a relation $R(a_1,\ldots,a_k)$
and fixed attribute order $\pi$, its wordified key is
\[
  \mathsf{Wrd}_{R,\pi}(R(a_1,\ldots,a_k))
  =(\mathsf{id}(R),\iota(a_{\pi(1)}),\ldots,\iota(a_{\pi(k)})).
\]
The map is injective, so a Gma operation is a dictionary operation on the
underlying trie. Different access orders are represented by a fixed set of
primary, secondary, and tertiary LFNT indexes chosen at query compilation. A
high-performance implementation should store these keys as primitive fixed-arity
integer tuples, not as boxed RDF terms or comparator-heavy object arrays.

\paragraph*{Running example continued.}
For the view assertion
$V_1(\mathsf{tweet1},\mathsf{tweet2})=2$, the natural attribute order produces
\[
  \mathsf{Wrd}(V_1(\mathsf{tweet1},\mathsf{tweet2}))
  =(\mathsf{id}(V_1),\iota(\mathsf{tweet1}),\iota(\mathsf{tweet2})).
\]
A secondary index may use the order
$(\mathsf{id}(V_1),\iota(\mathsf{tweet2}),\iota(\mathsf{tweet1}))$ so that
enumeration starting from the root binding $z=\mathsf{tweet2}$ can find
compatible $v$ values directly. Figure~\ref{fig:lfnt-physical} separates this
logical wordification from the copy-on-write versioning mechanism: only the
changed path is replaced, while unaffected subtries and the older payload remain
reachable from a pinned snapshot.

\begin{figure*}[t]
  \centering
  \resizebox{\textwidth}{!}{\TrieGSLFNTPhysical}
  \caption{Logical-to-physical LFNT mapping for the running example. A fixed
  query requires only a fixed number of word components, but this does not
  imply a one-memory-access or one-CAS lookup. In the copy-on-write realization,
  nodes on the changed path are replaced while unaffected subtries remain
  shared; the old path and multiplicity payload remain visible to pinned
  snapshots.}
  \label{fig:lfnt-physical}
\end{figure*}

\subsection{Relation-Level Guarantees from C-trie}

TrieGS uses the C-trie/cache-trie results only at the level of one maintained
relation or access index. The C-trie proof supplies linearizable and lock-free
exact-key dictionary operations and non-blocking snapshots; the cache-trie
analysis supplies expected or amortized constant-time basic dictionary
operations under its hashing assumptions~\cite{DBLP:conf/ppopp/ProkopecBBO12,
DBLP:conf/ppopp/Prokopec18}. The removing and compaction result supplies the
relation-level mechanism for deleting zero-payload keys without leaving
unbounded dead structure~\cite{DBLP:conf/europar/Prokopec18}. These 
results do not have direct connections to free-connex enumeration, affected-binding completeness,
secondary/tertiary extension delay, or query-level snapshot consistency.

\begin{lemma}[Relation-level LFNT refinement]
\label{lem:gma-instantiation}
Assume each LFNT index satisfies the underlying C-trie/cache-trie contracts and
wordification is injective. Then exact-key lookup, insertion, deletion, and
payload replacement on one abstract index refine linearizable LFNT dictionary
operations with cost $T_{\mathsf{dict}}$. Under the cache-trie assumptions,
$T_{\mathsf{dict}}=\OQ(1)$. A relation snapshot exposes one immutable logical
version of that index.
\end{lemma}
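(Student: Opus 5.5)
The proof is a refinement argument. The abstract index is a finite-support map from tuples to $\mathbb N$, and its concrete counterpart is an LFNT dictionary keyed by wordified tuples. First I fix the abstraction function. For a fixed relation $R$ and attribute order $\pi$, the abstract index value of $\bar a$ is the payload stored under key $\mathsf{Wrd}_{R,\pi}(R(\bar a))$ if that key is present with a nonzero payload. It is $0$ if the key is absent or holds a zero-payload tombstone.

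Injectivity of $\mathsf{Wrd}$ comes from injectivity of $\iota$ (Lemma~\ref{lem:dictionary-refinement}) together with the distinct relation identifier and fixed arity. It makes this abstraction well defined: distinct abstract tuples never collide on one key, so every concrete dictionary state denotes exactly one abstract Gma map.

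Next I map operations one by one:
\begin{itemize}
\item $\getM(R(\bar a))$ becomes a C-trie lookup of the wordified key, with an absent key read as $0$.
\item $\setM$ to a positive value becomes insert-or-replace of the immutable payload.
\item $\setM$ to $0$ becomes C-trie removal, or a tombstone write later compacted by the removal/compaction result.
\item Payload replacement writes a fresh immutable payload object rather than mutating in place.
\end{itemize}
Since each concrete operation is a single linearizable C-trie operation, I take its linearization point as the abstract operation's point. A routine check shows that the abstraction of the post-state equals the abstract post-state. Linearizability of the abstract index then follows from that of the C-trie.

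For cost, each abstract operation performs one wordification, which costs $\OQ(1)$ because arity is fixed by $Q$, plus one dictionary operation. That gives $\OQ(1)+T_{\mathsf{dict}}$, and the cache-trie expected/amortized bound yields $T_{\mathsf{dict}}=\OQ(1)$.

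For the snapshot clause, I invoke the C-trie non-blocking snapshot: it returns a root whose reachable structure is immutable and equals the dictionary state at the snapshot's linearization point. Applying the abstraction function to that frozen structure gives one logical version of the index. Payloads are immutable objects and the abstraction skips zero-valued tombstones, so later concurrent updates cannot change what the snapshot denotes.

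The main obstacle is the zero-payload handling. Removal and compaction run concurrently with snapshots, so I must argue that they neither change the abstract value seen by a pinned snapshot nor break linearizability of a concurrent $\getM$ at that key. My approach is to treat tombstone-to-absent transitions as abstract no-ops, since both denote multiplicity $0$. I would then cite the Prokopec removal result to show that compaction only rewrites nodes unreachable from older snapshot roots.
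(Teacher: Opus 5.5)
Your proposal is correct and follows the paper's route. Injective wordification sends each abstract Gma to exactly one physical key, so the assumed C-trie/cache-trie linearizability, cost, and snapshot contracts transfer operation by operation, and the resulting snapshot is relation-local; your abstraction function and linearization-point transfer spell out what the paper's four-sentence proof leaves implicit, and your tombstone/compaction concern is already covered by the assumed C-trie snapshot contract, since the paper cites the removal/compaction result only for bounding dead structure, not for snapshot safety.
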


\begin{proof}
Lemma~\ref{lem:dictionary-refinement} and wordification preserve the equality
classes of RDF terms and view tuples. Therefore each abstract Gma maps to one
physical integer key. The underlying C-trie/cache-trie theorems apply to exact
operations on that key. Snapshot immutability is relation-local: it freezes one
index root, not an entire query state.
\end{proof}

The duplicate-free delay guarantee for compatible-extension scans is a separate
TrieGS access-path obligation stated in Assumption~\ref{ass:indexed-extension-access}.
In the notation of Section~\ref{sec:preliminaries}, cache-trie results justify
$T_{\mathsf{dict}}=\OQ(1)$ for exact-key operations, while TrieGS's compiled
secondary and tertiary indexes must also provide $T_{\mathsf{fetch}}=\OQ(1)$ in
the fully indexed configuration. Hence update work costs
\[
  \OQ\bigl(A_t\cdot(T_{\mathsf{dict}}+T_{\mathsf{fetch}})\bigr),
\]
and enumeration delay is $\OQ(T_{\mathsf{fetch}})$; the main theorem
instantiates both local costs as $\OQ(1)$.

\subsection{Memory-Hierarchy Trade-off}
\label{sec:memory-hierarchy}

The word ``cache'' in Cache-Trie denotes an auxiliary pointer array that
shortcuts logical trie traversal; its expected operation bound is not a claim
that every access hits a hardware cache. A pointer-based versioned LFNT can
incur dependent pointer dereferences, capacity and TLB misses, and a larger
working set when old versions remain pinned. The logical $\OQ(1)$ bounds in
this paper count dictionary and iterator steps, not cache misses, stalled
cycles, coherence traffic, or remote NUMA accesses.

Dictionary encoding reduces object overhead but does not by itself guarantee
hardware locality. Arrival-order identifiers may be poorly correlated with join
access patterns. The implementation therefore separates generic integer keys
from layout-aware optimizations such as arity-specialized primitive keys, packed
child arrays, packed leaf blocks, local dense remapping, and NUMA-prefix
ownership. Each optimization must improve the measured latency--memory--freshness
frontier against integer-key MVCC and persistent-map baselines.

LFNT makes a different systems trade-off. Published nodes and payloads are
immutable, so a reader does not contend on or invalidate cache lines that update
workers are modifying for a later epoch. This can reduce reader--writer lock
handoffs, ownership transfers, and coherence stalls even when the persistent
layout causes more capacity misses than a flat mutable map. Query factorization,
minimal index selection, prefix sharing, and packed leaf blocks may reduce or
amortize the maintained working set, but no theorem guarantees that LFNT has
better physical locality. Purely functional graph structures similarly require
compressed layouts to make lightweight snapshots practical, while parallel trie
joins remain sensitive to skew and index contention~\cite{dhulipala2019aspen,
wu2025honeycomb}.

We therefore treat locality as a falsifiable architectural hypothesis. The
implementation requirements are stated in Section~\ref{sec:implementation}, and
Section~\ref{sec:evaluation} compares capacity, coherence, TLB, allocation, and
NUMA costs against MVCC, persistent-map, and lock-based alternatives. LFNT is
not retained as the principal mechanism if those costs fail to buy a better
latency--memory--freshness frontier.

\subsection{Versioned Payloads}

A snapshot must freeze multiplicities as well as trie topology. Therefore an
update never overwrites a payload reachable from a published root. It either
copies the changed payload/path or installs an epoch-versioned payload selected
by the snapshot time.

\begin{lemma}[Payload snapshot safety]
\label{lem:payload-safety}
An enumerator reading a published relation root observes the multiplicity
values of that relation version even if later epochs update the same logical
Gmas.
\end{lemma}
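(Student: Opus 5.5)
The plan is to argue by reachability plus the copy-on-write discipline stated just before the lemma: an update never overwrites a payload reachable from a published root. Fix a relation version with published root $r^t_R$, and let $\mathcal N_t$ be the set of trie nodes and payload cells reachable from $r^t_R$ at the moment of publication. The observable content of this version is the map from wordified keys to multiplicities. I would show that $\mathcal N_t$ is never mutated afterwards. It then follows that every traversal from $r^t_R$ returns the same key--payload map, whatever later epochs do.

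\textbf{Step 1 (later epochs write only fresh nodes).} Consider an epoch $t'\succ t$ and one of its \textsc{SetExact} operations on a Gma $R(\bar a)$. In the copy-on-write realization, the operation allocates a fresh payload cell and fresh copies of the nodes on the path from the root to the key $\mathsf{Wrd}(R(\bar a))$. The new path is linked only into the working root $r^{t'}_R$; unaffected subtries are shared by pointer and are not written (Figure~\ref{fig:lfnt-physical}). In the epoch-versioned alternative, the operation appends a new $(t',d)$ entry to a version chain. A reader at $t$ selects the newest entry with time at most $t$, and that entry is still present and unchanged. In either case the writes touch only locations outside $\mathcal N_t$. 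A chain append is the one exception, and it does not change the value that a time-$t$ selection returns.

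\textbf{Step 2 (invariance of the read).} I would use induction over the sequence of physical writes, linearized using Lemma~\ref{lem:gma-instantiation}. The induction hypothesis is that the subgraph reachable from $r^t_R$ and the payload selected for snapshot time $t$ are unchanged. Step~1 gives the inductive step. Zero-payload deletions and compaction also mutate the structure, so I would restrict them the same way: they act on copies, or they are deferred until $t$ is unpinned. This is the snapshot-safe reclamation condition. Because the enumerator pinned $r^t_R$ (Algorithm~\ref{alg:enumerate}), no node in $\mathcal N_t$ is freed or reused while it is reading.

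\textbf{Main obstacle.} The hard part is the concurrency boundary. A reader may load $r^t_R$ while a writer is still building $r^{t'}_R$, so I must show that no published node is ever written in place, including the C-trie's internal CAS on indirection nodes. My first approach is to invoke the C-trie non-blocking snapshot guarantee. That guarantee states that after a snapshot is taken, the generation marking forces writers to copy rather than modify shared nodes. Combining it with the immutable-payload rule, so that payloads are replaced and never overwritten, closes the argument. The conclusion is relation-local; query-level consistency across relations is left to the root-vector protocol.
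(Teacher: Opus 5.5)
Your proposal is correct and takes essentially the same route as the paper, whose proof splits into the same two cases: under copy-on-write, later values are reachable only from later roots, and under a version chain, the pinned snapshot time selects the earlier visible value. Your reachable-set invariant, the deferral of compaction and reclamation until unpinning, and the appeal to C-trie generation-based copying for internal CAS are extra rigor that the paper's short proof leaves implicit; the paper handles reclamation separately as the safe-reclamation premise of Theorem~\ref{thm:core}.
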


\begin{proof}
With copy-on-write, later values are reachable only from later roots. With a
version chain, the pinned snapshot time selects the earlier visible value. In
both cases, later writes cannot change the payload observed through an already
published root.
\end{proof}

\subsection{Atomic Root-Vector Publication}

Let $\mathcal{R}_Q=\{R_1,\ldots,R_m\}$ be the relations and indexes in the
compiled plan. A completed physical snapshot is represented by
\[
  \RootVec_t=\langle t,r^t_{R_1},\ldots,r^t_{R_m}\rangle.
\]
An atomic pointer $\mathsf{PublishedRootVec}$ refers to the latest completed
vector. Update workers construct a working version for epoch $t$; the publisher
installs $\RootVec_t$ with release semantics only after
Algorithm~\ref{alg:process-epoch} finishes. Constructing the vector may touch
$|\mathcal R_Q|$ roots, but this is $\OQ(1)$ because the query is fixed; the
visibility point is one atomic pointer update. An enumerator performs one
acquire load of the pointer and uses only roots from the loaded vector.

Figure~\ref{fig:root-vector-publication} makes the distinction between
relation-level and query-level snapshots concrete. Enumerator $E_1$ acquires
$\RootVec_{t_1}$ and observes the answer count two. Update workers may then
construct and publish $\RootVec_{t_2}$, whose reply and answer counts are one,
without changing any node reachable from the older pinned vector. The
enumerator never assembles a snapshot by reading several current roots
independently.

\begin{figure*}[t]
  \centering
  \resizebox{\textwidth}{!}{\TrieGSSnapshotPublication}
  \caption{Query-level snapshot publication for the running example. A root
  vector groups all relation and index roots for one completed logical time.
  Enumerator $E_1$ remains on $t_1$ while update workers publish $t_2$; this
  prevents a mixture such as $V_1@t_2$, $V_2@t_1$, and $\mathsf{Ans}@t_2$.}
  \label{fig:root-vector-publication}
\end{figure*}

\begin{lemma}[No fractured snapshots]
\label{lem:no-fractured-snapshot}
An enumeration job that loads $\RootVec_t$ observes every maintained relation
and index from the same completed physical snapshot $S_t$, which represents the
same access state $C_t$ and logical view state $L_t$.
\end{lemma}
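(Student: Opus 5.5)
The plan is to argue in three steps: what the loaded vector contains, why its contents cannot change while the job runs, and why those contents represent $C_t$ and $L_t$.

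\textbf{Step 1: the loaded vector is a completed one.} Here I use the publication protocol. The publisher stores $\RootVec_t$ into $\mathsf{PublishedRootVec}$ with release semantics, and only after Algorithm~\ref{alg:process-epoch} has marked epoch $t$ complete. Assumption~\ref{ass:serializable-epochs} says that by then every induced Gma change for $t$ has been installed in the working version. The enumerator performs one acquire load of the pointer. Release/acquire pairing then gives a happens-before edge from every write to the working version of epoch $t$ to every read the enumerator makes through roots in the loaded vector. Hence each $r^t_{R_i}$ it dereferences points to a fully initialized version holding the final epoch-$t$ state of $R_i$. This version is not a partially propagated one.

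\textbf{Step 2: the snapshot is stable for the whole job.} The enumerator obtains all of its roots from this single loaded vector. By Algorithm~\ref{alg:enumerate} and \textsc{SnapshotFromRootVector}, it never rereads live relation roots. So every relation and index it touches has its root in $\RootVec_t$. I also need to rule out later epochs $t'\succ t$ modifying anything reachable from those roots. Lemma~\ref{lem:gma-instantiation} gives relation-local immutability of the topology reachable from a frozen root. Lemma~\ref{lem:payload-safety} shows that the multiplicity payloads are frozen as well. Pinning keeps the reclamation mechanism from freeing nodes reachable from $\RootVec_t$ until \textsc{Unpin}, so no reused memory can appear under these roots. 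Together these give: the set of Gmas observable through the $m$ roots is exactly the collection of per-relation versions at epoch $t$, which I call $S_t$.

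\textbf{Step 3: $S_t$ represents $C_t$ and $L_t$.} I define $C_t$ as the compiled access representation whose per-index contents are given by wordified keys. Lemma~\ref{lem:dictionary-refinement} and injectivity of $\mathsf{Wrd}$ give a bijection between the physical keys in each $r^t_{R_i}$ and the abstract Gmas of $R_i$ in $C_t$, with equal payloads. So $S_t$ realizes $C_t$ relation by relation, and because all $m$ relations come from the same $t$, it realizes $C_t$ jointly. Since $C_t$ is the lossless compilation of $L_t$, $S_t$ also represents $L_t$. Lemma~\ref{lem:update-correctness} is not needed for this statement; it only adds that $L_t$ satisfies the view invariant.

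\textbf{Main obstacle.} I expect Step 1 to be the hardest. Specifically, the claim that no component root in $\RootVec_t$ can belong to an intermediate working state requires that the vector be assembled only after the epoch barrier, including the alias fan-out for self-joins. I will make this explicit by noting that the publisher constructs the vector after the ``mark epoch $t$ complete'' step. I will also note that working-version roots are never themselves stored in $\mathsf{PublishedRootVec}$ before that point. With that in place, the rest of the argument is a composition of the earlier lemmas.
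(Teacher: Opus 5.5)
Your proposal is correct and takes essentially the same route as the paper's proof. The vector is assembled only after epoch $t$ completes and becomes visible through one atomic pointer store; the job reads that pointer once and takes every root from it; Lemma~\ref{lem:payload-safety} keeps later epochs from altering reachable state; and $S_t$ then realizes $C_t$ and $L_t$. You make explicit what the paper leaves implicit: the release/acquire happens-before edge, topology immutability and pinning against reclamation, and a dictionary/wordification correspondence in place of the paper's ``by construction.''

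One small tightening for Step~3: injectivity of $\mathsf{Wrd}$ only shows that the encoding is faithful. To conclude that the keys under $r^t_{R_i}$ are exactly the encoded Gmas of $C_t$, you also need the operation-refinement clause of Lemma~\ref{lem:gma-instantiation}, applied to a working version that starts from the previously published state, and you should read zero-payload tombstones as missing Gmas.
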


\begin{proof}
Each component root denotes a relation-level version. The vector is assembled
only after all work for $t$ has completed and is published by one atomic pointer
update. The enumerator reads that pointer once. Lemma~\ref{lem:payload-safety}
prevents later epochs from changing reachable payloads. Hence all reads belong
to $S_t$, and by construction $S_t$ realizes $C_t$ for the completed logical
state $L_t$.
\end{proof}

\paragraph*{Linearization points.}
Atomic publication of $\RootVec_t$ is the visibility point of update epoch $t$.
The enumerator's atomic load of $\mathsf{PublishedRootVec}$ is its snapshot
linearization point. If an update is not yet published, the job legitimately
enumerates an older state rather than waiting for the update.

\subsection{Progress and Reclamation}

Individual LFNT updates inherit lock-free progress from the selected concurrent
trie. The complete maintenance pipeline may include queues and an ordered epoch
publication barrier, so we do not claim that an entire epoch is wait-free.
Enumeration is non-blocking with respect to maintenance: after loading a vector,
it traverses immutable versions without locking update paths.

Relation-level compaction can remove zero-payload keys, but query-level
reclamation is stricter. Old roots and payloads are reclaimed only after no
active enumerator can reach them. Epoch-based reclamation, hazard pointers, or
reference counting are all compatible with the proof. Slow readers can therefore
create reclamation debt; this is an explicit cost, not hidden by the asymptotic
delay theorem.

\nop{
\subsection{Assumption--Obligation Map}

Table~\ref{tab:obligation-map} identifies which theorem premises are proved in
this paper, inherited from prior data-structure work, or left as explicit
runtime obligations.

\begin{table*}[t]
\centering
\caption{Assumptions and their discharge points.}
\label{tab:obligation-map}
\begin{tabular}{p{0.28\textwidth}p{0.38\textwidth}p{0.22\textwidth}}
\toprule
Obligation & Discharged by & Status \\
\midrule
Abstract CDE state & Theorem~\ref{thm:abstract-cde-enumeration} & Standard CDE layer \\
Dictionary encoding & Lemma~\ref{lem:dictionary-refinement} & Proved refinement \\
Epoch order & Assumption~\ref{ass:serializable-epochs}; Section~\ref{sec:implementation} & Runtime contract \\
Guard-normal rules & Lemma~\ref{lem:guard-normalization} & Compiler construction \\
Affected bindings & Lemma~\ref{lem:affected-bindings} & Proved \\
View maintenance & Theorem~\ref{thm:dynamic-multiset-maintenance} & Proved \\
Access representation & Proposition~\ref{prop:access-accounting}; Assumption~\ref{ass:indexed-extension-access} & Accounted/contract \\
Exact-key relation operations & Imported C-trie/cache-trie results; Lemma~\ref{lem:gma-instantiation} & Imported/instantiated \\
Secondary/tertiary scans & Assumption~\ref{ass:indexed-extension-access} & TrieGS access contract \\
Payload immutability & Lemma~\ref{lem:payload-safety} & Proved \\
Query snapshot & Lemma~\ref{lem:no-fractured-snapshot} & Proved \\
Full answer iterator & Lemmas~\ref{lem:logical-enumeration-correctness} and~\ref{lem:logical-constant-delay} & Proved \\
Projected answers & Explicit $\mathsf{Ans}_Q$ view & Materialized extension \\
Safe reclamation & Section~\ref{sec:implementation} & Runtime contract \\
\bottomrule
\end{tabular}
\end{table*}
}
\subsection{Main Result}

\begin{theorem}[Snapshot-consistent \TrieGS guarantee]
\label{thm:core}
Let $Q$ be a fixed full acyclic conjunctive graph query evaluated under
multiset semantics over a totally ordered update stream. Under serializable
epoch execution, the relation-level LFNT and indexed-traversal contracts,
snapshot-preserving payloads, atomic root-vector publication, and safe
reclamation, every enumeration job $e$ observes a snapshot for some published
time $\tau(e)$ and emits every nonzero pair
$\bigl(\bar a,\mu_{\tau(e)}^Q(\bar a)\bigr)$ exactly once with $\OQ(1)$ delay.
Later update epochs may proceed concurrently without modifying the pinned
snapshot.
\end{theorem}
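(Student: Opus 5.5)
The proof composes the layered results already established, by induction over published epochs followed by a per-job argument. First I establish that every published root vector realizes a correct logical state. The base case is the initial state $L_{t_0}$, built by preprocessing to satisfy the view invariant; Lemma~\ref{lem:guard-normalization} supplies the guard-normal program and Lemma~\ref{lem:dictionary-refinement} lets me work over dictionary identifiers throughout. For the inductive step, Assumption~\ref{ass:serializable-epochs} reduces the concurrent execution of epoch $t$ to a deterministic sequential run of Algorithm~\ref{alg:process-epoch}. Lemma~\ref{lem:update-correctness}, equivalently Theorem~\ref{thm:dynamic-multiset-maintenance}, then carries the invariant from $L_{\pred(t)}$ to $L_t$. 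Sliding windows need no separate case, since they are already reduced to ordinary multiset updates. Finally, Lemma~\ref{lem:gma-instantiation} shows that the physical LFNT operations on wordified keys refine the abstract $\getM/\setM$ operations. Together these give that $S_t$ realizes $C_t$, which realizes $L_t$, whenever $\RootVec_t$ is published.

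Second, I fix an enumeration job $e$ and take $\tau(e)$ to be the logical time of the vector returned by its single acquire load in Algorithm~\ref{alg:enumerate}; this load is its linearization point. Lemma~\ref{lem:no-fractured-snapshot} gives that every relation and index read by $e$ belongs to $S_{\tau(e)}$. Lemma~\ref{lem:payload-safety} makes these reads immune to later epochs. Safe reclamation, together with the pin held until \textsc{Unpin}, ensures that no node reachable from the pinned vector is freed during the traversal. This step also discharges the final sentence of the theorem: later epochs build new paths and publish new vectors, but never mutate nodes reachable from $\RootVec_{\tau(e)}$.

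Third, I apply the logical results to the frozen state. Since $S_{\tau(e)}$ satisfies the view invariant, Lemma~\ref{lem:logical-enumeration-correctness} shows that the full-query generator emits exactly the nonzero pairs $(\bar a,\mu^Q_{\tau(e)}(\bar a))$. Each pair is emitted once, because for a full query each valuation is its own head tuple, and the cursor discipline of Theorem~\ref{thm:abstract-cde-enumeration} visits each witness choice once. For delay, Lemma~\ref{lem:logical-constant-delay} bounds the abstract work between outputs by a query-dependent number of local accesses. Proposition~\ref{prop:access-accounting}, instantiated via Assumption~\ref{ass:indexed-extension-access} with $\delta_Q=T_{\mathsf{fetch}}=\OQ(1)$ and the cache-trie bound $T_{\mathsf{dict}}=\OQ(1)$, turns that into $\OQ(1)$ physical delay. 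The single acquire, pin, and unpin contribute $\OQ(|\mathcal R_Q|)=\OQ(1)$.

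The main obstacle is the interface between the first two steps. I need the physical snapshot reached by $e$ to be frozen as a whole, not merely per relation: tombstones, zero-payload compaction, and concurrent lock-free updates on the working version must be invisible through the published roots. I would handle this by arguing that the working version of epoch $t$ shares only immutable nodes with $S_{\pred(t)}$, that is, that copy-on-write or version-chain selection is applied to every write. From this, relation-level compaction can act only on nodes unreachable from any pinned vector, which is exactly the safe-reclamation hypothesis. Delay is unaffected by tombstones, since the access contract requires iterators to skip dead tuples within $\OQ(1)$.
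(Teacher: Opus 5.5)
Your proposal is correct and takes essentially the same route as the paper. The paper also proves the theorem by composing Theorem~\ref{thm:abstract-cde-enumeration}, Theorem~\ref{thm:dynamic-multiset-maintenance}, Proposition~\ref{prop:access-accounting} with Assumption~\ref{ass:indexed-extension-access}, Lemma~\ref{lem:gma-instantiation}, Lemma~\ref{lem:no-fractured-snapshot}, Lemmas~\ref{lem:logical-enumeration-correctness} and~\ref{lem:logical-constant-delay}, and safe reclamation; you additionally spell out the epoch induction with its base case and the whole-snapshot immutability argument, which the paper leaves folded into Theorem~\ref{thm:dynamic-multiset-maintenance} and Lemma~\ref{lem:no-fractured-snapshot}.
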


\begin{proof}
Theorem~\ref{thm:abstract-cde-enumeration} gives the abstract constant-delay
support enumerator for the fixed decomposition. Theorem~\ref{thm:dynamic-multiset-maintenance}
restores the counting view invariant at every completed epoch.
Proposition~\ref{prop:access-accounting} and Assumption~\ref{ass:indexed-extension-access}
realize the required local accesses with $\OQ(1)$ delay in the fully indexed
configuration. Lemma~\ref{lem:gma-instantiation} instantiates each exact-key
relation operation with the imported C-trie/cache-trie guarantees.
Lemma~\ref{lem:no-fractured-snapshot} shows that an enumerator observes one
completed physical snapshot of this access state. Lemma~\ref{lem:logical-enumeration-correctness}
gives weighted-answer correctness, and Lemma~\ref{lem:logical-constant-delay}
gives the delay bound. Enumeration does not acquire locks on mutable update
paths, so later maintenance can continue; safe reclamation retains old versions
until the job unpins them.
\end{proof}

\begin{corollary}[Projected free-connex queries]
\label{cor:projected-materialized}
Let $Q$ be a fixed projected free-connex acyclic query for which \TrieGS
materializes the count-aware relation $\mathsf{Ans}_Q$. Every enumeration job
has the same snapshot-consistency guarantee and scans the nonzero answer pairs
with $\OQ(1)$ delay. The update and space bounds additionally include the
changed and stored tuples of $\mathsf{Ans}_Q$.
\end{corollary}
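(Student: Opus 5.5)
The plan is to reduce the corollary to the proof of Theorem~\ref{thm:core}. The only change is to swap the factorized full-query generator for a scan of the materialized relation $\mathsf{Ans}_Q$.

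First I will extend the compiled plan. Lemma~\ref{lem:guard-normalization} applies to the body atoms after adding the head hyperedge; free-connexity keeps the augmented hypergraph acyclic. I then add $\mathsf{Ans}_Q(\bar x)$ as a final count-aware projection view with multiplicity $\mu_t^Q(\bar a)$, together with its primary and secondary (scan) indexes. These go into $\mathcal{R}_Q$, so they also become components of $\RootVec_t$.

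Because $\mathsf{Ans}_Q$ is a projection view of the kind in Equation~\ref{eq:parent-projection}, it is linear in its input. The projection branch of Algorithm~\ref{alg:process-epoch} therefore updates it by signed deltas. Hence Lemma~\ref{lem:update-correctness} and Theorem~\ref{thm:dynamic-multiset-maintenance} hold without modification, and the view invariant now also covers $\mathsf{Ans}_Q$. The term $A_t$ simply counts the changed $\mathsf{Ans}_Q$ Gmas, and $M_t$ counts its support. This gives the added update and space terms in the statement.

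Snapshot consistency follows verbatim from Lemma~\ref{lem:payload-safety} and Lemma~\ref{lem:no-fractured-snapshot}. The enumerator acquires one vector, and the $\mathsf{Ans}_Q$ root is one of its components. Correctness of the emitted pairs is the projected clause of Lemma~\ref{lem:logical-enumeration-correctness}. The $\OQ(1)$ delay is the projected clause of Lemma~\ref{lem:logical-constant-delay}, combined with Proposition~\ref{prop:access-accounting}.

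The main obstacle is that delay clause. A support scan must not spend unbounded time skipping zero-payload tombstones inside a pinned version. I would first try to require that \textsc{SetExact} physically remove a key in the working version whenever its new value is $0$. Since versions are copy-on-write, this removal does not affect older pinned roots, and every published version of $\mathsf{Ans}_Q$ then contains only positive entries. If that is too strong for the backend, I would instead fold the skip cost into Assumption~\ref{ass:indexed-extension-access} for the $\mathsf{Ans}_Q$ secondary index. The delay then remains $\OQ(T_{\mathsf{fetch}})=\OQ(1)$ in the fully indexed configuration.
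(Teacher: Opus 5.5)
Your proposal follows the paper's own argument, with more detail. The paper likewise treats $\mathsf{Ans}_Q$ as the final count-aware projection, so the maintenance theorem extends the view invariant to it. It has $\RootVec_t$ snapshot the answer relation together with the rest of the plan, and it gets $\OQ(1)$ delay from a support index that advances directly between nonzero tuples. Your one real addition is the explicit treatment of zero-payload tombstones, either by eager removal in the copy-on-write working version or by folding the skip cost into Assumption~\ref{ass:indexed-extension-access}, which makes precise a point the paper simply asserts.
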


\begin{proof}
The update proof treats $\mathsf{Ans}_Q$ as the final count-aware projection.
The view invariant therefore gives each projected tuple its exact multiset
count. Root-vector publication snapshots the answer relation with the remaining
plan, and its support index advances directly between nonzero tuples.
\end{proof}

\begin{corollary}[Sliding-window queries]
For either construction above over active window state $G_t^W$, the
corresponding conclusion holds after replacing $G_t$ by $G_t^W$.
\end{corollary}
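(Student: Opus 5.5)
The sliding-window corollary is a reduction to the two preceding results; it needs no new enumeration or maintenance argument. The plan is to show that the window manager turns the windowed stream into an ordinary totally ordered multiset update stream $\mathcal{U}^W$ whose cumulative graph at every published time $t$ equals $G_t^W$. Once that is established, we apply Theorem~\ref{thm:core} (full queries) or Corollary~\ref{cor:projected-materialized} (projected free-connex queries) to $\mathcal{U}^W$ verbatim.

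\textbf{Step 1: construct $\mathcal{U}^W$.} Each event $(\mathsf{id},\ell(a,b),\tau)$ contributes an insertion $(+1,\ell(a,b))$ at logical time $\tau$ and a deletion $(-1,\ell(a,b))$ at logical time $\tau+W$. The batch $U_t$ collects all such updates with time $t$, so the batches are finite and totally ordered.

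We then verify by induction over published times that
\[
  \mult_{G_t}(\ell(a,b)) \;=\; \bigl|\{\text{events } (\mathsf{id},\ell(a,b),\tau) \mid \tau\in(t-W,t]\}\bigr| \;=\; \mult_{G_t^W}(\ell(a,b)).
\]
The count on the left is the number of insertions with time $\le t$ minus the number of deletions with time $\le t$. That difference is exactly the number of events with $\tau\le t$ and $\tau+W>t$, i.e.\ with $\tau\in(t-W,t]$.

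\textbf{Step 2: deletions are valid.} Validity is the stream-model assumption required by the theorem. Every deletion is paired with an earlier insertion of the same event. Distinct identifiers ensure that one expiration removes exactly one copy, so no assertion count ever becomes negative.

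In RDF-compatible mode, the window manager keeps $c_t(\tau)$ and emits a Boolean base update only when $c_t$ crosses zero. This again yields a valid set-valued update stream whose state equals the Boolean support of $G_t^W$.

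\textbf{Step 3: apply the earlier results.} With $G_t=G_t^W$ for every published $t$, all hypotheses of Theorem~\ref{thm:core} and Corollary~\ref{cor:projected-materialized} hold unchanged. These include serializable epochs, the LFNT contracts, payload safety, root-vector publication, and reclamation, none of which depend on the source of updates. Their conclusions hold with $\mu^Q_{\tau(e)}$ evaluated over $G^W_{\tau(e)}$.

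\textbf{Main obstacle.} The delicate point is the boundary of the half-open window $(t-W,t]$ when an insertion and an expiration share a logical time. Both must fall in the same epoch, and publication of that time must reflect both. The ordering of updates within the epoch does not matter, because Assumption~\ref{ass:serializable-epochs} only requires some sequential order and the final multiplicities are order-independent.

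For event-time processing, we rely on the lateness policy from Section~\ref{sec:preliminaries}: time $t$ is not published until no further event with $\tau\le t$ can arrive. This makes the equality in Step 1 hold at each published time, not merely eventually.
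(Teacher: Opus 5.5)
Your proposal is correct and follows essentially the paper's approach. The paper's proof is the one-line observation that the window manager turns arrivals and expirations into ordinary multiset insertions and deletions, so Theorem~\ref{thm:core} and Corollary~\ref{cor:projected-materialized} apply with no other proof obligation changing. You additionally spell out what that line leaves implicit: the counting identity showing that the cumulative graph at each published time equals $G_t^W$, and the fact that expiration deletions are always valid.
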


\begin{proof}
The window manager reduces arrivals and expirations to multiset insertions and
deletions. No other proof obligation changes.
\end{proof}

\paragraph*{Arithmetic model.}
The operation bounds assume counts fit in a machine word. Arbitrary-precision
multiplicities add a cost depending on their bit length. Silent integer
overflow violates the semantic theorem.

\section{Implementation and Practical Extensions}
\label{sec:implementation}

\subsection{Execution Architecture}

The runtime has four roles. Update workers process affected Gmas in a work
queue; a publisher installs completed root vectors in logical-time order;
enumeration workers pin published vectors; and a reclaimer frees versions no
longer reachable by any reader. Work stealing may balance independent update
branches, following the general principle of fine-grained multicore scheduling
used in parallel query engines~\cite{DBLP:conf/sigmod/LeisBK014}. Scheduling is
an implementation choice rather than part of the CDE theorem.

For self-joins or replicated access orders, each base update carries one epoch
identifier to all logical aliases. In the running example, a
$\mathsf{hasCreator}$ update is routed to both
$\mathsf{hasCreator}_v$ and $\mathsf{hasCreator}_z$. Publication waits until
every alias, projection, secondary index, and dependent view has processed that
epoch. Later epochs may be prepared in a pipeline, but cannot be published
before their dependencies.

\paragraph*{Claim boundaries.}
The implementation separates four levels of claims.
\begin{enumerate}[leftmargin=*]
  \item \emph{Epoch-versioned root publication} is the theorem mechanism.
  Update workers construct later versions without modifying nodes reachable
  from a published state. Once epoch $t$ completes, the publisher installs one
  immutable root vector with a single atomic release store; no hardware
  transaction is required.

  \item \emph{Dictionary-coded access indexes} are a semantic-preserving physical
  refinement. RDF terms and relation symbols are mapped to stable integer
  identifiers, so hot paths use primitive fixed-arity keys. The theorem assumes
  only injectivity; locality-aware identifier assignment is an evaluated
  optimization.

  \item \emph{Packed LFNT indexes} are a physical design. A fixed query gives
  query-bounded logical key depth, while the selected cache-trie implementation
  supplies its expected or amortized exact-key operation bounds. This does not
  imply one-memory-access or one-CAS traversal, and cache or NUMA locality is
  evaluated rather than assumed.

  \item \emph{Shared-prefix maintenance} is an optional extension. Exactly
  equivalent canonical subplans may share one maintained view, eliminating
  duplicate logical maintenance. Any reduction in cache misses, branch
  mispredictions, coherence traffic, or remote-memory accesses is an empirical
  hypothesis tested against unshared and non-coalesced alternatives.
\end{enumerate}

\subsection{Morselized Epoch Scheduling}
\label{sec:morselized-scheduling}

TrieGS adapts the scheduling principle of morsel-driven parallelism~\cite{DBLP:conf/sigmod/LeisBK014} to dynamic
maintenance. A morsel-driven analytical engine schedules small fragments of a
one-shot query pipeline; TrieGS instead schedules affected maintenance fragments
inside one update epoch. The semantic boundary is not a pipeline breaker, but
root-vector publication.

\paragraph*{Maintenance morsels.}
For epoch $t$, the runtime decomposes propagation into small tasks tagged with
an epoch, a relation or rule, and a high-level LFNT prefix. We use four morsel
types.
\begin{enumerate}[leftmargin=*]
  \item \emph{Base morsels} apply a batch of base Gma updates for one relation
  and prefix.
  \item \emph{Projection morsels} update count-aware parental projections for
  changed child Gmas.
  \item \emph{Rule morsels} evaluate affected bindings for one guard-normal rule
  and one guard or interface prefix.
  \item \emph{Answer morsels} update an explicit $\mathsf{Ans}_Q$ relation or
  expose a factorized full-query answer generator.
\end{enumerate}
In the running example, the insertion of
$\mathsf{likedBy}(\mathsf{tweet2},\mathsf{Mia})$ creates a base morsel for the
$\mathsf{likedBy}$ prefix, a projection morsel for
$L^p(\mathsf{tweet2})$, a rule morsel for
$V_2(\mathsf{tweet2},\mathsf{Eva})$, a projection morsel for
$V_2^p(\mathsf{tweet2})$, and finally a rule morsel for
$V_0(\mathsf{tweet2})$.

\paragraph*{Scheduler invariants.}
The scheduler may reorder and parallelize morsels inside an epoch only when the
following invariants hold.
\begin{description}[leftmargin=*]
  \item[Epoch isolation.] Every morsel writes only to the working version for its
  logical time $t$.
  \item[Dependency closure.] A morsel that changes a Gma enqueues every
  dependent projection and rule morsel required by the compiled view-dependency
  graph.
  \item[Exact final value.] Conflicting writes to the same derived Gma are
  serialized or combined so that the value installed before publication equals
  the deterministic \textsc{ProcessEpoch} result.
  \item[Publication barrier.] $\RootVec_t$ is published only after the
  outstanding-morsel count for epoch $t$ reaches zero and all writes to the
  working version are visible.
  \item[Snapshot immutability.] No morsel modifies a node or payload reachable
  from any previously published root vector.
\end{description}

\begin{lemma}[Morselized epoch refinement]
\label{lem:morselized-refinement}
If the morsel scheduler satisfies the invariants above, then a morselized
execution of epoch $t$ publishes the same logical state $L_t$ as
Algorithm~\ref{alg:process-epoch}.
\end{lemma}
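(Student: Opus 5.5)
We prove Lemma~\ref{lem:morselized-refinement} by a simulation argument that reduces any morselized execution of epoch $t$ to one sequential run of Algorithm~\ref{alg:process-epoch}, and then appeal to Lemma~\ref{lem:update-correctness}. We fix an arbitrary morselized execution. From it we extract a linear order of the morsels' Gma writes: writes to the same derived Gma are ordered by the serialization that the \emph{exact final value} invariant provides, and writes to different Gmas are ordered arbitrarily but consistently with the enqueue relation. \emph{Epoch isolation} guarantees that every write goes to the working version for $t$. \emph{Snapshot immutability} guarantees that no write is visible through $\RootVec_{\pred(t)}$ or earlier vectors. Consequently the starting state is exactly $L_{\pred(t)}$, the same state from which \textsc{ProcessEpoch} begins.

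We then prove coverage by \emph{dependency closure}. Whenever a morsel changes a Gma $A(\bar a)$ by a nonzero delta, it enqueues every dependent projection and rule morsel of the compiled view-dependency graph. These morsels correspond exactly to the projection-rule loop and the $\mathsf{Affected}_r$ loop in Algorithm~\ref{alg:process-epoch}. By Lemma~\ref{lem:affected-bindings}, each rule morsel visits exactly the head bindings whose value changes. Projection updates are linear by Equation~\ref{eq:parent-projection}, so signed deltas commute and their order within the epoch does not affect the final projection value. For rule morsels, each binding is re-evaluated from the current factors through Equation~\ref{eq:guard-normal-product}. Any interleaving therefore converges, once all morsels have finished, to the exact recomputed value, provided that the last evaluation of each binding observes the final values of its factors. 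We prove this by induction on the height of the relation in the compiled view DAG, which is acyclic and of $\OQ(1)$ depth. The base Gmas end at their $U_t$-determined values. Whenever a factor of a binding changes, dependency closure enqueues a fresh evaluation of that binding after the change. Hence the final write to each derived Gma equals the value that \textsc{ProcessEpoch} computes. The \emph{exact final value} invariant ensures that racing writes do not overwrite this value with a stale one.

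Finally, the \emph{publication barrier} ensures that $\RootVec_t$ is installed only after the outstanding-morsel count reaches zero and all writes are visible. At that point no morsel is in flight, so the published working version equals the fixpoint just characterized, which is $L_t$. By Assumption~\ref{ass:serializable-epochs} and Lemma~\ref{lem:update-correctness}, this is the same state that \textsc{ProcessEpoch} publishes.

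We expect the main obstacle to be the convergence step for rule morsels. A rule morsel may evaluate a binding while a sibling factor is still changing. That stale evaluation is harmless only if a later evaluation of the same binding is guaranteed and is ordered after it. Our first approach is to combine dependency closure (the sibling change re-enqueues the binding) with the \emph{exact final value} invariant, read as requiring last-writer serialization by evaluation order. If that reading proves too weak, we would instead strengthen the argument by processing the view DAG level by level, which is a valid schedule that the invariants permit, and show that every admissible schedule yields the same final state as this level-ordered one.
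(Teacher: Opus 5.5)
Your skeleton is the paper's. Base morsels cover $U_t$, dependency closure covers the enqueued projection and rule work, the publication barrier leaves nothing in flight, and snapshot immutability confines writes to the working version. The paper stops there. It reads \emph{exact final value} as a direct stipulation that each derived Gma installed before publication equals the value of the fixed deterministic \textsc{ProcessEpoch} run.

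You instead try to \emph{derive} that value by a convergence argument, and that step fails. Dependency closure enqueues a rule morsel, not a re-evaluation of every binding that uses the changed factor. The morsel computes $\mathsf{Affected}_r$, which filters by sibling viability in the \emph{current} working version (Equation~\ref{eq:affected-bindings}). Lemma~\ref{lem:affected-bindings} only covers the case where the other factors are fixed.

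Take $V(z)\leftarrow G(z),P_1(z),P_2(z)$ with $G(z_0)=P_1(z_0)=P_2(z_0)=V(z_0)=1$. Let the epoch delete the only witness of each projection, so both $P_1(z_0)$ and $P_2(z_0)$ drop to $0$. Suppose both projection writes land before either rule morsel runs, which is exactly your level-by-level fallback schedule. Then the $P_1$-morsel sees $P_2(z_0)=0$, the $P_2$-morsel sees $P_1(z_0)=0$, and both return $\emptyset$. So $V(z_0)$ stays $1$, although the correct value is $0$. Hence ``the last evaluation of each binding observes the final values of its factors'' is false. Last-writer serialization cannot help, because no write to $V(z_0)$ occurs at all.

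The same trace is a legal FIFO run of Algorithm~\ref{alg:process-epoch}. Processing one projection's parent rule before the other projection is updated instead yields $0$. So the endpoint of \textsc{ProcessEpoch} itself depends on queue order, and the level-ordered result is not shared by every admissible schedule. Your closing step identifies your fixpoint with $L_t$, and then with the \textsc{ProcessEpoch} result via Lemma~\ref{lem:update-correctness}; that step cannot be justified for arbitrary orders. Citing Assumption~\ref{ass:serializable-epochs} there is also circular, since this lemma is meant to discharge that assumption for morselized runs.

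You have two ways to repair the proof. First, follow the paper and let exact-final-value pin each installed Gma to the chosen deterministic \textsc{ProcessEpoch} value; the lemma is then a pure refinement statement and needs no convergence argument. Second, for an order-independent proof, first strengthen $\mathsf{Affected}_r$ so that every factor change re-evaluates every binding whose product can change. For instance, also re-evaluate bindings whose installed head value is nonzero, or test siblings against the values used at the binding's last evaluation. With per-Gma read--install pairs serialized, your DAG-height induction then goes through.
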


\begin{proof}
Every input update is represented by a base morsel. By dependency closure, every
changed Gma schedules the projection and rule work that
Algorithm~\ref{alg:process-epoch} would enqueue. By exact-final-value
serialization or combination, each derived Gma installed before publication is
the deterministic value computed by \textsc{SetExact}. Since publication waits
until all morsels complete, no dependent change is omitted from $\RootVec_t$.
Snapshot immutability confines all writes to the working version. Thus the
published state refines the serializable epoch semantics used in
Lemma~\ref{lem:update-correctness}.
\end{proof}

\paragraph*{NUMA-aware prefix ownership.}
On the Intel NUMA platform, high-level integer LFNT prefixes may be assigned
owner sockets and allocated by first touch. Workers first consume morsels for
local prefixes; remote stealing is allowed only when local queues fall below a
threshold or when imbalance would otherwise delay the publication barrier. This
policy is evaluated against serial, global-FIFO, ordinary work-stealing,
prefix-coalesced, and adaptive schedulers; it is not a correctness requirement.

\nop{
\subsection{CQELS Integration Plan}

The implementation is isolated in a new \texttt{cqels-triegs} module rather
than retrofitted into the existing SWAG aggregate state. The module defines a
backend-independent \texttt{VersionedRelation} interface, the guard-normal view
program, the epoch coordinator, root-vector publication, the weighted answer
iterator, and pin/reclamation accounting. This interface is implemented first
by a simple MVCC/RCU map and then by LFNT, ensuring that query semantics and
snapshot publication can be tested independently of the final data structure.

CQELS supplies the surrounding stream runtime. Window-update records provide
added and removed events, timestamps, watermarks, and late-data decisions. An
adapter converts each emitted window delta into a TrieGS update epoch. The
Reactor pipeline orders epoch admission; the morselized scheduler performs
affected-view tasks inside that epoch; and the publisher exposes the new state
only after the completion barrier. Enumeration is exposed as a separate request
path so a slow reader does not backpressure event admission. CQELS's event
journal and checkpoint interfaces are reused for the recovery experiments, with
the published root-vector epoch, input offset, watermark, and expiration queue
treated as one recovery boundary.

The existing CQELS multi-way join and SWAG classes remain baselines and
integration points. In particular, aggregate-only SWAG state is not used as
evidence for TrieGS full enumeration or multicore snapshot scaling; the new
harness exercises the versioned relation and root-vector APIs directly.
}
\subsection{CQELS SWAG Integration}
\label{sec:swag-integration}

CQELS already contains SWAG-style sliding-window aggregation operators for
aggregate summaries over FIFO windows~\cite{tangwongsan2020daba,tangwongsan2023bulk}.
TrieGS connects to this layer in two distinct ways.

\paragraph*{Base-window delta source.}
For the core query semantics, the window is over base graph events. CQELS
windowing determines which events enter and leave the active window, and TrieGS
receives those changes as ordinary insertion and expiration epochs. The core
sliding-window corollary therefore remains a statement about $Q(G_t^W)$, not a
statement about aggregate summaries.

\paragraph*{Answer-delta aggregate consumer.}
For aggregate continuous queries, TrieGS can emit exact signed answer deltas
\[
  (\bar a,\delta\mu_t^Q(\bar a))
\]
from the maintained answer representation. A SWAG aggregate adapter maps each
answer delta to a group key and aggregate contribution over integer dictionary
IDs. If the aggregate result must be enumerated with the same snapshot as the
answers, its aggregate roots are included in $\RootVec_t$; otherwise SWAG is a
downstream consumer with its own lag and consistency contract.

\begin{proposition}[SWAG aggregate extension]
\label{prop:swag-extension}
Let $\gamma$ be an aggregate whose state can be updated from weighted answer
deltas. If TrieGS emits the exact signed delta of $\mathsf{Ans}_Q$ for every
published epoch and the SWAG aggregate state applies all deltas through epoch
$t$, then the aggregate state equals applying $\gamma$ to the declared answer
multiset or answer-event window at $t$.
\end{proposition}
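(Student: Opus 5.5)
The plan is an induction over published epochs. The invariant is that, after all deltas through epoch $t$ have been applied, the SWAG state $\Gamma_t$ equals $\gamma$ applied to the answer multiset $\mu_t^Q$. For the windowed variant, it equals $\gamma$ applied to the answer-event window at $t$. I would first fix what ``state can be updated from weighted answer deltas'' means. I take it to mean that $\gamma$ has an update operator $\oplus$ such that applying a signed contribution $(\bar a,\delta)$ to the state of a multiset $M$ yields the state of $M+\delta\cdot\mathbf 1_{\bar a}$. I would also require that applying a finite family of such contributions is independent of order, which is exactly the property invertible or commutative-group aggregates provide. For FIFO-window SWAG aggregates I instead need the weaker property that deltas are applied in epoch order; this is what the CQELS adapter guarantees.

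\textbf{Base case.} Take the initial published epoch $t_0$, where both the answer multiset and the SWAG state are empty, or both equal the preprocessed state; then $\Gamma_{t_0}=\gamma(\mu^Q_{t_0})$.

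\textbf{Step: exactness of the emitted deltas.} For the step from $\pred(t)$ to $t$, I first show that the emitted deltas are exact. By the view invariant at both epochs (Theorem~\ref{thm:dynamic-multiset-maintenance}, with $\mathsf{Ans}_Q$ treated as the final count-aware view as in Corollary~\ref{cor:projected-materialized}), every Gma of $\mathsf{Ans}_Q$ at $t$ carries $\mu_t^Q(\bar a)$. \textsc{SetExact} emits $\delta=\mu_t^Q(\bar a)-\mu_{\pred(t)}^Q(\bar a)$ for every changed tuple and nothing for unchanged ones. Under Assumption~\ref{ass:serializable-epochs}, several \textsc{SetExact} calls on the same $\mathsf{Ans}_Q(\bar a)$ within one epoch telescope, so the net emitted delta per tuple is $\delta_t\mathsf{Ans}_Q(\bar a)$. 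Consequently,
\[
\mu_{\pred(t)}^Q+\sum_{\bar a}\delta_t\mathsf{Ans}_Q(\bar a)\,\mathbf 1_{\bar a}=\mu_t^Q .
\]

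\textbf{Step: applying the deltas.} The premise that SWAG applies all deltas through $t$, together with the update property of $\oplus$, then gives $\Gamma_t=\gamma(\mu_t^Q)$. For the answer-event window, I would map each delta to an event stamped with its epoch, so the window contents at $t$ are determined by the same delta sequence, and the SWAG correctness of the cited DABA/bulk algorithms yields the claim.

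\textbf{Main obstacle.} The hardest step is the intra-epoch one. Morselized execution may emit several partial deltas per tuple, possibly in a nondeterministic order, and intermediate values may be transiently negative or unpublished. I would handle this first by buffering emitted deltas per epoch and flushing the net delta only at publication (Lemma~\ref{lem:morselized-refinement}), so SWAG sees exactly one net delta per tuple per published epoch. If buffering is not assumed, I would instead fall back on commutativity and invertibility of $\oplus$ to argue that any interleaving yields the same state.
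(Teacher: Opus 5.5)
Your proposal is correct and follows the paper's own argument. Exactness of the emitted $\mathsf{Ans}_Q$ deltas comes from the view invariant and \textsc{SetExact}, and delta-compatibility of $\gamma$ then makes cumulative application equal recomputation on the declared multiset or window; you only make explicit the epoch induction, the per-tuple telescoping, and the windowed case that the paper's three-sentence proof leaves implicit. In the multiset case, your separate order-independence requirement and the intra-epoch ``obstacle'' are not needed: the hypothesis already supplies one exact net delta per tuple per published epoch, so every intermediate multiset stays nonnegative and your update law alone makes the result order-independent.
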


\begin{proof}
The TrieGS answer-delta stream is exact by the view invariant and
\textsc{SetExact}: every answer payload change is represented by its signed
difference. Applying a delta-compatible aggregate update to every signed answer
contribution therefore produces the same aggregate as recomputing the aggregate
from the target answer multiset. The timestamp or root-vector inclusion decides
which snapshot the aggregate corresponds to.
\end{proof}

For \texttt{COUNT} and \texttt{SUM}, signed deltas update the aggregate directly;
\texttt{AVG} maintains weighted sum and count. For \texttt{MIN} and
\texttt{MAX}, deleting the current extremum requires additional support
structures unless the declared semantics is a FIFO answer-event window. This is
an aggregate-extension obligation, not part of Theorem~\ref{thm:core}.

\subsection{Dictionary-Coded Physical Design}

The logical LFNT does not require one Java object per trie word. The 
backend should use dictionary IDs from Section~\ref{sec:dictionary-encoding} and
avoid the object-key layout used by the correctness prototype. Relation IDs,
view IDs, predicate IDs, variable values, and SWAG group keys are represented as
primitive integers. Binary base edges and low-arity views should use
arity-specialized primitive keys; larger views use packed primitive arrays or
leaf blocks.

Packed LFNT nodes store child labels and child references in compact arrays or
bitmap/dense-node formats instead of comparator-driven object maps. Several
nearby leaf Gmas may share one contiguous block containing integer suffixes and
long multiplicity payloads. Packing amortizes a fetched cache line across
several keys and reduces object-header, comparator, TLB, and garbage-collection
overhead. The root vector is likewise represented as an indexed root array on
the hot path rather than a hash map lookup per relation.

All indexes required by propagation and enumeration are selected at compile
time and constructed before publication. An enumerator never lazily builds an
index, installs a shortcut, updates access statistics, or otherwise writes into
a published node. Mutable task counters, pin counts, metrics, and reclamation
metadata are kept outside immutable query nodes to avoid false sharing with
reader data.

\subsection{Space Optimizations}

\paragraph*{Parental-projection packing.}
A projection Gma can point to the factorized child extension set that justifies
its witness count rather than duplicating all child keys. The count remains an
independent logical payload so deletions are correct.

\paragraph*{Guard-child packing.}
When a view and its selected guard have the same key variables, their trie paths
can share prefixes. This reduces key and node duplication but does not merge
their logical multiplicities.

\paragraph*{Index packing.}
Several attribute orders may share common trie prefixes. The optimizer selects
a minimal set of physical indexes that covers update propagation and
enumeration requirements. Every packed layout must preserve the same logical
$\scanIndex$ interface assumed by the proof.

\paragraph*{Fractional products.}
For a grounded product view with many body factors, the implementation may
retain the product of nonzero factors and a zero-factor count. This permits a
factor update without re-reading all siblings. The optimization is valid only
when division is exact or when the old factor is retained; transitions through
zero must update the zero count explicitly.

\subsection{Multi-Query Sharing}
\label{sec:multiquery}

Queries are first compiled independently into canonical logical subplans.
Exactly equivalent subplans may share one maintained view and its indexes,
with reference counts controlling reclamation. This exact-sharing layer is the
formal baseline. Frequent-pattern mining, such as GSpan-style discovery of
approximately reusable subpatterns~\cite{DBLP:conf/icdm/YanH02}, is optional and
must account for different projections, filters, access orders, and output
requirements. Classical multi-query optimization motivates the sharing
objective~\cite{sellis1988multiple,roy2000efficient,mistry2001materialized}.

\subsection{Features Outside the Core Theorem}

\paragraph*{Filters and constants.}
Label, type, equality, and indexed value predicates remain within the theorem
when each check has $\OQ(1)$ cost. More general predicates add their evaluation
cost to the delay.

\paragraph*{Aggregation.}
\texttt{COUNT} sums multiplicities; \texttt{SUM} weights values by
multiplicity; \texttt{AVG} retains weighted sum and count; \texttt{MIN} and
\texttt{MAX} require support structures for deletions; and
\texttt{COUNT DISTINCT} counts values with positive support. These operators
use the snapshot protocol but require aggregate-specific maintenance proofs.
SWAG integration supplies the CQELS-native implementation path for aggregate
summaries, but it does not replace the full weighted answer iterator.

\paragraph*{Cyclic queries.}
A bounded-width cyclic subquery may be encapsulated as a materialized bag whose
interface joins an acyclic skeleton. Specialized triangle maintenance,
factorized IVM, or a WCOJ evaluator can implement the bag~\cite{DBLP:journals/tods/KaraNNOZ20,
nikolic2018incremental,DBLP:journals/pacmmod/WangWS23}. The total cost is the
bag maintenance cost plus the \TrieGS skeleton cost; Theorem~\ref{thm:core}
does not imply a constant-delay or update bound for an arbitrary bag.

\paragraph*{SPARQL and paths.}
The positive basic-graph-pattern fragment maps naturally to conjunctive-query
bag semantics. \texttt{OPTIONAL}, \texttt{MINUS}, correlated
\texttt{NOT EXISTS}, ordering, arbitrary property paths, and entailment regimes
require distinct semantics. Streaming regular path queries are therefore an
extension rather than a consequence of the core theorem~\cite{Pacaci:2020,
DBLP:conf/icde/PacaciBO22}.

\subsection{Durability Boundary}

A linearizable in-memory root vector is not a durable checkpoint. Exactly-once
recovery must atomically bind a published vector to the committed input offset
or watermark and to the sliding-window expiration queue. The current theorem
covers in-memory concurrency; Section~\ref{sec:evaluation} includes failure
injection to determine the engineering cost of extending the protocol to
crash-consistent operation.

\section{Evaluation}
\label{sec:evaluation}

This section reports a first-round evaluation of the current TrieGS/CQELS
artifact. 
The current data support four main observations. First, the implementation now
passes meaningful semantic gates, including randomized update streams,
publication snapshots, and configured TrieGS/Flink parity. Second, the workload
is no longer a static or binary-join smoke test: it includes 11 SNB-shaped
sliding-window query templates and post/comment temporal joins. Third, the
current relation-backend matrix is negative for LFNT performance: simpler
sequential and MVCC-object backends dominate the configured prototype series.
Fourth, CPU placement is a first-order scaling variable on the NUMA host:
CPU-node affinity improves every SNB-shaped query, even though memory binding was
not available.

\subsection{Experiment Steup}
\label{sec:eval-artifact}
We setup two multicore environments: Uniform Memory Access (UMA) and Non-uniform memory access (NUMA)
The CQELS/Flink runs were executed on a UMA host ( Apple M2 Ultra with 24
physical/logical cores, macOS 15.3.2 ARM64, 196GB RAM and OpenJDK 24). The SNB-backed
profiles use the SF0.1 Turtle dataset. 
The NUMA-node  analysis uses a configuration of a server with 4 NUMA Intel CPUs with 112 cores, 2TB RAM.
\nop{
\begin{table*}[t]
\centering
\small
\caption{How to read the current evaluation.}
\label{tab:eval-reading-guide}
\begin{tabular}{p{0.25\textwidth}p{0.31\textwidth}p{0.33\textwidth}}
\toprule
Evidence & What it establishes & What it does not establish yet \\
\midrule
Semantic gates and parity & The implemented local query families match the oracle and Flink outputs & RiverBench or final Flink matrix coverage \\
Relation-backend matrix & Current LFNT variants are not performance winners & Final packed/integer LFNT design quality \\
M2 scaling & Local parallelism helps through 8 workers & Final 16-core or NUMA scaling target \\
Intel CPU-node affinity & Worker placement materially changes throughput & Memory-local NUMA behavior or remote-memory counters \\
Flink baselines & Auditable semantic scaffolding & Tuned cross-engine performance claim \\
Multi-query scaling vs.\ work-matched Flink & Inter-query scaling is framework-grade; 5.8--6.2x (thin) to $\sim$29--32x (dense) absolute advantage on one node & Distributed, checkpointed, or RocksDB-backed Flink; SQL surface; open-loop latency \\
\bottomrule
\end{tabular}
\end{table*}

\paragraph*{Insight.}
The current evidence is most valuable because it prevents the paper from making
the wrong claim. The root-vector and semantic-maintenance path is increasingly
credible, but LFNT and scheduling remain empirical hypotheses. The evaluation
therefore becomes a triage tool: it identifies which parts of the architecture
are already defensible and which parts still need engineering before they can
carry a performance contribution.
}
\subsection{Correctness Tests}
\label{sec:eval-correctness}

TrieGS maintains weighted answers through base relations, projections, derived
views, and published root vectors. A throughput result is meaningless if any one
of those layers loses multiplicity information or exposes a fractured state. The
first evaluation question is therefore semantic: does the implementation agree
with an oracle and with an independent streaming execution path on the configured
workloads?

The randomized semantic gate uses 100 fixed seeds and 10{,}000 epochs per seed.
The SNB publication gate accepts 3{,}056 SNB Turtle edges, publishes 13 epochs
and 13 snapshots, and reports zero impossible snapshots. These numbers exercise
the two central proof obligations: update epochs must restore the view invariant,
and readers must observe one published state rather than a mixture of relation
roots.

\nop{
The bounded SNB person-name BGP remains a smoke gate rather than the main parity
claim:
\begin{lstlisting}[language=SQL]
?person snvoc:firstName ?firstName .
?person snvoc:lastName  ?lastName .
\end{lstlisting}
}
On the normalized workload, the Flink DataStream baseline produces the same
1{,}528 final answers, the same total multiplicity, and identical answer
checksums.

\nop{
The more important parity result is the full local profile. It reports 20
matched scenarios out of 20 and zero mismatches: 11 SNB-shaped sliding-window
answer-map scenarios with \texttt{eventCount=76,760},
\texttt{windowSizeMillis=10,000}, and \texttt{slideMillis=5,000}; 8 post/comment
temporal-join aggregate scenarios at sizes 100 and 1{,}000; and the person-name
smoke query.\footnote{Harness audit note (2026-07-02): three of the eight
temporal-join fixtures ignored the size parameter, so their size-1{,}000
scenarios were byte-identical duplicates of the size-100 runs. The committed
artifact genuinely reports 20/20; the parity harness now emits fixed-size
fixtures once, so regenerated runs report 17 scenarios (5 temporal) with
unchanged gate semantics.} Each scenario writes canonical TrieGS output,
canonical Flink output, checksums, and a diff file.

\begin{table}[t]
\centering
\small
\caption{Current correctness evidence.}
\label{tab:correctness-evidence}
\begin{tabular}{lr}
\toprule
Gate & Result \\
\midrule
Randomized semantic seeds & 100 \\
Epochs per seed & 10{,}000 \\
SNB accepted edges & 3{,}056 \\
Published SNB snapshots & 13 \\
Impossible snapshots & 0 \\
Flink person-name final answers & 1{,}528 \\
Full local parity scenarios & 20/20 \\
Full local parity mismatches & 0 \\
\bottomrule
\end{tabular}
\end{table}

\paragraph*{Insight.}
The correctness story is the strongest current result. It validates the order in
which the artifact is being built: first make the semantic oracle, randomized
streams, answer checksums, and root-vector publication agree; only then interpret
backend throughput. The scope is still bounded to configured local query
families. It does not yet cover RiverBench, late-data policy variants, or the
final Flink matrix.

\subsection{The Workload Is Now Dynamic}
\label{sec:eval-workloads}

Earlier versions of the evaluation risked looking like static or binary-join
microbenchmarking. The current artifact moves beyond that. The main local query
family contains 11 SNB-shaped sliding-window templates: triangles, cliques,
rectangles, diamonds, dumbbells, forum/tag/class patterns, profile paths,
country/message patterns, friend/content paths, and friend-like patterns. These
queries exercise multiway joins, expiration, skewed social-graph predicates, and
answer-map parity.

The post/comment temporal-join family adds a different stress: interval joins
with hotspot, Zipf, small-scale, and size-driven variants. This family is useful
because it tests time-bounded dynamic joins and aggregate state, even though it
is not the core full-enumeration theorem workload. The person-name BGP remains
valuable only because it is easy to audit and compare against Flink.

\begin{table*}[t]
\centering
\small
\caption{Current workload scope.}
\label{tab:workload-scope}
\begin{tabular}{p{0.25\textwidth}p{0.25\textwidth}p{0.38\textwidth}}
\toprule
Workload family & Role & Interpretation \\
\midrule
SNB person-name BGP & Smoke/parity gate & Simple two-atom BGP; useful for checksum auditing \\
SNB-shaped sliding windows & Main dynamic local family & 11 multiway query shapes with expiration and answer-map parity \\
Post/comment temporal joins & Dynamic join and aggregate family & Tests interval joins, skew, and aggregate state; only the performance-comparison fixture scales with size (100 and 1{,}000) \\
Binary stream join & Supplemental stress & Useful fanout test, not sufficient as a paper workload by itself \\
\bottomrule
\end{tabular}
\end{table*}

\paragraph*{Insight.}
The workload coverage now supports a more credible systems story: TrieGS is
being tested on dynamic sliding-window joins, not only on isolated relation
operations. The distinction between theorem workloads and extension workloads
must remain explicit. Some SNB-shaped queries are cyclic or aggregate-oriented;
they are useful systems tests, but they should not be silently folded into the
core free-connex theorem claim.

\subsection{Backend Ablation: A Useful Negative Result}
\label{sec:eval-backends}

The relation-backend matrix is the most informative numeric artifact for the
physical design. It contains 216 JMH measurements with five forks and five
measurement iterations. It crosses three operations
(\texttt{pointLookupSnapshot}, \texttt{scanAllSnapshot}, and
\texttt{writeEpoch}), active-key counts 1k/10k/100k, batch sizes 1/16/256, and
eight backends.

The current result is negative for LFNT performance. \texttt{TG\_SEQ} is best on
15 of 27 operation/size/batch series, and \texttt{TG\_MVCC\_OBJ} is best on 9 of
27. \texttt{TG\_MVCC\_INT} wins 2 series and \texttt{TG\_RWLOCK\_INT} wins 1.
No LFNT variant wins any configured series.

\begin{table*}[t]
\centering
\small
\caption{Relation-backend ablation. ``Best series'' counts operation/size/batch
series won by each backend in the current five-fork matrix.}
\label{tab:relation-backend-summary}
\begin{tabular}{lrrr}
\toprule
Backend & Measurements & Best series & Mean score \\
\midrule
\texttt{TG\_SEQ} & 27 & 15 & 1{,}811{,}600.010 \\
\texttt{TG\_MVCC\_OBJ} & 27 & 9 & 1{,}937{,}195.357 \\
\texttt{TG\_MVCC\_INT} & 27 & 2 & 992{,}208.793 \\
\texttt{TG\_RWLOCK\_INT} & 27 & 1 & 945{,}718.541 \\
\texttt{TG\_PERSIST\_INT} & 27 & 0 & 995{,}445.646 \\
\texttt{TG\_LFNT\_OBJ} & 27 & 0 & 1{,}696{,}695.525 \\
\texttt{TG\_LFNT\_INT} & 27 & 0 & 879{,}415.353 \\
\texttt{TG\_LFNT\_INT\_PACKED} & 27 & 0 & 800{,}001.310 \\
\bottomrule
\end{tabular}
\end{table*}

At the largest configured cell, 100k active keys and batch size 256, the same
pattern holds: \texttt{TG\_SEQ} is best for point lookup, full scan, and epoch
write. The largest gap is on \texttt{writeEpoch}, where the packed LFNT
prototype measured in this matrix is over two orders of magnitude behind the
best sequential backend.
The delta break-even report is only weakly positive: delta-oriented backends
beat the sequential baseline on 5 of 9 \texttt{writeEpoch} points, and the
largest observed speedup is 1.026x.

A post-matrix analysis root-caused the \texttt{writeEpoch} gap and found it is
a property of the epoch-application strategy and the key codec, not of trie
persistence itself. The gap appears only at large batch sizes: each
\texttt{setExact} performed an immediate path copy, every copied node rebuilt
its \emph{entire} child map, and the integer codecs degenerate to a depth-one
trie whose root fanout is the whole keyspace for single-segment dictionary
keys---which also explains why \texttt{TG\_LFNT\_INT} and
\texttt{TG\_LFNT\_INT\_PACKED} score near-identically in
Table~\ref{tab:relation-backend-summary}. A batch size of 256 therefore copied
a 100k-entry child map 256 times per epoch. The implementation now buffers an
epoch's operations and merges them at seal time in a single pass that copies
each touched node's child map exactly once, and adds an experimental
radix-split codec that bounds every node's fanout at 256; both changes are
semantics-preserving and oracle-gated. Scoped local reproduction probes on the
same cell (committed as a digest in the implementation repository) show the
batched seal alone shrinks the 118x reproduced gap to 4.4x, and the
radix-coded backend is 8.4x \emph{faster} than \texttt{TG\_SEQ} on this cell
and two orders of magnitude faster at batch size~1. The matrix in
Table~\ref{tab:relation-backend-summary} measures the pre-batching
implementation and is retained as the committed baseline; the ablation
conclusions below are stated for that artifact and must be re-adjudicated by a
full five-fork matrix re-run (including read-side operations, where the radix
descent may cost) before any numeric claim is updated.

\paragraph*{Insight.}
This negative result is valuable because it separates the paper's reusable idea
from the current data structure implementation. The root-vector publication
protocol can still be a contribution even if the measured LFNT backend is not
yet competitive. The ablation also explains why integer encoding and packing
must be evaluated carefully: the measured integer variants do not automatically
dominate object-key variants. The first round of that backend and layout
engineering---batched seal merging and fanout-bounded key splitting---has since
landed and inverts the headline \texttt{writeEpoch} cell in scoped local
probes; whether it inverts the full matrix, including read-side operations, is
exactly what the pending five-fork re-run must decide before the LFNT-necessity
question (Q2) is answered.

\subsection{Flink: Semantic Baseline Before Performance Baseline}
\label{sec:eval-flink}

The Flink runs are most useful today as semantic scaffolding. The SNB-shaped
sliding-window profile checks count-level parity over the 11 query shapes. The
Flink recomputation path uses an all-window recomputation function over the same
active window contents, while the stateful Flink path maintains an active edge
index and amortizes local execution across 64 logical rounds.

This design makes the baseline auditable, but not final. The recomputation
baseline is intentionally simple. The stateful baseline removes repeated window
re-indexing, but it is still a local single-key operator baseline rather than a
fully partitioned multi-key Flink join plan. The current raw CQELS/Flink ratios
therefore diagnose harness behavior and local lifecycle overhead; they should
not be reported as final cross-engine speedups.

\begin{table}[t]
\centering
\small
\caption{Current Flink baseline interpretation.}
\label{tab:flink-interpretation}
\begin{tabular}{p{0.42\columnwidth}p{0.48\columnwidth}}
\toprule
Use & Status \\
\midrule
Person-name parity & Verified smoke gate \\
SNB-shaped count parity & Verified for configured local cases \\
Flink recompute timing & Auditable but not tuned \\
Flink stateful timing & Useful scaffolding, not final keyed plan \\
Work-matched scaling comparison & First parity-gated result (Sec.~\ref{sec:eval-multiquery}) \\
\bottomrule
\end{tabular}
\end{table}

\paragraph*{Insight.}
Flink prevents the evaluation from becoming self-referential: it gives an
independent execution path for configured windowed semantics. The
keyed-incremental execution mode with matching semantic manifests called for
in earlier drafts now exists and carries the first parity-gated scaling
comparison (Section~\ref{sec:eval-multiquery}); what remains for a final
matrix is distributed, durable-state configurations and resource accounting.
}
\subsection{UMA Scaling: Parallelism Exists, But Saturates Early}
\label{sec:eval-m2}

The UMA scaling profile runs the 11 SNB-shaped query templates at
\texttt{eventCount=760}, with worker counts 1, 2, 4, 8, 16, and 24. Geometric
mean throughput improves through 8 workers, then plateaus at the 16
performance-core and 24 all-core points. The best current geometric-mean speedup
is 3.28x at 8 workers.

\begin{table}[t]
\centering
\small
\caption{UMA scaling summary over 11 SNB-shaped queries.}
\label{tab:m2-scaling}
\begin{tabular}{rrr}
\toprule
Workers & Geomean ops/s & Speedup \\
\midrule
1 & 1{,}932.32 & 1.00x \\
2 & 3{,}163.32 & 1.64x \\
4 & 5{,}240.92 & 2.71x \\
8 & 6{,}333.32 & 3.28x \\
16 & 6{,}199.97 & 3.21x \\
24 & 6{,}218.92 & 3.22x \\
\bottomrule
\end{tabular}
\end{table}

\paragraph*{Insight.}
It
shows that the current worker model can extract useful parallelism, but that the
parallel granularity or backend path saturates before the 16-core target. This
points to likely bottlenecks: publication barriers, shared queues, limited
per-epoch affected work, allocation pressure, or relation-backend overhead. The
UMA result is therefore a guide for the NUMA affinity and profiling experiments,
not a completed multicore claim.

\paragraph*{Engine-path morsel scaling.}
The UMA numbers above measure the SNB harness, which parallelizes a fixed set of
sliding windows with a \texttt{parallelStream}; their plateau is a task-count
ceiling of the harness rather than a property of the engine's intra-epoch morsel
parallelism. To measure the engine path directly we added a steady-state
benchmark that drives the incremental maintenance operator over one advancing
window---a FIFO residency of node-disjoint edge blocks, replayed periodically so
per-slide work is stationary and bounded---applying each slide through the
maintenance morsel pool so the parallel width is the morsel count, far above the
core count. The first measurement on this path showed \emph{no} multicore
speedup: per-epoch cost was dominated by sequential $O(\text{state})$ snapshot
bookkeeping (copying the base graph state and rebuilding every counting view each
epoch), and the workload was allocation-bound---profiling attributed the bulk of
a slide's allocation to a per-comparison \texttt{TreeSet} inside the
answer-binding comparator and to a \texttt{Binding} object materialized for every
enumerated tuple.

We then rebuilt the per-epoch maintenance to be $O(\Delta)$ end to end and to
represent all hot-path state columnarly. The rewrite proceeded as a sequence of
individually measured, oracle-gated steps: dictionary-encoded, dense-integer
columnar counting views and candidate indices, so an enumerated tuple is a tuple
of \texttt{int}s and a \texttt{Binding} is materialized only at the API boundary;
an allocation-free binding comparator; a slot-indexed integer valuation in the
enumeration inner loop; a sharded delta accumulator whose per-worker partials are
combined shard-parallel, eliminating the single-threaded final reduction of the
tree-structured combiner; derivation of each epoch's changed edges by one batch
scan rather than per-edge index probes; a \emph{mutable occurrence-count ledger}
that replaces the per-epoch persistent base-state copy (the copy existed only to
carry a base-count snapshot that no hot-path consumer reads); and overlapping the
signed answer-delta decode with the parallel view merges. Each step is gated by a
differential oracle against brute-force enumeration over both multigraph and set
semantics and by a parallel-equals-sequential parity test.

Together these raise single-thread throughput several-fold over the already
order-of-magnitude-improved $O(\Delta)$ baseline and lift the intra-epoch
multicore speedup on the steady-state engine benchmark from the roughly
$1.3\times$ of the view-parallel-only version to about $1.9\times$ at eight
workers (measured on a $24$-core host; the speedup is stable across join fan-out).
Two effects compound: the columnar rewrite removes the per-tuple allocation and
per-comparison string work that had left the parallel section garbage-collector
bound, and the ledger both eliminates the largest remaining sequential phase and,
by cutting roughly $140$k node allocations per slide, relieves shared-allocator
pressure on the parallel sections---so cheapening a sequential phase also improves
parallel scaling.

The $\approx 1.9\times$ figure is where the thin-query story ends, not where
the scaling story ends. Section~\ref{sec:eval-anatomy} decomposes where the
remaining time goes, shows that the intra-epoch ceiling is a property of the
plan's compute density rather than of the engine, and reports the negative
results and ceiling probes that pin the residual bound to structural
parallel width.
Section~\ref{sec:eval-multiquery} then shows that the pre-registered
$\ge 8\times$ target is met at inter-query granularity, against a
work-matched Flink baseline. The root-vector publication protocol remains the
reusable systems contribution.

\subsection{Scaling Anatomy of the Maintenance Operator}
\label{sec:eval-anatomy}

The engine-path benchmark (Section~\ref{sec:eval-m2}) left the thin 3-atom
path at $\approx 1.9\times$ on eight workers. Two questions remain: is that
ceiling a property of the operator or of the workload, and was the serial
fraction actually engineered out before blaming memory bandwidth? This
subsection answers both with a plan-shape sweep and a phase-level Amdahl
decomposition~\cite{Amdahl:1967:VSP:1465482.1465560}.

\paragraph*{Speedup tracks compute per resident byte.}
Table~\ref{tab:intra-query-shapes} sweeps morsel parallelism (one query, one
shared state, a worker pool) across plan shapes of increasing enumeration
density. The compute-dense shapes --- a 4-armed spider with branching 3 and a
5-way star with branching 6 --- reach $7.5$--$7.8\times$ at 16 workers, the
performance-core budget of the host. The negative control, a 5-atom path with
prefix projection whose per-seed enumeration is shallow, stays at
$1.3\times$ at every worker count; the thin 3-atom path of the steady-state
benchmark measures $\approx 1.9\times$. The mechanism: per-seed enumeration
work grows multiplicatively with join branching, while resident state --- and
with it the serial per-epoch cost of sealing indexes and merging views ---
grows only linearly. Dense plans therefore decouple parallel compute from
serial bookkeeping; thin paths never do, because each morsel touches barely
more bytes than the seal itself must. Intra-query speedup is a function of
compute per resident byte, not a global constant of the engine --- and the
spread from $1.3\times$ to $7.8\times$ across shapes confirms the mechanism
rather than a cherry-picked workload.

\begin{table}[t]
\centering
\small
\caption{Intra-query morsel scaling by plan shape (steady state, one query,
speedup vs.\ one worker; $t{=}1$ in slides/s).}
\label{tab:intra-query-shapes}
\begin{tabular}{lrrrr}
\toprule
Shape & $t{=}1$ & $t{=}8$ & $t{=}16$ & $t{=}24$ \\
\midrule
\textsc{Spider4x2} (branching 3) & 22.5 & 4.76x & \textbf{7.53x} & 5.97x \\
\textsc{Star5} (branching 6) & 29.7 & 5.02x & \textbf{7.82x} & 7.60x \\
\textsc{Path5-prefix} (neg.\ control) & 104.3 & 1.28x & 1.29x & 1.25x \\
Thin 3-atom path (Sec.~\ref{sec:eval-m2}) & --- & $\approx$1.9x & --- & --- \\
\bottomrule
\end{tabular}
\end{table}

Flink has no counterpart for Table~\ref{tab:intra-query-shapes}: a single
continuous query of this kind is one key in a keyed stream (or a non-keyed
window), so its job shape is structurally parallelism-1 per query regardless
of the cluster's parallelism setting. Intra-query morsel parallelism is an
architectural differentiator, not a tuning difference.

\paragraph*{The serial fraction was engineered down step by step.}
The $1.9\times$ thin-path figure is the end of a measured ladder, not a first
attempt. Starting from the $O(\Delta)$ indexed operator, five individually
oracle-gated steps moved the eight-worker/one-worker ratio:
columnar views and indices, $1.26\times$; a sharded delta combiner replacing
the single-threaded tree reduction, $1.40\times$; single-scan derivation of
each epoch's changed candidates, $1.53\times$; a mutable occurrence-count
ledger replacing the per-epoch persistent base-state copy, $1.77\times$; and
overlapping the answer decode with the parallel view merges, $1.88\times$.
Absolute throughput moved more than the ratio: at eight workers the thin
fan-out-8 configuration went from 26.6 to 183 slides/s ($+588\%$), and even
single-threaded from 18 to 98 slides/s ($+444\%$); per-slide allocation fell
from 41--45\,MB to 23--27\,MB ($-44\%$).

\paragraph*{Phase decomposition at the end of the ladder.}
At eight workers the fan-out-8 slide takes 5.46\,ms wall: validation
(0.33\,ms) and index advancement (1.11\,ms) are serial but now jointly a
quarter of the slide; delta computation (1.55\,ms) speeds up $2.1\times$ on
eight workers and the view merge plus decode (2.25\,ms) $2.37\times$. These
four numbers reproduce the observed end-to-end ratio ($1.44 + 1.55{\cdot}2.1
+ 2.25{\cdot}2.37 \approx 10.0$\,ms at one worker, $10.0/5.46 = 1.84\times$),
an internal-consistency check on the decomposition. The ceiling is therefore
no longer Amdahl's law over an unexamined serial tail: the serial phases are
small, and the bound is the $\approx 2\times$-on-eight-workers scaling of the
two \emph{parallel} phases, whose per-morsel work on a thin plan is
essentially a memory traversal.

\paragraph*{Negative results: the ceiling was pushed, not assumed.}
Four further parallelization attempts were built, oracle-gated, and measured
net-negative at the per-epoch delta sizes of this workload:
hash-partitioning the largest counting view; range-partitioning the columnar
index seal (built byte-identical to the serial seal and gated before
measurement); batching the base-state application; and contiguous-block
morsel striping, which improves cache locality but regresses throughput
because it sacrifices the load balance of round-robin assignment across
atom-positioned morsels. In each case fork and coordination overhead exceeded
the gain. We report these alongside the positive ladder deliberately: the
memory-bandwidth attribution for the residual ceiling is credible precisely
because the alternative explanations --- residual serial work, poor layout,
unlucky partitioning --- were each implemented and refuted rather than argued
away.

\paragraph*{A structural-width ceiling, and how we know.}
The negative results above left memory bandwidth as the most credible
residual explanation. Three further probes replace it with a sharper one:
the binding constraint is \emph{structural parallel width}, not bandwidth.
First, two allocation-reduction slices --- reusing per-slide scratch buffers
and parallelizing index advancement --- cut per-slide allocation from 20.1
to 14.1\,MB ($-30\%$) on a refreshed fan-out-8 baseline and lift
single-thread throughput by $+9.9\%$, yet move no speedup ratio in the
useful direction: both thread counts gain, and the eight-worker/one-worker
ratio in fact slips from $1.86\times$ to $1.72\times$ because the
single-thread side gains more. The methodological corollary is blunt:
absolute-cost cuts, however large, cannot chase a strong-scaling target ---
a broadly uniform improvement leaves the ratio where it was.

Second, we attacked the remaining serial structure on the enumeration's
output side directly: a flat per-worker append-log for enumerated deltas
with a views$\times$shards parallel aggregation. The implementation was
completed and oracle-gated correct --- and measured \emph{net-negative},
$-5\%$ at eight workers, because each aggregation shard must rescan every
worker's log for its keys and that per-shard rescan redundancy exceeds what
lock-free insertion saves. It was reverted. As with the earlier negative
results, at microsecond task grain the coordination is the workload.

Third, and decisively, a compute-density invariance probe: raising join
fan-out from 8 (64 completions per edge) to 16 (256 per edge) quadruples
enumeration density per delta, yet both configurations plateau at
$\approx 1.64\times$ by four workers and stay flat through 16 --- identical
curves at $4\times$ different density. A delta-size or bandwidth explanation
predicts the curves separate; they do not. Nor is DRAM the binding resource:
the streaming seal-merge phase of the same operator scales $3.5\times$. What
remains is structural: a fine-grained slide decomposes into a small fixed
number of independently updatable structures --- three counting views, three
relations, one output view --- and no scheduler can extract more parallelism
than that width supplies. The honest conclusion is that no tractable change
lifts fine-grained per-slide maintenance past $\approx 1.64\times$ on this
path. The one identified lever is partitioning the output view itself, so
aggregation width scales with data rather than with plan structure; that
requires a chunked copy-on-write columnar redesign whose merge-only cousin
already measured net-negative above, and we leave it as future work
(Section~\ref{sec:eval-gaps}).

\paragraph*{A second operator hits the same ceiling.}
The width ceiling is not specific to counting-view maintenance. The SWAG Tier-2
batch-recompute operator --- a different intra-query parallelization, splitting
one aggregate recomputation across a \texttt{ForkJoinPool}, oracle-checked
against a closed form and an independent F-IVM baseline --- plateaus at
$\approx 3.4\times$ (small state) to $\approx 2.5\times$ (large state) on the UMA
and barely beats a no-split sequential baseline at $n{=}10^6$ ($1.27\times$). On
the 4-socket Intel host it reaches $\approx 4.4\times$ \emph{within one
memory-local socket} but is \emph{slower} across sockets than on 14 local cores
($0.7$ vs.\ $0.9$ at $n{=}10^6$): the single shared pooled computation pays
remote-memory cost past one NUMA node. A second, independently-oracle-checked
operator therefore reaches the same low intra-query ceiling, and adds a
locality corollary --- intra-operator parallelism is bounded not just by
structural width but by the single memory domain it can profitably use.

\paragraph*{Insight.}
Speedup ratios are a misleading engineering target. Two of the columnar steps
\emph{lowered} the eight-worker/one-worker ratio while raising absolute
throughput at both thread counts: they cheapened the phases that scale, which
shrinks the parallel fraction --- Amdahl's speedup metric punishes exactly
the optimizations that make the parallel section fast. The ledger step did
the opposite and is the more interesting datum: replacing the persistent
base-state copy removed a serial phase \emph{and} improved the scaling of
the untouched parallel phases, lifting eight-worker throughput by more than
one-worker throughput. The mechanism is that the garbage collector is a
shared resource: the copy allocated $\sim$140{,}000 nodes per slide, and its
allocation and collection pressure was paid disproportionately by the phases
running on eight threads. Cheapening a sequential phase improved parallel
scaling. Neither effect is visible in a speedup ratio alone; we therefore
report absolute throughput at fixed thread counts as the primary metric and
treat ratios as diagnostics.

\subsection{Multi-Query Scaling and a Work-Matched Flink Baseline}
\label{sec:eval-multiquery}

The engine-path study bounds what parallelism can do \emph{inside} one epoch
of one thin query. The classic streaming-engine scalability claim, however,
lives at a different granularity: many continuous queries sharing one host.
This subsection measures that regime with a fixed-work strong-scaling
protocol --- 24 independent, namespaced query lanes, partitioned across $T$
driver threads by ownership, identical total work at every $T$ --- and pairs
it with a work-matched Apache Flink~1.18.1 baseline on the same host (24
cores, 16 performance + 8 efficiency; local streaming MiniCluster, heap
state backend, object reuse on, no checkpointing).

Both systems consume the \emph{same} slide schedules from a single workload
generator, and lanes map to Flink subtasks with the same ownership
quantization as the TrieGS driver round-robin (balanced key assignment;
natural key hashing over 24 keys is skewed and would handicap Flink). The
Flink operator (variant A) is a textbook incremental multi-way join in a
\texttt{KeyedProcessFunction} --- per-relation multiset adjacency indexes in
keyed state, $O(\Delta)$ work per slide --- i.e.\ work-model-matched to
TrieGS and written as a competent Flink user would write it, since Flink
ships no incremental multi-way join. An idiomatic variant~B (native sliding
event-time windows, full recompute per fire) is implemented and
parity-gated as a bracket; it is strictly slower per slide by construction.
Every number sits behind a semantics gate: the Flink job's per-slide signed
answer-delta multisets are asserted equal to the brute-force oracle at
parallelism 1 and 2, and within each sweep the per-mix answer checksum is
bit-identical across every parallelism level --- a work-identity control
that rules out the parallel configurations silently doing different work.

Table~\ref{tab:multiquery-headline} reports the headline numbers; the mixes
are 24 thin 3-atom path lanes (\textsc{Path3}), 24 compute-dense 4-armed
spider lanes (\textsc{Spider4x2}), and a heterogeneous third-third-third mix
(\textsc{Mix3}: 8 path + 8 spider + 8 five-atom star lanes). Because the
host is heterogeneous (efficiency cores measure $\approx 0.55\times$ a
performance core on this workload), the relevant ideal is
$\approx 20.4\times$, not $24\times$; we report efficiency against that
figure.

\begin{table*}[t]
\centering
\small
\caption{Multi-query strong scaling, TrieGS vs.\ work-matched Flink
(24 lanes, fixed work, aggregate slides/s; speedup vs.\ each system's own
$T{=}1$; heterogeneous ideal $\approx$20.4x on this 16P+8E host). Last row:
intra-query morsel parallelism, which has no Flink counterpart (one
continuous query is one key, hence structurally parallelism-1).}
\label{tab:multiquery-headline}
\begin{tabular}{llrrrr}
\toprule
Mix & System & $T{=}1$ & $T{=}24$ & Speedup & Abs.\ gap \\
\midrule
\textsc{Path3} & TrieGS & 270.8 & 4{,}426.5 & 16.35x & \multirow{2}{*}{5.8--6.2x} \\
 & Flink & 46.8 & 718.0 & 15.36x & \\
\addlinespace
\textsc{Spider4x2} & TrieGS & 26.6 & 471.2 & 17.70x & \multirow{2}{*}{29--32x} \\
 & Flink & 0.83 & 16.4 & 19.84x & \\
\addlinespace
\textsc{Mix3} & TrieGS & 43.9 & 574.2 & 13.08x & \multirow{2}{*}{17--30x} \\
 & Flink & 2.57 & 19.4 & 7.57x$^{\dagger}$ & \\
\midrule
\multicolumn{2}{l}{Intra-query (\textsc{Star5}, morsels)} & 29.7 &
\multicolumn{2}{r}{7.82x at $t{=}16$} & no counterpart \\
\bottomrule
\multicolumn{6}{l}{\footnotesize $^{\dagger}$Plateau from $T{=}8$
(7.68x): slowest-class gating, see below.}
\end{tabular}
\end{table*}

Three observations. First, \emph{on homogeneous mixes the two systems scale
near-identically}: on the thin path, TrieGS reaches $16.3\times$ and Flink
$15.4\times$ at $T{=}24$ (80\% and 75\% of the heterogeneous ideal), with the
curves tracking within a few percent at every intermediate point, including
the same dip at $T{=}16$ (ownership quantization, below). On the
compute-dense spider Flink is in fact slightly steeper ($19.8\times$ vs.\
$17.7\times$) because its far higher per-slide cost amortizes framework
overhead. TrieGS's inter-query scaling is framework-grade; nothing in the
engine's shared infrastructure impedes it.

Second, \emph{the absolute gap is large and parallelism-independent}: TrieGS
is $5.8$--$6.2\times$ faster on the thin path and $\sim$29--32$\times$ faster
on the compute-dense spider at every thread count, under a work-equivalent
operator and identical schedules. Since the scaling curves match, the gap is
per-slide engine cost --- columnar $O(\Delta)$ maintenance with no
serialization or mailbox machinery --- not a parallelization artifact.

Third, on the heterogeneous \textsc{Mix3} the systems \emph{diverge in
scaling, not just in level}: Flink plateaus at $7.6\times$ from $T{=}8$
onward while TrieGS continues to $13.1\times$ at $T{=}24$. This is the one
qualitative difference in the sweep, and it has a precise mechanism.

\paragraph*{A slowest-class gating law for mixed workloads.}
In both systems the unit of parallelism in this regime is the lane: a lane's
slides are sequentially dependent, so no scheduler can split one lane across
threads. For a mix of query classes $c$ with $n_c$ lanes of per-slide cost
$w_c$, fixed-work speedup is therefore bounded by
\[
S(T) \;\le\; \frac{\sum_c n_c\, w_c}{\max_{\text{driver}} \sum_{\ell \in
\text{owned}} w_\ell},
\]
\[
\qquad\text{and for } T \ge n_{\max}:\quad
S_\infty \;=\; \frac{\sum_c n_c\, w_c}{w_{\max}},
\]
where $w_{\max}$ is the per-slide cost of the most expensive class and
$n_{\max}$ its lane count: once every expensive lane has a thread to itself,
the wall clock is one such lane running alone, and additional threads only
idle. The ceiling is a property of the \emph{cost ratio between query
classes}, not of thread count.

Both systems obey this law; they hit it at very different points because
their class-cost ratios differ. From the solo per-lane rates, TrieGS's
per-slide costs are $w_{\text{path}} \approx 3.4$\,ms, $w_{\text{star}}
\approx 28$\,ms, $w_{\text{spider}} \approx 37$\,ms, giving a predicted
\textsc{Mix3} ceiling of $\approx 14.7\times$; measured: $13.1\times$ at
$T{=}24$ (and $14.0\times$ in an independent longer-measurement run), i.e.\
TrieGS at 24 threads is already pressed against its own gating ceiling. For
Flink the spider costs $\approx 1{,}206$\,ms per slide against $21$\,ms for
the path --- the same $\sim$30$\times$ engine gap visible in the homogeneous
sweeps --- so the spider term dominates the sum and the predicted ceiling
collapses to $\approx 8\times(1+\epsilon)$ with $\epsilon \approx
0.05$--$0.1$; measured: a flat $7.57$--$7.68\times$ from $T{=}8$ through
$T{=}24$. Flink hits the identical wall $\sim$4$\times$ earlier in thread
count for one reason only: its per-slide cost on the expensive class is
$\sim$30$\times$ higher, which moves the mix's cost mass onto 8 unsplittable
lanes.

The interference control confirms the mechanism directly: inside
\textsc{Mix3} at $T{=}24$, \emph{every} lane --- path, spider, and star
alike --- advances at the same rate, which is $0.90\times$ the
\emph{spider's} solo rate and only $0.08\times$ the path's. The mix does not
average its classes; it is dragged to its slowest one.

\paragraph*{Ownership quantization and heterogeneous cores.}
Two second-order effects recur across every sweep and are worth naming
because either could be misread as an engine defect. First, at $T{=}16$ both
systems dip below their scaling trend on every mix (e.g.\ \textsc{Path3}:
TrieGS $10.5\times$, Flink $10.6\times$): 24 lanes over 16 threads leaves
eight threads owning two lanes each, so the ideal at that point is
$12\times$, not $16\times$ --- both systems sit at $\approx 0.88$ of the
quantized ideal, and the dip's presence \emph{in both systems at the same
point} is itself evidence that the harnesses impose identical ownership
structure. The spider mix's $T{=}16$ point is additionally run-sensitive
($11.2\times$ vs.\ $6.6\times$ across two committed runs of different
measurement lengths), consistent with quantization amplifying scheduling and
GC noise; we accordingly do not interpret $T{=}16$ levels, only the
$T{=}24$ endpoints and the curve shapes. Second, the efficiency cores help
throughput-oriented multi-query work (every mix improves from $T{=}16$ to
$T{=}24$) but \emph{hurt} compute-dense intra-query work: the spider's
morsel speedup regresses from $7.53\times$ at 16 workers to $5.97\times$ at
24, because a morsel barrier waits for its slowest worker and an efficiency
core is $\approx 0.55\times$ a performance core on this code.
Straggler-sensitive phases should not be scheduled onto heterogeneous cores
naively; this is a concrete input to the adaptive morsel-scheduling
direction in Section~\ref{sec:limitations}.

\paragraph*{Measurement discipline.}
Two protocol details materially affected validity. First, the checksum
work-identity control (bit-identical per-mix answer checksums across all
parallelism levels, asserted within every sweep) is what licenses reading
the curves as strong scaling at all. Second, warmup accounting is not a
formality on a JIT runtime: an early version of the sweep protocol re-paid
JIT compilation inside measured execution, roughly doubling
per-configuration execute time and inflating a full sweep severalfold while
contaminating early measurement periods; the committed protocol runs
explicit warmup periods excluded from timing. We flag this because scaling
studies on JVM-based engines that do not state their warmup protocol are
difficult to interpret.

\paragraph*{Insight.}
Mixed-workload scaling numbers are not comparable across systems without the
gating-law decomposition: a plateau at $T{=}8$ can indicate either a
scheduler defect or --- as here --- a per-class cost ratio that quantizes
away the remaining threads. The law also reads forward as a design
statement: the way to raise a mixed workload's ceiling is not more threads
but either (i) cheaper expensive classes --- exactly what the columnar
maintenance operator buys, moving the ceiling from $\approx 8\times$ to
$\approx 14.7\times$ on the same mix --- or (ii) intra-query parallelism
that breaks the lane atomicity assumption, which TrieGS's morsel path
provides for compute-dense plans (Section~\ref{sec:eval-anatomy}) and
Flink's per-key operator model structurally cannot.

\subsection{Fair Hybrid Scheduling: Reclaiming Gating-Induced Idleness}
\label{sec:eval-hybrid}

The gating law leaves a specific resource on the table. On the heterogeneous
mix at $C{=}24$, the fair pure-inter configuration runs at CPU utilization
0.59: once every expensive lane owns a thread, the remaining threads can only
idle, because a lane's slides are sequentially dependent. The natural design
response is a hybrid: keep the cheap lanes on inline driver threads, and pool
the expensive lanes --- each served by a lightweight feeder thread that
submits its slides' morsels into one shared work-stealing pool --- so the
threads the gating law would idle instead execute morsels of the pooled
lanes. The core budget is static, $D_{\mathrm{inline}} + W \le C$, where
$D_{\mathrm{inline}}$ is the inline driver count and $W$ the pool worker
count. This subsection measures whether that bridge between the two
parallelism granularities pays, under a fairness protocol strict enough that
the answer means something.

\paragraph*{Fairness methodology.}
A hybrid-vs-pure comparison is only as strong as its pure baseline, and the
obvious baseline is a strawman. Round-robin lane assignment (lane index
modulo driver count) ignores lane costs; on this mix at $C{=}16$ it reaches
$8.13\times$ where a cost-aware longest-processing-time (LPT) assignment
reaches $13.61\times$ --- the naive baseline gives up $\sim$40\% before the
hybrid is even started, and any lane-granular scaling study that round-robins
heterogeneous lanes is comparing against a strawman. All pure-inter baselines
below therefore use LPT. Threads are role-split into inline drivers, feeders,
and pool workers, with the static budget $D_{\mathrm{inline}} + W \le C$ and
feeders exempt as coordination overhead; the exemption is then audited by a
measured CPU-budget gate: total process CPU time must satisfy
$\mathrm{CPU}_{\mathrm{total}} \le 1.05 \cdot \mathrm{wall} \cdot C$
(\texttt{ThreadMXBean}), which is hardware-enforced at $C{=}24$ on this
24-core host. Every reported point passes the gate
(\texttt{budgetRespected}), and the feeders account for under 2\% of total
CPU, so the hybrid's gain is not hidden extra parallelism.

\paragraph*{Result: $+9.6\%$ over a fair baseline at equal core budget.}
All numbers are medians of three fresh-JVM repetitions on the heterogeneous
24-lane mix of Section~\ref{sec:eval-multiquery}. The fair pure-inter LPT
baseline at $C{=}24$ reaches $13.07\times$ at utilization 0.59. The best
hybrid pools only the eight spider lanes behind eight feeders, with
$D_{\mathrm{inline}}{=}9$ inline drivers owning the sixteen cheap lanes,
$W{=}15$ pool workers, and per-slide morsel fan-out 2: $14.33\times$ at
utilization 0.87, i.e.\ $+9.6\%$ over the fair baseline at an identical core
budget. The fan-out-4 variant medians $14.16\times$ with a single-repetition
maximum of $15.15\times$: the $\ge 15\times$ stretch target is touched once
but is not robust across repetitions, and we do not claim it.
Table~\ref{tab:hybrid-fair} summarizes the key configurations and controls.

\begin{table}[t]
\centering
\small
\caption{Fair hybrid scheduling on the heterogeneous 24-lane mix
(speedup vs.\ $T{=}1$; medians of 3 fresh-JVM repetitions; every point
passes the measured CPU-budget gate
$\mathrm{CPU}_{\mathrm{total}} \le 1.05 \cdot \mathrm{wall} \cdot C$).}
\label{tab:hybrid-fair}
\begin{tabular}{lrrr}
\toprule
Configuration & $C$ & Speedup & CPU util. \\
\midrule
Fair pure-inter (LPT) & 24 & 13.07x & 0.59 \\
Fair pure-inter (LPT) & 16 & 13.61x & 0.89 \\
Strawman pure-inter (round-robin) & 16 & 8.13x & 0.53 \\
Hybrid, spiders-only, fan-out 2 & 24 & \textbf{14.33x} & 0.87 \\
Hybrid, spiders-only, fan-out 4 & 24 & 14.16x & 0.88 \\
Hybrid falsification control & 8 & 3.88x & 0.59 \\
Homogeneous-spider control, hybrid & 24 & 12.6x & --- \\
Homogeneous-spider control, pure & 24 & 18.1x & --- \\
\bottomrule
\end{tabular}
\end{table}

\paragraph*{Mechanism: idle-capacity reclaim, not gate movement.}
The hybrid does not meaningfully accelerate the expensive class: the pooled
spider lanes' per-slide wall drops only from 42.7 to 38.8\,ms ($-9\%$), and
the star lanes --- now sharing nine inline drivers instead of owning more ---
rise from 31.8 to 35.3\,ms. The gating ceiling itself therefore barely moves;
the entire win is utilization, 0.59 to 0.87. This is exactly what the
structural-width ceiling of Section~\ref{sec:eval-anatomy} predicts: a pooled
lane's morsels run at width-law efficiency, $\approx 1.3$--$1.6\times$
effective at fine-grained delta sizes, so pooling cannot move the gate --- it
can only fill the idle cores with that modestly-parallel work. The reclaimed
throughput is packing, not acceleration.

\paragraph*{Controls.}
Three controls license the $+9.6\%$ as real rather than artifact. First, a
falsification control at $C{=}8$: with no gating-induced idleness to reclaim
(the core budget is scarcer than the expensive-lane count), the same hybrid
collapses to $3.88\times$ against a pure-inter $7.66\times$, with the spider
wall inflating to 141\,ms --- the win is regime-specific, as the mechanism
requires. Second, a homogeneous control of 24 identical spider lanes, where
utilization is already high and the gating law leaves nothing idle: pure
inter-query wins decisively, $18.1\times$ against the hybrid's $12.6\times$,
isolating the pool's coordination overhead. Third, a naive adaptive
controller that water-filled workers by static cost shares chose
$D_{\mathrm{inline}}{=}1, W{=}23$ and collapsed to $2.55\times$; it was
replaced by a measured calibrate-and-select controller that briefly runs the
candidate configurations and picks by observed throughput, which recovers the
pure-inter configuration when $C \ge$ lane count and the hybrid when
$C <$ lane count. Adaptivity here must be measured, not modeled.

\paragraph*{Latency under open-loop arrival.}
Fixed-work strong scaling measures capacity; a scheduler is also judged by
tail latency under arrival, and the two objectives need not agree. An
open-loop companion study drives the same 24 lanes with scheduled arrivals
(every lane one slide per period $P$, per-lane FIFO, latency measured as
completion minus scheduled arrival; 200 arrivals per lane, two fresh-JVM
repetitions) at two load points, $\approx 70\%$ and $\approx 93\%$ of the
measured $C{=}24$ capacity. At moderate load, pure-inter wins the tail
everywhere --- dedicated per-lane threads add no scheduling layers, and at
70\% utilization there is little queueing to absorb. Near saturation the
ordering inverts decisively: the pure-inter spider thread runs at
$\approx 0.83$ utilization and its per-lane queue tail explodes to a p99 of
$148$--$180$\,ms across the two repetitions, while the spiders-only hybrid
holds the spider p99 at $62$--$68$\,ms --- $2.2$--$2.9\times$ lower --- and
the aggregate p99 at $59$--$61$\,ms against $77$--$142$\,ms, at slightly
higher achieved throughput. The mechanism is again not acceleration: the
pooled spider's mean service time is unchanged ($\approx 37$\,ms, as the
width law requires when eight simultaneous arrivals share fifteen workers).
The pool buys \emph{statistical multiplexing} --- burst capacity shared
across the pooled lanes, so a slow slide no longer serializes behind its own
lane's dedicated thread. Notably, the configuration this study was designed
to vindicate is instead refuted: deep pooling (all sixteen expensive lanes
behind one 23-worker pool), whose halved per-slide wall under fixed work
(42.7 to 24.9\,ms) motivated the frontier hypothesis, is dominated at both
load points --- that service-time win was an artifact of round-robin pacing
and does not survive simultaneous open-loop arrivals, which congest the
shared pool (its spider service time \emph{rises} to $42$--$43$\,ms). The
capacity winner and the tail-latency winner near saturation are the same
configuration, and the crossover with pure-inter is load-dependent
($\approx 70\%$ utilization on this mix) --- which extends the
calibrate-and-select controller's remit naturally from a throughput
objective to a latency SLO.

\paragraph*{Insight.}
The results compose into a third law, the \emph{idle-capacity reclaim law}:
hybrid pooling pays if and only if the gating law leaves utilization below
one, and its gain is bounded by the width law, because the reclaimed cores
run the pooled lane's morsels at $\approx 1.3$--$1.6\times$ effective
parallelism. The reclaimable band --- between the gating ceiling
($\approx 14.7\times$ on this mix, Section~\ref{sec:eval-multiquery}) and
the heterogeneous hardware ideal ($\approx 20.4\times$) --- is therefore
only partially recoverable by any scheduler that cannot break lane
atomicity more efficiently than the structural width permits. A second,
independent finding falls out of the fair baseline itself: fair pure-inter
peaks at $C{=}16$ ($13.61\times$), \emph{above} its $C{=}24$ level
($13.07\times$) --- on this host the efficiency cores are a net negative
for pure inter-query execution of this mix, a placement effect consistent
with the straggler sensitivity of Section~\ref{sec:eval-multiquery}.

\subsection{NUMA-Node Affinity: Placement Is a First-Order Effect}
\label{sec:eval-intel}

The NUMA host exposes 112 logical CPUs, 4 sockets, 14 cores per socket, 2
threads per core, and 4 NUMA nodes. Each node exposes 28 logical CPUs. 

The sweep contains 660 JMH measurements: 5 placements, 12 worker counts, and 11
SNB-shaped sliding-window query benchmarks. The default scheduler placement
peaks at 8 workers with a geomean of 1{,}399 ops/s. CPU-node affinity changes
the curve: throughput jumps between 4 and 8 workers and then remains nearly flat
across larger worker counts. The best run is node 2 at 16 workers with a geomean
of 3{,}564 ops/s. This is 2.55x over the best default placement and 5.71x over
the default one-worker baseline.

\begin{table*}[t]
\centering
\small
\caption{Intel CPU-node affinity summary. Throughput is geomean ops/s over the
11 SNB-shaped sliding-window queries.}
\label{tab:intel-cpu-node-summary}
\begin{tabular}{lrrr}
\toprule
Placement & Best workers & Best geomean ops/s & Vs. default best \\
\midrule
Default scheduler placement & 8 & 1{,}399 & 1.00x \\
CPU-node 0 & 32 & 3{,}558 & 2.54x \\
CPU-node 1 & 16 & 3{,}548 & 2.54x \\
CPU-node 2 & 16 & 3{,}564 & 2.55x \\
CPU-node 3 & 64 & 3{,}539 & 2.53x \\
\bottomrule
\end{tabular}
\end{table*}

The four CPU-node best results are very consistent: mean 3{,}552 ops/s,
standard deviation 11 ops/s, and coefficient of variation about 0.3\%. The
effect is not localized to one favored socket. Every query improves under
CPU-node affinity; the per-query improvement range is 2.33x to 2.81x over the
best default placement for the same query.

\begin{table}[t]
\centering
\small
\caption{Intel CPU-affinity effect.}
\label{tab:intel-affinity-effect}
\begin{tabular}{lr}
\toprule
Quantity & Result \\
\midrule
Default best & 1{,}399 ops/s at 8 workers \\
Best pinned run & 3{,}564 ops/s at 16 workers on node 2 \\
Speedup over default best & 2.55x \\
Speedup over default 1 worker & 5.71x \\
Per-query improvement range & 2.33x--2.81x \\
CPU-node best-result CV & about 0.3\% \\
\bottomrule
\end{tabular}
\end{table}

\paragraph*{Insight.}
The NUMA result is the clearest systems insight in the current evaluation:
placement can dominate algorithmic tuning at this stage. Default scheduling
leaves substantial throughput on the table, while pinning workers to a CPU node
makes performance both higher and more stable. This supports the TrieGS's
morselized scheduling direction, especially prefix ownership and explicit
placement. 

\subsection{Homogeneous-Core Replication: The Three Laws Off UMA}
\label{sec:eval-intel-replication}

Every scaling number above comes from one 24-core UMA node with a
16-performance/8-efficiency split, and two of the three laws could in
principle be silicon artifacts: the structural-width ceiling
(Section~\ref{sec:eval-anatomy}) could reflect the UMA's memory system, and
the fair-hybrid finding that pure inter-query peaks at $C{=}16$ rather than
$C{=}24$ (Section~\ref{sec:eval-hybrid}) is explicitly attributed to the
efficiency cores. We therefore replicated the single-query, multi-query, and
hybrid experiments on the NUMA host of Section~\ref{sec:eval-intel} --- 4
sockets $\times$ 14 cores $\times$ 2 threads, all \emph{homogeneous} --- pinned
to one NUMA node (28 logical CPUs) with \texttt{numactl --cpunodebind --membind}
(true memory-local placement; the container here carries \texttt{CAP\_SYS\_NICE},
unlike Section~\ref{sec:eval-intel}), and to the whole machine.

\paragraph*{The structural-width law is hardware-independent.}
Single-query per-slide maintenance on 28 homogeneous cores plateaus almost
immediately: at fan-out 8 it reaches $1.14\times$ (peak at four workers, then
flat), and at fan-out 16 $1.39\times$ (peak at eight, then flat) --- if
anything a \emph{lower} ceiling than the Intel-NUMA's $\approx 1.6\times$, on faster
scaling silicon. The compute-density invariance reproduces exactly ($4\times$
denser joins, same curve), as does the per-slide allocation
(\texttt{gc.alloc.rate.norm} $14.06$ / $54.4$\,MB at fan-out 8 / 16, identical
to the Apple's UMA --- the maintenance is deterministic). A memory-bandwidth or
delta-size explanation predicts the two architectures separate; they do not.
The ceiling is structural parallel width, confirmed on a second, homogeneous
memory system.

\paragraph*{The $C{=}16$ anomaly was the efficiency cores.}
Multi-query \textsc{Mix3} strong scaling on the homogeneous node is
\emph{monotone} --- $1.86 / 3.27 / 5.34 / 8.04 / 9.67\times$ at
$1/2/4/7/14/28$ drivers --- with none of the Apple's UMA's peak-at-$C{=}16$,
dip-at-$C{=}24$ signature. That non-monotonicity was therefore a placement
artifact of the 16P+8E split, exactly as Section~\ref{sec:eval-hybrid}
conjectured, not an engine or gating effect. Across all four sockets the whole
machine reaches $28.8\times$ at 56 drivers, rolling off to $25.9\times$ at 112
(SMT plus cross-socket traffic); the gating and framework-grade-scaling stories
of Section~\ref{sec:eval-multiquery} hold on homogeneous cores.

\paragraph*{The reclaim law holds, but the pooling set is cost-ratio-dependent.}
The idle-capacity reclaim reproduces in kind but not in magnitude, and the
difference is instructive. The single fair-hybrid point (pool the spiders,
$D_{\mathrm{inline}}{=}10$, $W{=}18$) lifts utilization $0.55{\to}0.83$ but
throughput only $+0.4\%$ ($9.96\times$ vs.\ $9.92\times$), because on this host
the star class costs $108$\,ms per slide against the spider's $135$\,ms ---
nearly flat, versus the UMA's $29$ vs.\ $42$. Pooling only spiders then starves
the stars left on the inline drivers (star wall $108{\to}144$\,ms), cancelling
the gain. The same effect is sharper under open-loop arrival (load points
auto-calibrated to the node's measured $\approx 182$ slides/s capacity):
at $\approx 70\%$ load pure inter-query wins the tail (aggregate p99 $196$\,ms
vs.\ $211$/$213$), but near saturation ($\approx 93\%$) pure inter-query
\emph{saturates} and its spider p99 explodes to $1451$\,ms, while pooling
\emph{both} expensive classes (deep pooling) keeps every class bounded
(aggregate p99 $258$\,ms) and the spiders-only hybrid sits in between
($797$\,ms) precisely because it leaves the stars inline. So the reclaim law is
architecture-independent, but the \emph{set} of lanes worth pooling tracks the
class-cost spread: spiders-only when one class dominates (Apple Silicon, spider-and-star
when the expensive classes are comparable (Intel). This is a direct argument
for the measured calibrate-and-select controller of
Section~\ref{sec:eval-hybrid} over any fixed pooling heuristic.

\paragraph*{Insight.}
The replication does what a second architecture should: it promotes the
structural-width ceiling from an Apple's UMA observation to a hardware-independent
property, explains away the one non-monotone anomaly in the UMA data as an
efficiency-core placement effect, and stress-tests the reclaim law into a
sharper, cost-ratio-aware form. The two laws that are properties of the
workload (gating, width) transfer unchanged; the one that is a property of the
scheduler (reclaim) transfers in mechanism and generalizes in policy. 

\subsection{Operator Generality and the Hardware Envelope}
\label{sec:eval-operators}

The three laws were established on the streaming maintenance operator. Do they
describe that operator, or streaming operators in general? We test two more
operator classes --- a worst-case-optimal (WCOJ) triangle join and a full-RDFS
RETE reasoner --- under the same instrument as the multi-query study: $N$
independent single-threaded instances on $N$ threads, aggregate throughput,
efficiency $E(N)=T(N)/(N\,T(1))$. The new ingredient is a \emph{spin control}:
a pure-ALU benchmark (\texttt{Blackhole.consumeCPU}) co-run at every thread
count, so the hardware envelope can be divided out, $\hat{E}(N)=E_{\mathrm{op}}/
E_{\mathrm{spin}}$. Without it, ``the join scales to $23\times$'' and ``the ALU
scales to $86\times$'' at the same 112 threads are not comparable; with it, both
become a fraction of what the silicon delivers. We run on the Apple UMA (16P+8E, single
memory domain) and the homogeneous 4-socket Intel NUMA host, where the spin control is
essential because the raw curves span P/E, cross-socket, and SMT segments.

\paragraph*{Operator scaling is memory-bound, not compute-bound.}
The spin control scales near-ideally --- $19.7\times$ at 24 threads on the Apple
(97\% of the $\approx 20.4\times$ heterogeneous ideal), and $46\times$ at 56 and
$85.8\times$ at 112 on Intel. The operators do not come close. At full width the
WCOJ join reaches $\hat{E}=0.55$ (M2, 24T) and $0.41$--$0.27$ (Intel, 56--112T);
the reasoner reaches $\hat{E}=0.48$ (M2) and collapses to $0.15$ at 112T on
Intel. Independent operator instances share no algorithmic state --- they are
embarrassingly parallel by construction --- so the shortfall is entirely the
memory system: at scale these operators are bandwidth- and allocation-bound, and
the spin divisor makes that quantitative rather than asserted. This is the same
finding the intra-query anatomy reached for one operator
(Section~\ref{sec:eval-anatomy}), now shown to be a property of the operator
\emph{class}, not of TrieGS maintenance.

\paragraph*{Reasoning hits the memory wall before the join.}
The two operators are not bounded equally, and the ordering is consistent and
mechanistic. In a single memory-local domain --- the Apple M2 up to a few threads, and
Intel's memory-bound Region A (14 cores, \texttt{--membind}) --- the reasoner
actually scales \emph{better} than the join ($\hat{E}=0.86$ vs.\ $0.73$ at 14
Intel cores), because its per-arrival work amortizes framework overhead. But as
memory domains and contention grow it saturates first: on the M2's 16P+8E part
its normalized efficiency falls below the join's beyond eight threads
($0.48$ vs.\ $0.55$ at 24), and on Intel it \emph{peaks at 28 threads} --- two
sockets --- at $14.8\times$ and then \emph{declines} ($13.4\times$ at 42,
$12.6\times$ at 112), while the join keeps climbing to $19\times$ at 56 and
$23\times$ at 112. The reasoner's larger per-arrival working set and materialized
inference make it the first to pay cross-socket remote-memory cost; the join is
the more hardware-bound operator. The law is general (both are memory-bound); the
constant that matters is the operator's bytes-touched-per-unit-work.

\subsection{Windowing Restores Reasoning Scalability}
\label{sec:eval-windowed-reasoning}

The previous section measured the reasoner \emph{unbounded}: a full-RDFS RETE
network whose working memory grows without limit, which is the worst case for a
bandwidth-bound operator and the reason it saturates first. But streaming
reasoning is not unbounded --- it runs over a \emph{window}. If the memory wall is
set by working-set size, bounding the window should move it. We test this on a
real reasoning workload rather than a synthetic TBox: LUBM(1) (103{,}104 triples
from the Lehigh UBA generator) under the real VLog~\cite{DBLP:conf/semweb/CarralDGJKU19} \texttt{LUBM\_L} rule set (170
recursive Datalog rules), replayed as a windowed event stream (5\,s window, so the
per-stream working set holds $\approx$5{,}000 facts) through the native RETE path.
Because a single stream cannot be hash-partitioned across cores without breaking
cross-entity joins (e.g.\ $RP6(X)\!:\!-RP2(X),RP19(X,X_1),RP14(X_1)$ joins on $X$
and $X_1$), we run $N$ independent windowed streams on $N$ threads --- the
multi-tenant capacity question --- under the same aggregate-throughput,
spin-normalized instrument. A soundness check anchors correctness: on an unbounded
window the windowed stream and a batch pass converge to the identical
$121{,}512$-fact closure.

\begin{table}[t]
\centering
\small
\caption{Multi-core windowed streaming reasoning (LUBM(1)$\times$\texttt{LUBM\_L},
M2 16P+8E). Windowed reasoning tracks the pure-ALU spin envelope through the 16
performance cores; the unbounded reasoner of
Section~\ref{sec:eval-operators} reached only $\hat{E}=0.48$ at 24 threads.}
\label{tab:windowed-reasoning-scaling}
\begin{tabular}{rrrrr}
\toprule
Threads & Aggregate kev/s & Speedup & $E(N)$ & Spin $E(N)$ \\
\midrule
1  & 17.4  & 1.00x  & 1.00 & 1.00 \\
2  & 34.5  & 1.98x  & 0.99 & 0.96 \\
4  & 70.0  & 4.02x  & 1.01 & 0.94 \\
8  & 142.2 & 8.17x  & 1.02 & 0.94 \\
16 & 286.4 & 16.45x & 1.03 & 0.93 \\
24 & 316.1 & 18.16x & 0.76 & 0.82 \\
\bottomrule
\end{tabular}
\end{table}

\paragraph*{The window is what restores scaling.}
Windowed reasoning scales near-linearly to $16.45\times$ across the 16 performance
cores, with efficiency $E(N)\approx 1.0$ that meets or exceeds the pure-ALU spin
control at every point through 16 threads
(Table~\ref{tab:windowed-reasoning-scaling}); normalized against the spin
envelope this is $\hat{E}\approx 1.0$ --- the operator runs at what the silicon
delivers. The same engine and rule engine, measured \emph{unbounded} in
Section~\ref{sec:eval-operators}, reached only $\hat{E}=0.48$ at 24 threads and was
the first operator to hit the wall. The sole difference is the window: bounding it
caps each stream's working set at a few thousand facts, small enough to stay
cache-resident, which removes the bandwidth pressure that throttled the unbounded
reasoner. The $16\!\to\!24$ dip ($E$ from 1.03 to 0.76) is not a memory wall but
the M2's eight slower efficiency cores joining --- the spin control dips in
lockstep (0.93 to 0.82). Aggregate single-node capacity reaches $\approx$286\,k
windowed-reasoning events/s on the performance cores.

\paragraph*{Bounded latency, and a capacity that is a first-class number.}
Per-event windowed maintenance is cheap and stable --- $p_{50}=50\,\mu$s,
$p_{99}=78\,\mu$s, $p_{99}/p_{50}=1.6\times$ --- an order of magnitude below the
insert-only accumulation the window replaces (488\,$\mu$s / 1.74\,ms), again
because the working set is bounded rather than growing. Replaying the measured
service times against a fixed offered-arrival schedule through a single-server
queue (the open-loop model of Section~\ref{sec:eval-hybrid}) yields a textbook
saturation curve: end-to-end latency is flat and sub-100\,$\mu$s to
$\approx$16\,k\,events/s, lifts through a knee at 16--19\,k, and past a single
core's capacity of $\approx$17--19\,k\,events/s the queue diverges. This is the
result a batch reasoner (LTG/VLog) cannot report at all --- a latency distribution
and a sustainable throughput, not one materialization wall-clock. The finding
sharpens Section~\ref{sec:eval-operators}'s closing constant: the reasoner's
bytes-touched-per-unit-work is set by the window, not fixed by the operator, and
windowing --- the defining feature of stream processing --- is the mechanism that
holds it small. This is the reasoning analogue of TrieGS's bounded incremental
maintenance. 

\paragraph*{Partitioning one stream compounds the effect --- superlinearly.}
The scaling above runs $N$ \emph{independent} streams; the harder question is
whether \emph{one} stream's reasoning workload can be shared across cores. It
cannot be naively hash-partitioned: six of the 170 rules join across entities
and one is a recursive transitive closure, so object-side facts are needed on
foreign partitions. We partition by subject hash and \emph{replicate} exactly
the facts a static rule analysis proves necessary: the base support closure of
every foreign body atom, pruned by a data cover-check (on LUBM(1): five
\texttt{rdf:type} classes plus \texttt{subOrganizationOf} --- 1{,}881 facts,
replication fraction $f=1.82\%$, versus $18.9\%$ for the unpruned sound
closure). Replication bounds the ideal speedup by an Amdahl-on-data law,
$S(N)=1/((1-f)/N+f)$: $12.6\times$ at $N{=}16$ for the pruned set, $4.2\times$
for the sound set. Measured, with a parity oracle certifying at every width that
the deduplicated derivation set is \emph{identical} to a single unpartitioned
instance: the sound set lands on its law ($3.97\times$ at 16) and the measured
broadcast fan-out matches $1+f(N-1)$ to three decimals --- but the pruned set
reaches $\mathbf{31.7\times}$ at $N{=}16$ (588\,k\,events/s from an 18.5\,k
baseline), \emph{superlinear} and $2.5\times$ above its own law's ceiling. The
law is not wrong; its constant-cost premise is. Partitioning divides the
\emph{window state} along with the arrivals: each worker's window holds
$\approx 5{,}000/N$ facts, join cost falls with working-set size, and per-event
service drops from $50\,\mu$s to $13.8\,\mu$s at $N{=}16$. The measured curve is
arrival-division ($\le 12.6\times$) compounded with service-cheapening
($\approx 3.6\times$). One partitioned stream thereby out-throughputs the
sixteen independent streams of Table~\ref{tab:windowed-reasoning-scaling} on the
same silicon (588\,k vs.\ 286\,k\,events/s): independent streams each carry the
full window; partitions carry $\approx$300 facts each. The plateau at $N{=}24$
($31.8\times$) is the single dispatcher thread saturating at
$\approx 1.7\,\mu$s/event, not the operators. The general lesson joins the two
laws of this section: the window bounds the working set, and partitioning
divides it --- for state-superlinear operators like reasoning joins, that
division is worth more than the cores it costs.

\subsection{What the Results Mean for TrieGS}
\label{sec:eval-interpretation}


\paragraph*{Root-vector publication is ahead of LFNT performance.}
The correctness and parity results support the semantic architecture. The
relation-backend matrix does not yet support LFNT as a performance contribution.

\paragraph*{The workload story is now credible.}
The evaluation now includes nontrivial sliding-window dynamic joins and temporal
joins. This makes the evaluation more realistic than relation microbenchmarks,
but final generality still requires RiverBench and larger workload sweeps.

\paragraph*{Placement is part of the algorithmic story.}
The Intel CPU-node result shows that multicore behavior is not only about worker
count. Worker placement changes the geomean by more than the difference between
many algorithmic variants in the current prototype. The final system should
therefore expose placement-aware scheduling as an evaluated design choice.

\paragraph*{The scalability claim lives at inter-query granularity.}
The engine-path study caps fine-grained intra-epoch parallelism at
$\approx 1.6$--$1.9\times$ and, with the width-law probes of
Section~\ref{sec:eval-anatomy}, explains why; the multi-query study shows
the same engine scaling at $13.1$--$17.7\times$ on 24 threads across three
mixes, matching a work-equivalent Flink baseline's curve shape while
exceeding its absolute throughput by $5.8\times$ to $\sim$32$\times$. The
pre-registered $\ge 8\times$ target is met by increasing the processing
workload --- more queries, or denser plans ($7.5$--$7.8\times$ intra-query
at 16 workers) --- which is the honest form of the classic streaming-engine
scalability claim, and the slowest-class gating law bounds what any
lane-parallel system can do on heterogeneous mixes.

\paragraph*{Three laws compose into one ceiling.}
The scaling results reduce to three named, separately measured laws. The
\emph{slowest-class gating law} bounds lane-parallel scaling of a mixed
workload by the cost ratio between query classes
(Section~\ref{sec:eval-multiquery}); the \emph{structural width law} caps
fine-grained intra-query parallelism at the small fixed number of
independently updatable structures, regardless of delta size or compute
density (Section~\ref{sec:eval-anatomy}); and the \emph{idle-capacity
reclaim law} states that hybrid pooling converts gating-induced idleness
into throughput, with a gain bounded by the width law
(Section~\ref{sec:eval-hybrid}). Composed, they bound what any scheduler
can achieve on a mixed workload: the ceiling is
$\min(\text{hardware ideal},$ $ \text{gating ceiling} +  \text{reclaimable
idleness})$, where the reclaimable term is discounted by the width-law
efficiency of the pooled lanes. On this host and mix the terms are
$\approx 20.4\times$, $\approx 14.7\times$, and a measured reclaim to
$14.33\times$. The first two laws are workload-structural rather than
engine-specific --- Flink measurably obeys the gating law, and its per-key
operator model makes the width law moot by forcing parallelism~1 per query
--- so TrieGS's advantage is \emph{level} (per-slide cost, the
$5.8$--$32\times$ absolute gap) plus the scheduler freedom (morsels, hybrid
pooling) to buy a real but bounded slice of the remaining idleness.

\subsection{Remaining Gaps}
\label{sec:eval-gaps}

The current implementations provide a strong first-round evaluation but the following gaps remain.

\begin{itemize}[leftmargin=*]
  \item Run RiverBench workloads to test RDF burstiness, duplicate assertions,
  watermarks, and stream-window behavior beyond SNB-shaped local data.
  \item Extend the Flink matrix beyond the committed single-node keyed
  comparison (Section~\ref{sec:eval-multiquery}): distributed and
  durable-state (RocksDB + checkpointing) configurations matched against a
  persistent TrieGS backend, a Flink SQL/Table-API formulation to bound the
  comparison from the usability side, repeated process runs, and resource
  accounting.
  \item Complete true memory-bound NUMA experiments if memory binding can be
  enabled; otherwise keep the claim at CPU-affinity level.
  \item Collect JFR, allocation, GC, perf, cache-miss, branch-miss, and remote
  memory counters.
  \item Add memory-lifetime metrics: retained bytes, copied nodes/bytes, active
  versions, retired nodes, and pinned-snapshot retention.
  \item Complete the SWAG aggregate matrix and clarify where aggregate summaries
  replace full enumeration.
  \nop{
  \item \emph{Done (Section~\ref{sec:eval-intel-replication}).}
  Replicating the sweeps on the homogeneous-core Intel/NUMA host with memory-local
  pinning separated core-type from worker-count effects: the structural-width
  ceiling reproduced (single-query $\le 1.4\times$, compute-density-invariant,
  identical allocation), the $C{=}16$ pure-inter anomaly vanished into a monotone
  curve on homogeneous cores (confirming it was an efficiency-core placement
  effect), and the reclaim law generalized to a cost-ratio-aware pooling set.
  Remaining hardware work is deeper profiling (remote-memory counters, JFR
  attribution), not the placement confound itself.
  }
  \item Distribute the lanes: the inter-query design shards trivially (lanes
  share no state and driver partitioning is already an ownership function),
  so a multi-node run tests only the ingest and publication fabric, not the
  operator.
  \item Break the structural width law. The shard-parallel delta-extraction
  attack listed in earlier drafts has since been implemented, oracle-gated,
  measured net-negative, and reverted (Section~\ref{sec:eval-anatomy}); the
  one identified remaining lever is partitioning the output view via a
  chunked copy-on-write columnar redesign so aggregation width scales with
  data rather than plan structure 
  \item Add core-type-aware morsel scheduling: the spider's
  $7.53\times{\to}5.97\times$ regression from 16 to 24 workers is a
  straggler effect at morsel barriers on efficiency cores, directly testable
  against Table~\ref{tab:intra-query-shapes}.
  \item Fold the measured capacity--latency frontier into the narrative: the
  open-loop study (200 arrivals per lane, 70\%/93\% load points, two
  repetitions) finds that near saturation the fair capacity winner is
  \emph{also} the tail-latency winner --- the pooled spider class's p99 falls
  from 148--180\,ms (pure-inter, whose dedicated-thread queue tail explodes
  at $\approx$0.83 utilization) to 62--68\,ms under spiders-only pooling,
  with unchanged pooled service time ($\approx$37\,ms): the win is
  statistical multiplexing across the pooled lanes, exactly as the width law
  requires. The deep-pooling configuration whose halved fixed-work wall
  motivated the frontier hypothesis is dominated at both load points --- its
  service-time win was a pacing artifact that does not survive simultaneous
  open-loop arrivals --- and pure-inter's dedicated threads win the tail
  below $\approx$70\% utilization. The load-dependent crossover is reported
  in Section~\ref{sec:eval-hybrid}; what remains is replicating it beyond
  one host.
\end{itemize}

In summary, the current implemetations establish correctness-first progress, expose a
useful negative backend result, demonstrate that the dynamic workload suite is
now nontrivial, and identify CPU affinity as a major multicore factor. The final
paper should preserve these insights while reserving final performance claims
for the completed protocol.

\section{Related Work}
\label{sec:related-work}

\subsection{Dynamic Constant-Delay Enumeration}

Acyclic query evaluation originates with Yannakakis's algorithm~\cite{yannakakis1981algorithms},
and free-connexity characterizes an important tractable region for
constant-delay enumeration~\cite{bagan2007acyclic}. Dynamic Yannakakis builds
compact representations for free-connex acyclic queries under updates
~\cite{DBLP:conf/sigmod/IdrisUV17,idris2019efficient}; General Dynamic
Yannakakis extends the framework to richer joins~\cite{IdrisUVVL20}. For the
more restrictive q-hierarchical class, constant update time and constant-delay
enumeration are possible under the corresponding dynamic model
~\cite{berkholz2017answering}. \TrieGS does not enlarge these query classes. Its
core theorem is deliberately restricted to full acyclic queries, with projected
free-connex queries supported through an explicit answer view.

CROWN develops change propagation without joins and supports constant-delay
enumeration of both full results and deltas~\cite{WangHDY23}. It is the closest
query-processing predecessor and the primary algorithmic reference in our
evaluation. CROWN's contribution is a join-free propagation plan that avoids
large intermediate views; its prototype maps plans to Flink DataStream
operators. \TrieGS addresses a different concurrency boundary: several
versioned relations live in one shared-memory process, a long-running
enumerator pins a completed state, and update workers concurrently construct
the next state. The atomic root vector is the protocol that prevents fractured
reads across those relations. Our CQELS baseline is therefore labeled
CROWN-style until it passes plan, full-result, delta-result, and qualitative
trend validation against the published construction.

\subsection{Incremental View Maintenance}

DBToaster materializes higher-order deltas, while F-IVM combines incremental
maintenance with factorized representations~\cite{DBLP:journals/pvldb/AhmadKKN12,
nikolic2018incremental,nikolic2020f}. Specialized heavy/light methods maintain
cyclic patterns such as triangles~\cite{DBLP:journals/tods/KaraNNOZ20}. These
techniques are complementary candidates for cyclic bag modules. Their payload
and maintenance view of query evaluation motivates our use of multiplicity
payloads, exact signed deltas, and primary/secondary/tertiary index roles.

\subsection{Compressed Query-Result Representations}

Deep and Koutris study compressed representations of conjunctive-query results
under adorned access patterns, explicitly trading representation space against
access delay and total answer time~\cite{deep2018compressed}. This model is a
useful lens for TrieGS's maintained access structures. The core TrieGS theorem
chooses the fully indexed constant-delay point of the design space, while a
compressed TrieGS variant could trade memory for larger local access delay.
Their setting is static and primarily concerns compressed access to query
outputs; TrieGS focuses on maintaining multiset query state under streaming
updates and publishing versioned snapshots for concurrent enumeration.

\subsection{Sliding-Window Aggregation}

SWAG-style algorithms maintain aggregate summaries over sliding windows for
associative operators, including non-invertible aggregates, without scanning the
whole window on every change~\cite{tangwongsan2020daba}. Later work extends the
model to out-of-order and bulk window changes~\cite{tangwongsan2023bulk}. CQELS
already contains SWAG aggregate operators, and TrieGS treats them as
complementary. SWAG is appropriate for aggregate-only outputs or as a downstream
consumer of exact TrieGS answer deltas. It does not provide the full weighted
answer-support representation or query-level root-vector snapshot required by
Theorem~\ref{thm:core}.

\subsection{Continuous Subgraph Matching}

GraphFlow, TurboFlux, SymBi, CaLiG, TC-Match, and RapidFlow use multiway search,
candidate structures, or dynamic indexes to discover matches in evolving
graphs~\cite{DBLP:conf/sigmod/KankanamgeSMCS17,
DBLP:conf/sigmod/KimSHLHCSJ18,DBLP:journals/pvldb/MinPPGIH21,
DBLP:journals/pacmmod/YangZZY23,DBLP:conf/icde/MinJPGIH24,
DBLP:journals/pvldb/SunSHL22}. A common-framework study reports that index
maintenance dominates some workloads, constant-delay approaches can pay high
index-update cost, and no method wins uniformly~\cite{DBLP:journals/pvldb/SunSLH22}.
\TrieGS differs by maintaining a CDE representation with a precise
snapshot-consistency contract. The common framework remains a useful secondary
baseline only when matching, duplicate, and update semantics agree.

\subsection{Parallel Joins and Stream-Processing Hardware}

Morsel-driven parallelism schedules small input fragments to worker threads that
execute query pipelines until pipeline breakers while favoring NUMA-local input
and operator state~\cite{DBLP:conf/sigmod/LeisBK014}. \TrieGS borrows this
scheduling principle but applies it to a different unit of work: affected Gmas,
guard-prefix scans, projection updates, and view-maintenance tasks inside one
dynamic update epoch. The additional constraint is semantic rather than only
load balancing: a scheduler must preserve the view invariant and may publish a
root vector only after every dependency in the epoch has completed. Thus
morselized maintenance scheduling is complementary to, not a replacement for,
the root-vector snapshot protocol.

Worst-case-optimal and free-join techniques avoid harmful intermediate results
in multiway joins~\cite{DBLP:journals/pacmmod/WangWS23,DBLP:journals/tods/MhedhbiKS21}.
HoneyComb shows that parallel WCOJ must address skew, index contention, and
redundant work rather than merely partitioning the top variable
~\cite{wu2025honeycomb}. These systems optimize one-shot or delta join
computation. \TrieGS instead maintains long-lived views and uses parallelism for
propagation, publication, and concurrent snapshot traversal. WCOJ remains a
natural cyclic-bag implementation and recomputation baseline.

Hypertree-decomposition theory offers a parallel-evaluation story of its own:
bounded-width evaluation is LOGCFL-complete and hence highly parallelizable in
principle; DB-SHUNT contracts any join tree in a logarithmic number of parallel
shunt steps regardless of tree shape; and GYM runs generalized Yannakakis on BSP,
costing plans by communication volume and synchronization
rounds~\cite{DBLP:conf/pods/GottlobGLS16}. Our measurements position \TrieGS on
this map in three ways. First, the guard-normal plan is itself a
hypertree-shaped object (guards as covering hyperedges, interface projections as
$\chi$-labels, separators as join-tree connectors), so the structural width law
of Section~\ref{sec:eval-anatomy} acquires a theoretical name: a per-update
delta traverses one width-$k$ path of the decomposition, and the observed
$\approx$1.64$\times$ intra-query ceiling is that bounded fan-out made visible
--- parallelizing the sibling semijoins of a width-$\le$3 path cannot pay at
tens-of-microseconds service times. Second, the PRAM-style results count
steps where our spin-normalized instrument counts bytes: shunt contraction
trades logarithmic depth for $O(r^2)$ intermediates, which is the wrong
direction in the bandwidth-bound regime every unbounded operator measurement
exhibits (Section~\ref{sec:eval-operators}); the profitable moves we measured
--- windowing and stream partitioning
(Section~\ref{sec:eval-windowed-reasoning}) --- instead shrink $r$, the base of
the decomposition's $r^k$ intermediate bound, per worker. Tree contraction
remains a candidate for the one place a full-tree evaluation survives in a
streaming engine, batch recomputation at window close, gated by whether its
intermediates stay memory-domain-resident. Third, the replication law of our
partitioned-stream design is GYM's communication-versus-parallelism trade-off
specialized to streams, and choosing the partition variable that minimizes the
replicated fraction $f$ is a fractional-cover problem over the same
decomposition --- the natural path from the hand-derived broadcast set of
Section~\ref{sec:eval-windowed-reasoning} to a decomposition-driven partitioning
planner.

Zeuch et al. analyze how data representation, queue-mediated operator
communication, synchronization, code generation, and windowing affect
single-node stream-processing efficiency~\cite{analyzing-vldb-spe-multicores}.
Their results motivate our hardware-counter and queueing analysis, but they do
not establish that LFNT has better cache or NUMA behavior. TrieGS therefore
compares the same logical plan across object-key and integer-key LFNT, MVCC,
persistent-map, and lock-based backends before attributing any improvement to
its physical representation.

\subsection{Concurrent Tries, Persistent Trees, and Snapshots}

C-tries provide concurrent dictionary operations and efficient non-blocking
snapshots; cache-tries add an auxiliary cache and analyze constant-time exact-key
operations under their hashing assumptions~\cite{DBLP:conf/ppopp/ProkopecBBO12,
DBLP:conf/ppopp/Prokopec18}. The cache-trie compaction work supplies
relation-level removal and compaction mechanisms~\cite{DBLP:conf/europar/Prokopec18}.
\LFNT instantiates these ideas for dictionary-coded Gma keys and multiplicity
payloads. The imported results are intentionally relation-local: they do not
prove secondary/tertiary query-access delay, dynamic CDE maintenance, or
cross-relation snapshot consistency. Root-vector publication and the view
invariant provide that lifting argument.

Purely functional structures provide another point in the design space. Aspen
uses compressed functional trees so readers can acquire lightweight immutable
graph snapshots while writers publish later roots, but must manage the space
and locality costs of persistence~\cite{dhulipala2019aspen}. Our persistent-map
baseline and reclamation experiments test the analogous trade-off for
query-specific materialized state rather than the base graph alone.

\subsection{Streaming RDF and Graph Workloads}

LDBC SNB Interactive v2 combines graph queries with concurrent updates and deep
deletions, making it suitable for testing turnstile maintenance and realistic
social-graph schemas~\cite{puroja2023ldbc}. RiverBench provides curated RDF
streams and metadata for reproducible streaming experiments
~\cite{sowinski2023riverbench}. We use these sources alongside controlled
synthetic streams because no one workload independently controls affected-state
size, answer support, duplicate frequency, skew, dictionary locality, and reader
lifetime.

\subsection{Multi-Query Optimization}

Materialized-view selection and shared computation have a long history
~\cite{sellis1988multiple,roy2000efficient,mistry2001materialized}. \TrieGS
first shares exactly equivalent canonical subplans; frequent pattern mining is
an optional optimizer on top. The evaluation measures marginal cost per query,
because a shared view is valuable only when its saved logical maintenance
exceeds its additional indexes and coordination. Reductions in cache misses,
branch mispredictions, coherence traffic, or NUMA accesses are measured rather
than inferred from sharing.

\section{Limitations and Future Work}
\label{sec:limitations}

\paragraph*{Query scope.}
The no-output-materialization theorem covers fixed full acyclic positive
conjunctive queries. It does not establish the same guarantee for arbitrary
projection, cycles, negation, aggregation, recursion, injective subgraph
matching, or full SPARQL. Extension modules must state their own semantics and
costs.

\paragraph*{Projected weighted answers.}
For projected free-connex queries, the present construction explicitly
materializes the count-aware answer relation $\mathsf{Ans}_Q$. Corollary~\ref{cor:projected-materialized}
therefore gives snapshot correctness and constant scan delay, but its update and
space costs include the materialized output support. We do not claim a compact
factorized projected iterator with duplicate-free weighted output. A future
construction would need separate proofs of one-path-per-head-tuple
uniqueness and count factorization.

\paragraph*{Update complexity.}
Constant delay does not imply constant update time. One edge update may touch a
large affected state $A_t$, and lower-level lock freedom cannot eliminate that
semantic work. The update theorem is affected-state sensitive.

\paragraph*{Access-structure space.}
The core theorem chooses the fully indexed constant-delay point. It does not
prove that maintained view state $M_t$, index state $I_t$, or retained version
state $R_t$ are linear in the input for every workload. Compressed access
representations that trade memory for larger local delay are possible future
variants, but they are outside the current theorem.

\paragraph*{Dictionary encoding and layout.}
Injective dictionary IDs preserve RDF equality and therefore query semantics,
but they do not guarantee hardware locality. Arrival-order identifiers may be
poorly correlated with join access patterns, and locality-aware remapping must
respect pinned snapshots and avoid unsafe ID reuse. Any claim that integer-coded
or packed LFNT improves cache, TLB, or NUMA behavior must be supported by the
layout and backend ablations in Sections~\ref{sec:evaluation} and~\ref{sec:eval-gaps}.

\paragraph*{Cost model.}
The $\OQ(1)$ statements use the expected cache-trie hashing model and amortize
structural resize work as specified in Section~\ref{sec:preliminaries}. They are
not deterministic worst-case bounds. Counts are also assumed to fit in a
machine word unless arbitrary-precision cost is reported separately.

\paragraph*{Hardware locality.}
The cache-trie cost model does not bound hardware-cache or TLB misses. A
pointer-based versioned LFNT can enlarge the working set, increase dependent
loads, retain old pages, and suffer remote NUMA access. Conversely, immutable
published versions can reduce reader--writer cache-line invalidation and
coherence traffic. Neither effect follows from the logical theorem. Hardware
locality claims therefore require the JFR/perf and NUMA experiments still listed
as final-protocol gaps in Section~\ref{sec:eval-gaps}.

\paragraph*{SWAG aggregate semantics.}
CQELS SWAG state is an aggregate summary, not a full answer-support
representation. COUNT, SUM, and AVG can consume signed answer deltas directly,
but MIN and MAX under arbitrary answer retractions need additional support
structures unless the declared semantics is a FIFO answer-event window. SWAG
integration is therefore an aggregate extension and does not change
Theorem~\ref{thm:core}.

\paragraph*{Morselized scheduling.}
Morselized epoch scheduling refines the serializable maintenance semantics only
when the scheduler satisfies epoch isolation, dependency closure, exact-final
value, publication-barrier, and snapshot-immutability invariants. It does not
change the query class or the correctness theorem. If prefix coalescing,
NUMA-prefix ownership, or adaptive worker allocation fail to improve the COST or
latency--memory--freshness frontier, the reusable contribution remains
root-vector publication rather than the scheduler. A direct engine-path
measurement (Section~\ref{sec:eval-m2}) first found no multicore speedup because
per-epoch cost was dominated by sequential $O(\text{state})$ snapshot bookkeeping
and by per-tuple allocation; rebuilding the maintenance to be $O(\Delta)$ and
columnar end to end---dense-integer columnar views and indices, a sharded
delta accumulator, single-scan change derivation, a mutable occurrence-count
ledger in place of the persistent base-state copy, and overlapping the answer
decode with the view merges---removes that wall and yields a genuine intra-epoch
speedup of $\approx 1.9\times$ at eight workers together with a several-fold
single-thread gain. The residual ceiling is now the structural parallel width
of the maintained state --- the small fixed number of independently updatable
views and relations --- established by compute-density invariance and a
net-negative enumeration-parallelization attempt
(Section~\ref{sec:eval-anatomy}); the two parallel phases (enumeration and
view merge) each scale about $2\times$ on eight workers once the sequential
phases are small, and the pre-registered $\ge 8\times$ target is not reached
inside one epoch. We verified this bound is
structural rather than layout or serial overhead: four attempts to parallelize the
residual work (partitioning the largest view, range-partitioning the index seal,
batching the base-state application, and block-striping the enumeration) were each
measured net-negative at the per-epoch delta sizes here, and improving cache
locality by contiguous-block striping regresses throughput by sacrificing load
balance. The intra-epoch ceiling is shape-dependent, not universal:
compute-dense plans reach $7.5$--$7.8\times$ at the 16 performance cores
(Section~\ref{sec:eval-anatomy}), the pre-registered $\ge 8\times$ target
is met at inter-query granularity (Section~\ref{sec:eval-multiquery}), and a
fair hybrid scheduler reclaims a further $+9.6\%$ of gating-induced idleness
at equal core budget, bounded by the same width law
(Section~\ref{sec:eval-hybrid}). The
columnar snapshots that enable the allocation-lean maintenance also give up
cross-epoch structural sharing: a slow reader pinning an epoch now retains
full columnar view copies rather than shared trie structure, so the
version-retention cost discussed below is higher per pinned epoch than in the
pointer-sharing design.

\paragraph*{Baseline scope.}
The Flink comparison is one node, heap state backend, object reuse on, no
checkpointing, no network exchange: it measures ``engine-embedded lanes
vs.\ framework keyed streams on one host,'' and says nothing about
distributed, checkpointed, or RocksDB-backed Flink deployments, whose
per-slide costs and failure guarantees differ in kind. The incremental
operator (variant A) is hand-written on Flink's runtime because Flink ships
no incremental multi-way join; it is bracketed from below by the idiomatic
variant B (native sliding windows, full recompute), but a Flink SQL or
Table-API formulation was not measured. Inversely, TrieGS lanes pay no
serialization or mailbox costs; the absolute gaps include that structural
advantage by design.

\paragraph*{Hardware and workload-model scope.}
All scaling results in Sections~\ref{sec:eval-anatomy},
\ref{sec:eval-multiquery}, and~\ref{sec:eval-hybrid} come from one 24-core
host with heterogeneous
cores (16P+8E); efficiency-core throughput ($\approx 0.55\times$ P) both
lowers the achievable ideal to $\approx 20.4\times$ and confounds
worker-count effects with core-type effects, and the Intel/NUMA replication
of these specific sweeps is pending. The workload model is fixed slide
schedules (fixed work per configuration), which measures capacity, not
latency under open-loop arrival; a system can look identical on fixed-work
strong scaling and differ under bursty arrival with queueing --- the
deep-pooling configuration that halves the expensive lane's per-slide wall
at a 12\% throughput cost (Section~\ref{sec:eval-gaps}) is exactly such a
case. The
slowest-class gating law of Section~\ref{sec:eval-multiquery} also binds
TrieGS: on mixes with a higher class-cost spread than the measured one,
TrieGS's own ceiling drops accordingly unless intra-query morsels can split
the expensive lanes, which currently pays off only for compute-dense plans;
hybrid pooling reclaims the gating law's idle threads ($+9.6\%$ at equal
core budget) but is itself bounded by the structural width law
(Section~\ref{sec:eval-hybrid}).

\paragraph*{Freshness.}
An enumerator sees the vector published when it starts, not necessarily the
newest wall-clock state when it finishes. Applications that require latest-state
linearizability may need shorter enumerations, cancellation, or a different
consistency contract.

\paragraph*{Version-retention cost.}
A slow reader pins old roots and can delay reclamation. The asymptotic delay
guarantee does not bound retained memory. This risk is measured directly in the
evaluation.

\paragraph*{Architectural necessity.}
The logical theorem needs versioned indexes and atomic query-state publication,
but not necessarily LFNT. The first-round local matrix in Section~\ref{sec:evaluation}
does not support an LFNT performance claim: simpler sequential and MVCC-object
backends dominate the configured prototype series. LFNT remains an empirical
design hypothesis until integer-coded, packed, and NUMA-aware variants outperform
simpler alternatives on the relevant latency--memory--freshness frontier.

\paragraph*{Durability.}
The current correctness proof is in-memory. Crash consistency, exactly-once
replay, persistent expiration queues, and atomic checkpointing of stream offsets
with root vectors require an additional protocol and validation.

\paragraph*{Workload validity.}
The value of concurrent full enumeration depends on applications that need
stable result traversal while updates continue. Delta-only consumers or
aggregate-only SWAG consumers may be more economical for many workloads. The
paper must demonstrate, not assume, the target regime.

\paragraph*{Evaluation maturity.}
Section~\ref{sec:evaluation} now reports first-round local measurements: semantic
oracle checks, publication snapshots, bounded Flink parity, relation-backend
ablation, Apple M2 scaling, an engine-path morsel-scaling study (an
$O(\Delta)$, columnar maintenance rebuild that turns an initial no-speedup result
into an $\approx 1.9\times$ intra-epoch multicore speedup with a several-fold
single-thread gain, bounded thereafter by structural parallel width), a
plan-shape
anatomy showing intra-query speedup tracks compute per resident byte
($1.3\times$ to $7.8\times$ across shapes), a 24-lane multi-query
strong-scaling study against a work-matched, parity-gated Flink baseline
($13.1$--$17.7\times$ at 24 threads; $5.8$--$32\times$ absolute advantage),
and a CPU-budget-gated fair hybrid-scheduling study ($+9.6\%$ over an LPT
pure-inter baseline at equal core budget).
Warmup accounting matters on a JIT runtime: the committed sweep protocol runs
explicit warmup periods excluded from timing, after an early protocol that
re-paid JIT compilation inside measured execution inflated sweeps severalfold.

\section{Conclusion}
\label{sec:conclusion}

\TrieGS studies the boundary between dynamic constant-delay enumeration and
shared-memory concurrent execution. The revised proof separates the classical
CDE part from the systems part. A free-connex witness-subtree argument gives the
abstract constant-delay enumerator; a multiset view-maintenance argument restores
counting payloads after each update epoch; an access-structure layer makes the
space--delay choice explicit; and dictionary-coded LFNT/root-vector refinement
turns that abstract state into a physical snapshot safe for concurrent readers.

For fixed full acyclic conjunctive queries under multiset semantics, the proof
chain is now
\[
  G_t\rightarrow L_t\rightarrow C_t\rightarrow S_t\rightarrow Q(G_t),
\]
where $L_t$ is the maintained logical view state, $C_t$ is the fully indexed
access representation over stable dictionary identifiers, and $S_t$ is the
published versioned root-vector snapshot. Projected free-connex queries are
supported by an explicit count-aware answer view and therefore include its
materialization costs. Morselized scheduling, SWAG aggregate summaries, packed
integer layouts, and locality-aware dictionary policies are refinements or
extensions, not new query-class theorems.

The practical claim remains conditional. LFNT is valuable only if it improves
the latency--memory--freshness frontier over simpler object-key and integer-key
versioned indexes, and concurrent full enumeration is important only where
delta consumption or aggregate-only SWAG summaries are not adequate substitutes.
The CQELS-based evaluation protocol therefore treats correctness, CROWN and
backend comparisons, dictionary-layout attribution, workload need, multicore
scaling, reclamation cost, SWAG aggregation, and multi-query economics as
separate falsifiable questions. 

\section*{Acknowledgment}
This project was supported by the Deutsche Forschungsgemeinschaft, German Research Foundation under grant number
453130567 (COSMO).
\balance
\bibliographystyle{abbrv}
\bibliography{main,evaluation_refs}

@article{puroja2023ldbc,
  title        = {The {LDBC} Social Network Benchmark Interactive Workload v2: A Transactional Graph Query Benchmark with Deep Delete Operations},
  author       = {P{\"u}roja, David and Waudby, Jack and Boncz, Peter and Sz{\'a}rnyas, G{\'a}bor},
  journal      = {CoRR},
  volume       = {abs/2307.04820},
  year         = {2023},
  url          = {https://arxiv.org/abs/2307.04820},
  doi          = {10.48550/arXiv.2307.04820}
}

@article{sowinski2023riverbench,
  title        = {{RiverBench}: An Open {RDF} Streaming Benchmark Suite},
  author       = {Sowi{\'n}ski, Piotr and Ganzha, Maria and Paprzycki, Marcin},
  journal      = {CoRR},
  volume       = {abs/2305.06226},
  year         = {2023},
  url          = {https://arxiv.org/abs/2305.06226},
  doi          = {10.48550/arXiv.2305.06226}
}

@article{dhulipala2019aspen,
  title        = {Low-Latency Graph Streaming Using Compressed Purely-Functional Trees},
  author       = {Dhulipala, Laxman and Shun, Julian and Blelloch, Guy E.},
  journal      = {CoRR},
  volume       = {abs/1904.08380},
  year         = {2019},
  url          = {https://arxiv.org/abs/1904.08380},
  doi          = {10.48550/arXiv.1904.08380}
}

@article{wu2025honeycomb,
  title        = {{HoneyComb}: A Parallel Worst-Case Optimal Join on Multicores},
  author       = {Wu, Jiacheng and Suciu, Dan},
  journal      = {CoRR},
  volume       = {abs/2502.06715},
  year         = {2025},
  url          = {https://arxiv.org/abs/2502.06715},
  doi          = {10.48550/arXiv.2502.06715}
}

@article{analyzing-vldb-spe-multicores,
  title        = {Analyzing Efficient Stream Processing on Modern Hardware},
  author       = {Zeuch, Steffen and Del Monte, Bonaventura and Karimov, Jeyhun and Lutz, Clemens and Renz, Manuel and Traub, Jonas and Bre{\ss}, Sebastian and Rabl, Tilmann and Markl, Volker},
  journal      = {Proceedings of the VLDB Endowment},
  volume       = {12},
  number       = {5},
  pages        = {516--530},
  year         = {2019},
  doi          = {10.14778/3303753.3303758},
  url          = {https://www.vldb.org/pvldb/vol12/p516-zeuch.pdf}
}

@inproceedings{deep2018compressed,
  author       = {Deep, Shaleen and Koutris, Paraschos},
  title        = {Compressed Representations of Conjunctive Query Results},
  booktitle    = {Proceedings of the 37th ACM SIGMOD-SIGACT-SIGAI Symposium on Principles of Database Systems},
  series       = {PODS '18},
  pages        = {307--322},
  year         = {2018},
  publisher    = {Association for Computing Machinery},
  doi          = {10.1145/3196959.3196979}
}

@article{tangwongsan2020daba,
  author       = {Tangwongsan, Kanat and Hirzel, Martin and Schneider, Scott},
  title        = {In-Order Sliding-Window Aggregation in Worst-Case Constant Time},
  journal      = {CoRR},
  volume       = {abs/2009.13768},
  year         = {2020},
  url          = {https://arxiv.org/abs/2009.13768},
  doi          = {10.48550/arXiv.2009.13768}
}

@article{tangwongsan2023bulk,
  author       = {Tangwongsan, Kanat and Hirzel, Martin and Schneider, Scott},
  title        = {Out-of-Order Sliding-Window Aggregation with Efficient Bulk Evictions and Insertions},
  journal      = {CoRR},
  volume       = {abs/2307.11210},
  year         = {2023},
  url          = {https://arxiv.org/abs/2307.11210},
  doi          = {10.48550/arXiv.2307.11210}
}

@inproceedings{DBLP:conf/semweb/CarralDGJKU19,
  author       = {David Carral and
                  Irina Dragoste and
                  Larry Gonz{\'{a}}lez and
                  Ceriel J. H. Jacobs and
                  Markus Kr{\"{o}}tzsch and
                  Jacopo Urbani},
  editor       = {Chiara Ghidini and
                  Olaf Hartig and
                  Maria Maleshkova and
                  Vojtech Sv{\'{a}}tek and
                  Isabel F. Cruz and
                  Aidan Hogan and
                  Jie Song and
                  Maxime Lefran{\c{c}}ois and
                  Fabien Gandon},
  title        = {VLog: {A} Rule Engine for Knowledge Graphs},
  booktitle    = {The Semantic Web - {ISWC} 2019 - 18th International Semantic Web Conference,
                  Auckland, New Zealand, October 26-30, 2019, Proceedings, Part {II}},
  series       = {Lecture Notes in Computer Science},
  volume       = {11779},
  pages        = {19--35},
  publisher    = {Springer},
  year         = {2019},
  url          = {https://doi.org/10.1007/978-3-030-30796-7\_2},
  doi          = {10.1007/978-3-030-30796-7\_2},
  bibsource    = {dblp computer science bibliography, https://dblp.org}
}

@inproceedings{Amdahl:1967:VSP:1465482.1465560,
  acmid = {1465560},
  address = {New York, NY, USA},
  author = {Amdahl, Gene M.},
  booktitle = {Proceedings of the April 18-20, 1967, spring joint computer conference},
  doi = {10.1145/1465482.1465560},
  location = {Atlantic City, New Jersey},
  numpages = {3},
  pages = {483--485},
  publisher = {ACM},
  series = {AFIPS '67 (Spring)},
  title = {Validity of the single processor approach to achieving large scale computing capabilities},
  year = 1967
}

@inproceedings{DBLP:conf/icdm/YanH02,
	author = {Xifeng Yan and Jiawei Han},
	bibsource = {dblp computer science bibliography, https://dblp.org},
	booktitle = {Proceedings of the 2002 {IEEE} International Conference on Data Mining {(ICDM} 2002), 9-12 December 2002, Maebashi City, Japan},
	pages = {721--724},
	publisher = {{IEEE} Computer Society},
	title = {gSpan: Graph-Based Substructure Pattern Mining},
	year = {2002}}

@inproceedings{yannakakis1981algorithms,
  title={Algorithms for acyclic database schemes},
  author={Yannakakis, Mihalis},
  booktitle={VLDB},
  volume={81},
  pages={82--94},
  year={1981}
}

@inproceedings{Pacaci:2020,
author = {Pacaci, Anil and Bonifati, Angela and \"{O}zsu, M. Tamer},
title = {Regular Path Query Evaluation on Streaming Graphs},
year = {2020},
isbn = {9781450367356},
publisher = {Association for Computing Machinery},
address = {New York, NY, USA},
url = {https://doi.org/10.1145/3318464.3389733},
doi = {10.1145/3318464.3389733},
booktitle = {Proceedings of the 2020 ACM SIGMOD International Conference on Management of Data},
pages = {1415–1430},
numpages = {16},
location = {Portland, OR, USA},
series = {SIGMOD '20}
}

@inproceedings{DBLP:conf/icde/PacaciBO22,
  author       = {Anil Pacaci and
                  Angela Bonifati and
                  M. Tamer {\"{O}}zsu},
  title        = {Evaluating Complex Queries on Streaming Graphs},
  booktitle    = {38th {IEEE} International Conference on Data Engineering, {ICDE} 2022,
                  Kuala Lumpur, Malaysia, May 9-12, 2022},
  pages        = {272--285},
  publisher    = {{IEEE}},
  year         = {2022},
  url          = {https://doi.org/10.1109/ICDE53745.2022.00025},
  doi          = {10.1109/ICDE53745.2022.00025},
  bibsource    = {dblp computer science bibliography, https://dblp.org}
}

@article{DBLP:journals/pvldb/AhmadKKN12,
  author    = {Yanif Ahmad and
               Oliver Kennedy and
               Christoph Koch and
               Milos Nikolic},
  title     = {DBToaster: Higher-order Delta Processing for Dynamic, Frequently Fresh
               Views},
  journal   = {Proc. {VLDB} Endow.},
  volume    = {5},
  number    = {10},
  pages     = {968--979},
  year      = {2012},
  url       = {http://vldb.org/pvldb/vol5/p968\_yanifahmad\_vldb2012.pdf},
  doi       = {10.14778/2336664.2336670},
  bibsource = {dblp computer science bibliography, https://dblp.org}
}

@inproceedings{berkholz2017answering,
  title={Answering conjunctive queries under updates},
  author={Berkholz, Christoph and Keppeler, Jens and Schweikardt, Nicole},
  booktitle={proceedings of the 36th ACM SIGMOD-SIGACT-SIGAI symposium on Principles of database systems},
  pages={303--318},
  year={2017}
}

@inproceedings{DBLP:conf/pods/GottlobGLS16,
  author    = {Georg Gottlob and
               Gianluigi Greco and
               Nicola Leone and
               Francesco Scarcello},
  editor    = {Tova Milo and
               Wang{-}Chiew Tan},
  title     = {Hypertree Decompositions: Questions and Answers},
  booktitle = {Proceedings of the 35th {ACM} {SIGMOD-SIGACT-SIGAI} Symposium on Principles
               of Database Systems, {PODS} 2016, San Francisco, CA, USA, June 26
               - July 01, 2016},
  pages     = {57--74},
  publisher = {{ACM}},
  year      = {2016},
  url       = {https://doi.org/10.1145/2902251.2902309},
  doi       = {10.1145/2902251.2902309},
  bibsource = {dblp computer science bibliography, https://dblp.org}
}

@article{DBLP:journals/tods/MhedhbiKS21,
  author    = {Amine Mhedhbi and
               Chathura Kankanamge and
               Semih Salihoglu},
  title     = {Optimizing One-time and Continuous Subgraph Queries using Worst-case
               Optimal Joins},
  journal   = {{ACM} Trans. Database Syst.},
  volume    = {46},
  number    = {2},
  pages     = {6:1--6:45},
  year      = {2021},
  url       = {https://doi.org/10.1145/3446980},
  doi       = {10.1145/3446980},
  bibsource = {dblp computer science bibliography, https://dblp.org}
}

@article{DBLP:journals/pvldb/SunSLH22,
  author       = {Xibo Sun and
                  Shixuan Sun and
                  Qiong Luo and
                  Bingsheng He},
  title        = {An In-Depth Study of Continuous Subgraph Matching},
  journal      = {Proc. {VLDB} Endow.},
  volume       = {15},
  number       = {7},
  pages        = {1403--1416},
  year         = {2022},
  url          = {https://www.vldb.org/pvldb/vol15/p1403-sun.pdf},
  bibsource    = {dblp computer science bibliography, https://dblp.org}
}

@article{DBLP:journals/tods/KaraNNOZ20,
  author    = {Ahmet Kara and
               Hung Q. Ngo and
               Milos Nikolic and
               Dan Olteanu and
               Haozhe Zhang},
  title     = {Maintaining Triangle Queries under Updates},
  journal   = {{ACM} Trans. Database Syst.},
  volume    = {45},
  number    = {3},
  pages     = {11:1--11:46},
  year      = {2020},
  url       = {https://doi.org/10.1145/3396375},
  doi       = {10.1145/3396375},
  bibsource = {dblp computer science bibliography, https://dblp.org}
}

@article{IdrisUVVL20,
  author    = {Muhammad Idris and
               Mart{\'{\i}}n Ugarte and
               Stijn Vansummeren and
               Hannes Voigt and
               Wolfgang Lehner},
  title     = {General dynamic Yannakakis: conjunctive queries with theta joins under
               updates},
  journal   = {{VLDB} J.},
  volume    = {29},
  number    = {2-3},
  pages     = {619--653},
  year      = {2020},
  url       = {https://doi.org/10.1007/s00778-019-00590-9},
  doi       = {10.1007/s00778-019-00590-9},
  bibsource = {dblp computer science bibliography, https://dblp.org}
}

@inproceedings{DBLP:conf/sigmod/IdrisUV17,
  author    = {Muhammad Idris and
               Mart{\'{\i}}n Ugarte and
               Stijn Vansummeren},
  editor    = {Semih Salihoglu and
               Wenchao Zhou and
               Rada Chirkova and
               Jun Yang and
               Dan Suciu},
  title     = {The Dynamic Yannakakis Algorithm: Compact and Efficient Query Processing
               Under Updates},
  booktitle = {Proceedings of the 2017 {ACM} International Conference on Management
               of Data, {SIGMOD} Conference 2017, Chicago, IL, USA, May 14-19, 2017},
  pages     = {1259--1274},
  publisher = {{ACM}},
  year      = {2017},
  url       = {https://doi.org/10.1145/3035918.3064027},
  doi       = {10.1145/3035918.3064027},
  bibsource = {dblp computer science bibliography, https://dblp.org}
}

@article{DBLP:journals/pvldb/SunSHL22,
  author    = {Shixuan Sun and
               Xibo Sun and
               Bingsheng He and
               Qiong Luo},
  title     = {RapidFlow: An Efficient Approach to Continuous Subgraph Matching},
  journal   = {Proc. {VLDB} Endow.},
  volume    = {15},
  number    = {11},
  pages     = {2415--2427},
  year      = {2022},
  url       = {https://www.vldb.org/pvldb/vol15/p2415-sun.pdf},
  bibsource = {dblp computer science bibliography, https://dblp.org}
}

@inproceedings{nikolic2020f,
  title={F-IVM: learning over fast-evolving relational data},
  author={Nikolic, Milos and Zhang, Haozhe and Kara, Ahmet and Olteanu, Dan},
  booktitle={Proceedings of the 2020 ACM SIGMOD International Conference on Management of Data},
  pages={2773--2776},
  year={2020}
}

@inproceedings{nikolic2018incremental,
  title={Incremental view maintenance with triple lock factorization benefits},
  author={Nikolic, Milos and Olteanu, Dan},
  booktitle={Proceedings of the 2018 International Conference on Management of Data},
  pages={365--380},
  year={2018}
}

@article{WangHDY23,
  author       = {Qichen Wang and
                  Xiao Hu and
                  Binyang Dai and
                  Ke Yi},
  title        = {Change Propagation Without Joins},
  journal      = {Proc. {VLDB} Endow.},
  volume       = {16},
  number       = {5},
  pages        = {1046--1058},
  year         = {2023},
  url          = {https://www.vldb.org/pvldb/vol16/p1046-hu.pdf},
  bibsource    = {dblp computer science bibliography, https://dblp.org}
}

@inproceedings{DBLP:conf/ppopp/Prokopec18,
  author       = {Aleksandar Prokopec},
  editor       = {Andreas Krall and
                  Thomas R. Gross},
  title        = {Cache-tries: concurrent lock-free hash tries with constant-time operations},
  booktitle    = {Proceedings of the 23rd {ACM} {SIGPLAN} Symposium on Principles and
                  Practice of Parallel Programming, PPoPP 2018, Vienna, Austria, February
                  24-28, 2018},
  pages        = {137--151},
  publisher    = {{ACM}},
  year         = {2018},
  url          = {https://doi.org/10.1145/3178487.3178498},
  doi          = {10.1145/3178487.3178498},
  bibsource    = {dblp computer science bibliography, https://dblp.org}
}

@inproceedings{DBLP:conf/europar/Prokopec18,
  author       = {Aleksandar Prokopec},
  editor       = {Marco Aldinucci and
                  Luca Padovani and
                  Massimo Torquati},
  title        = {Efficient Lock-Free Removing and Compaction for the Cache-Trie Data
                  Structure},
  booktitle    = {Euro-Par 2018: Parallel Processing - 24th International Conference
                  on Parallel and Distributed Computing, Turin, Italy, August 27-31,
                  2018, Proceedings},
  series       = {Lecture Notes in Computer Science},
  volume       = {11014},
  pages        = {575--589},
  publisher    = {Springer},
  year         = {2018},
  url          = {https://doi.org/10.1007/978-3-319-96983-1\_41},
  doi          = {10.1007/978-3-319-96983-1\_41},
  bibsource    = {dblp computer science bibliography, https://dblp.org}
}

@inproceedings{DBLP:conf/ppopp/ProkopecBBO12,
  author       = {Aleksandar Prokopec and
                  Nathan Grasso Bronson and
                  Phil Bagwell and
                  Martin Odersky},
  editor       = {J. Ramanujam and
                  P. Sadayappan},
  title        = {Concurrent tries with efficient non-blocking snapshots},
  booktitle    = {Proceedings of the 17th {ACM} {SIGPLAN} Symposium on Principles and
                  Practice of Parallel Programming, {PPOPP} 2012, New Orleans, LA, USA,
                  February 25-29, 2012},
  pages        = {151--160},
  publisher    = {{ACM}},
  year         = {2012},
  url          = {https://doi.org/10.1145/2145816.2145836},
  doi          = {10.1145/2145816.2145836},
  bibsource    = {dblp computer science bibliography, https://dblp.org}
}

@article{sellis1988multiple,
	author = {Sellis, Timos K},
	journal = {ACM Transactions on Database Systems (TODS)},
	number = {1},
	pages = {23--52},
	publisher = {ACM New York, NY, USA},
	title = {Multiple-query optimization},
	volume = {13},
	year = {1988}}

@inproceedings{roy2000efficient,
	author = {Roy, Prasan and Seshadri, Srinivasan and Sudarshan, S and Bhobe, Siddhesh},
	booktitle = {Proceedings of the 2000 ACM SIGMOD international conference on Management of data},
	pages = {249--260},
	title = {Efficient and extensible algorithms for multi query optimization},
	year = {2000}}

@inproceedings{mistry2001materialized,
	author = {Mistry, Hoshi and Roy, Prasan and Sudarshan, S and Ramamritham, Krithi},
	booktitle = {ACM SIGMOD Record},
	organization = {ACM},
	pages = {307--318},
	title = {Materialized view selection and maintenance using multi-query optimization},
	volume = {30},
	year = {2001}}

@article{idris2019efficient,
  title={Efficient query processing for dynamically changing datasets},
  author={Idris, Muhammad and Ugarte, Mart{\'\i}n and Vansummeren, Stijn and Voigt, Hannes and Lehner, Wolfgang},
  journal={ACM SIGMOD Record},
  volume={48},
  number={1},
  pages={33--40},
  year={2019},
  publisher={ACM New York, NY, USA}
}

@inproceedings{bagan2007acyclic,
  title={On acyclic conjunctive queries and constant delay enumeration},
  author={Bagan, Guillaume and Durand, Arnaud and Grandjean, Etienne},
  booktitle={International Workshop on Computer Science Logic},
  pages={208--222},
  year={2007},
  organization={Springer}
}

@article{DBLP:journals/pacmmod/WangWS23,
  author       = {Yisu Remy Wang and
                  Max Willsey and
                  Dan Suciu},
  title        = {Free Join: Unifying Worst-Case Optimal and Traditional Joins},
  journal      = {Proc. {ACM} Manag. Data},
  volume       = {1},
  number       = {2},
  pages        = {150:1--150:23},
  year         = {2023},
  url          = {https://doi.org/10.1145/3589295},
  doi          = {10.1145/3589295},
  bibsource    = {dblp computer science bibliography, https://dblp.org}
}

@inproceedings{DBLP:conf/sigmod/KankanamgeSMCS17,
  author       = {Chathura Kankanamge and
                  Siddhartha Sahu and
                  Amine Mhedhbi and
                  Jeremy Chen and
                  Semih Salihoglu},
  editor       = {Semih Salihoglu and
                  Wenchao Zhou and
                  Rada Chirkova and
                  Jun Yang and
                  Dan Suciu},
  title        = {Graphflow: An Active Graph Database},
  booktitle    = {Proceedings of the 2017 {ACM} International Conference on Management
                  of Data, {SIGMOD} Conference 2017, Chicago, IL, USA, May 14-19, 2017},
  pages        = {1695--1698},
  publisher    = {{ACM}},
  year         = {2017},
  url          = {https://doi.org/10.1145/3035918.3056445},
  doi          = {10.1145/3035918.3056445},
  bibsource    = {dblp computer science bibliography, https://dblp.org}
}

@inproceedings{DBLP:conf/sigmod/KimSHLHCSJ18,
  author       = {Kyoungmin Kim and
                  In Seo and
                  Wook{-}Shin Han and
                  Jeong{-}Hoon Lee and
                  Sungpack Hong and
                  Hassan Chafi and
                  Hyungyu Shin and
                  Geonhwa Jeong},
  editor       = {Gautam Das and
                  Christopher M. Jermaine and
                  Philip A. Bernstein},
  title        = {TurboFlux: {A} Fast Continuous Subgraph Matching System for Streaming
                  Graph Data},
  booktitle    = {Proceedings of the 2018 International Conference on Management of
                  Data, {SIGMOD} Conference 2018, Houston, TX, USA, June 10-15, 2018},
  pages        = {411--426},
  publisher    = {{ACM}},
  year         = {2018},
  url          = {https://doi.org/10.1145/3183713.3196917},
  doi          = {10.1145/3183713.3196917},
  bibsource    = {dblp computer science bibliography, https://dblp.org}
}

@inproceedings{DBLP:conf/sigmod/LeisBK014,
  author       = {Viktor Leis and
                  Peter A. Boncz and
                  Alfons Kemper and
                  Thomas Neumann},
  editor       = {Curtis E. Dyreson and
                  Feifei Li and
                  M. Tamer {\"{O}}zsu},
  title        = {Morsel-driven parallelism: a NUMA-aware query evaluation framework
                  for the many-core age},
  booktitle    = {International Conference on Management of Data, {SIGMOD} 2014, Snowbird,
                  UT, USA, June 22-27, 2014},
  pages        = {743--754},
  publisher    = {{ACM}},
  year         = {2014},
  url          = {https://doi.org/10.1145/2588555.2610507},
  doi          = {10.1145/2588555.2610507},
  bibsource    = {dblp computer science bibliography, https://dblp.org}
}

@article{DBLP:journals/pvldb/MinPPGIH21,
  author       = {Seunghwan Min and
                  Sung Gwan Park and
                  Kunsoo Park and
                  Dora Giammarresi and
                  Giuseppe F. Italiano and
                  Wook{-}Shin Han},
  title        = {Symmetric Continuous Subgraph Matching with Bidirectional Dynamic
                  Programming},
  journal      = {Proc. {VLDB} Endow.},
  volume       = {14},
  number       = {8},
  pages        = {1298--1310},
  year         = {2021}
}

@inproceedings{DBLP:conf/icde/MinJPGIH24,
  author       = {Seunghwan Min and
                  Jihoon Jang and
                  Kunsoo Park and
                  Dora Giammarresi and
                  Giuseppe F. Italiano and
                  Wook{-}Shin Han},
  title        = {Time-Constrained Continuous Subgraph Matching Using Temporal Information
                  for Filtering and Backtracking},
  booktitle    = {{ICDE}},
  pages        = {3257--3269},
  publisher    = {{IEEE}},
  year         = {2024}
}

@article{DBLP:journals/pacmmod/YangZZY23,
  author       = {Rongjian Yang and
                  Zhijie Zhang and
                  Weiguo Zheng and
                  Jeffrey Xu Yu},
  title        = {Fast Continuous Subgraph Matching over Streaming Graphs via Backtracking
                  Reduction},
  journal      = {Proc. {ACM} Manag. Data},
  volume       = {1},
  number       = {1},
  pages        = {15:1--15:26},
  year         = {2023}
}

@article{DBLP:journals/pvldb/ZeuchBRMKLRTM19,
  author       = {Steffen Zeuch and
                  Sebastian Bre{\ss} and
                  Tilmann Rabl and
                  Bonaventura Del Monte and
                  Jeyhun Karimov and
                  Clemens Lutz and
                  Manuel Renz and
                  Jonas Traub and
                  Volker Markl},
  title        = {Analyzing Efficient Stream Processing on Modern Hardware},
  journal      = {Proc. {VLDB} Endow.},
  volume       = {12},
  number       = {5},
  pages        = {516--530},
  year         = {2019},
  url          = {http://www.vldb.org/pvldb/vol12/p516-zeuch.pdf},
  doi          = {10.14778/3303753.3303758},
  bibsource    = {dblp computer science bibliography, https://dblp.org}
}

\end{document}